\newcommand{\ii}{\mathrm{i}} 
\newcommand{\ee}{\mathrm{e}} 
\newcommand{\ie}{\emph{i.e.}, }
\newcommand{\eg}{\emph{e.g.}, }
\newcommand{\comm}[2]{\left[{#1},{#2}\right]} 
\DeclareMathOperator{\Tr}{Tr}
\newcommand{\mat}[1]{\mathbf{#1}}
\newcommand{\id}{\mathbb{1}} 
\newcommand{\pro}{\mathbb{P}}
\DeclareMathOperator{\ch}{ch}
\DeclareMathOperator{\sh}{sh}
\renewcommand{\cosh}{\ch}
\renewcommand{\sinh}{\sh}
\newtheorem{proposition}{Proposition}
\newtheorem{definition}{Definition}
\begin{document}

\title{Minimal energy cost of entanglement extraction}
\date{\today}
\author{Lucas Hackl}
\orcid{0000-0002-4172-0317}
\affiliation{Max Planck Institute of Quantum Optics, Hans-Kopfermann-Str. 1, 85748 Garching, Germany}
\affiliation{Munich Center for Quantum Science and Technology, Schellingstra{\ss}e 4, D-80799 M\"{u}nchen, Germany}
\email{lucas.hackl@mpq.mpg.de}
\author{Robert H. Jonsson}
\orcid{0000-0003-0295-250X}
\affiliation{QMATH, Department of Mathematical  Sciences,  University  of  Copenhagen, Universitetsparken  5,  2100  Copenhagen,  Denmark}
\email{robert.jonsson@math.ku.dk}

\maketitle

\begin{abstract}
We compute the minimal energy cost for extracting  entanglement  from the ground state of a bosonic or fermionic quadratic system. 
Specifically, we find the minimal energy increase in the system resulting from replacing an entangled pair of modes, sharing entanglement entropy $\Delta S$, by a product state,
and we show how to construct modes achieving this minimal energy cost.
Thus, we obtain a protocol independent lower bound on the extraction of pure state entanglement from quadratic systems.
Due to their generality, our results apply to a large range of physical systems, as we discuss with examples.
\end{abstract}

\section{Introduction}
Entanglement is a hallmark of quantum theory:
On a fundamental level, its very existence has deep implications for our understanding of nature, \eg as pertaining to its realism and localism \cite{einstein_can_1935,bell_einstein_1964,bell_einstein_2001,hensen_loophole-free_2015,giustina_significant-loophole-free_2015,shalm_strong_2015}.
From an applied point of view, entanglement is a key resource in quantum technologies and quantum information processing \cite{horodecki_quantum_2009,bennett_communication_1992,bennett_teleporting_1993,bennett_quantum_2014}.
And moreover, studying entanglement structures provided deep insights in various fields ranging from quantum many body systems, over quantum field theory to quantum gravity \cite{srednicki_entropy_1993,vidal_entanglement_2003,osborne_entanglement_2002,eisert_colloquium_2010,amico_entanglement_2008,islam_measuring_2015,ryu_holographic_2006}.

The fact that the ground states of local Hamiltonians are generally highly entangled gives rise to the idea that such systems could serve as a source of entanglement, \eg for applications in quantum information.
In particular, the idea of extracting entanglement from the ground state of a system has been explored in the context of the vacuum state of a quantum field.
Starting with the observation that the entanglement present in the vacuum fluctuations can violate Bell's inequalities \cite{summers_vacuum_1985,summers_bells_1987,summers_bells_1987a,valentini_non-local_1991,reznik_violating_2005},
various works studied the ability of the quantum field vacuum to entangle systems which couple locally to the quantum field. In particular, the Unruh-DeWitt particle detector model has been employed to investigate aspects ranging from the impact of relativistic motion \cite{salton_acceleration-assisted_2015}, spacetime curvature and topology \cite{steeg_entangling_2009,cliche_vacuum_2011,martin-martinez_spacetime_2016}, the nature of field states and interactions \cite{braun_creation_2002,braun_entanglement_2005,brown_thermal_2013,sachs_entanglement_2017,simidzija_general_2018,simidzija_harvesting_2018}, and potential implementations  \cite{olson_entanglement_2011,olson_extraction_2012,sabin_extracting_2012,pozas-kerstjens_entanglement_2016}.

Previous works on the relation between entanglement and energy content of a system focused mainly on the energy cost of creating entanglement and correlations, or looked at the  conversion of one into the other, \eg in the context of quantum thermodynamics \cite{galve_energy_2009,martin-martinez_sustainable_2013,huber_thermodynamic_2015,friis_energetics_2016,das_canonical_2017,chiribella_optimal_2017,vitagliano_trade-off_2018}.
Also it was shown that  it is possible to use the entanglement present in the vacuum of a quantum field to teleport energy from one spacetime region to another. In quantum energy teleportation one observer performs a local measurement of the field, classically communicates the result to another observer who then through local unitary interactions is able to extract energy from the field \cite{hotta_quantum_2008,hotta_controlled_2010,hotta_energy_2010,hotta_quantum_2014}.

However, the systematic study of the energy cost of entanglement extraction from complex quantum systems was only initiated recently by Beny et al. \cite{beny_energy_2018}. When entanglement is extracted from a system, its state is changed and thus its energy content may change as well. In particular, the extraction of entanglement from a system's \emph{unique} ground state is inevitably associated with  an energy increase injected into the system when changing its state. 

This interplay of entanglement extraction and energy cost is not only relevant with respect to potential implementations and quantum thermodynamics but, in fact, also connects to quantum field theory on curved spacetimes. For example, Jacobson \cite{jacobson_entanglement_2016}  recently showed that the semi-classical Einstein equations are equivalent to the property of the vacuum of a field to locally maximize the entanglement entropy. Attempting to increase a sphere's entanglement by varying the field state induces an increase in energy density which curves spacetime and shrinks the sphere's surface, thus reducing the entanglement contributions proportional to the surface area, and exactly cancelling out the attempted increase of entanglement.

In this article, we present a first result in the direction initiated by \cite{beny_energy_2018} which applies to large classes of systems of physical interest: We study the extraction of entanglement from the ground state of bosonic and fermionic modes governed by a quadratic Hamiltonian. This also includes spin systems that can be mapped to quadratic fermionic Hamiltonians via the Jordan-Wigner transform. In particular, we find the minimal energy cost for the extraction of mode pairs in a pure entangled state.

An important tool to our analysis is the  partner mode formula, which was developed a in series of recent papers \cite{hotta_partner_2015,trevison_pure_2018,trevison_spatially_2019}, mostly in the context of bosonic quantum field theory.
We generalize this construction, such that it applies to any bosonic or fermionic pure Gaussian state. This is achieved by phrasing the construction in terms of the complex structure associated with a pure Gaussian state~\cite{ashtekara._quantum_1975,wald_quantum_1994}, thus illuminating the geometric origin of the partner mode construction.

The partner mode construction is based on the well-known property of pure Gaussian states to factorize into pairs of entangled mode pairs \cite{botero_mode-wise_2003} (see also \cite{wolf_not-so-normal_2008}). If a system of many modes is in an overall pure Gaussian state, then for every mode in that system, which is not in a pure state on its own, there exists a unique partner mode, with which it shares all its entanglement. Thus, the mode and its partner are in a pure entangled state, that is in a product state with the rest of the system.

This property is the motivation for extracting entanglement by unitarily swapping the entangled state of such partner modes with an unentangled state of target modes in some laboratory system. Thus, the system acts as a source of an entangled state that is traded with an unentangled state from the target system. This framework was introduced in \cite{trevison_pure_2018} where the relationship between energy cost and entanglement extraction was discussed for a harmonic chain.

In this article, we generalize this framework to arbitrary systems of bosonic or fermionic modes with quadratic Hamiltonians. We find the minimal energy cost for the extraction of partner mode pairs from their ground state as analytical functions of the Hamiltonian's spectrum. Furthermore, we show how to explicitly construct  partner modes that achieve this minimal energy cost.

The structure of this article is as follows: In Section~\ref{sec:boson_review}, we review bosonic Gaussian states and quadratic Hamiltonians. In Section~\ref{sec:boson_extraction_wholesection}, we present our approach of extracting entanglement from partner modes and derive an analytical formula for the minimal energy cost. The following two Sections~\ref{sec:fermion_review} and~\ref{sec:fermion_extraction} mirror the previous two for fermionic systems, \ie we first review fermionic Gaussian states and quadratic Hamiltonians in Section~\ref{sec:fermion_review} and then derive the respective minimal energy cost for fermions in Section~\ref{sec:fermion_extraction}. Finally, Section~\ref{sec:applications} discusses applications of our results for both bosonic and fermionic models, before we conclude with a discussion in Section~\ref{sec:discussion}. Appendix~\ref{app:normalform} rederives the pairwise correlation structure of bosonic and fermionic Gaussian states in the newly developed language of linear complex structures. Appendix~\ref{app:spectrumproof} reviews relevant results in linear algebra on how the spectrum of a linear map changes under restrictions.

The article is structured to be largely self-contained. To the experienced reader Sections~\ref{sec:boson_review} and~\ref{sec:fermion_review} will mostly serve to establish notation for bosons and fermions respectively, and to present a covariant way of computing partner modes.

\section{Quadratic bosonic systems and their partner mode construction}\label{sec:boson_review}
The partner mode construction is a central tool in our derivation of the minimal energy of entanglement extraction. In this section, we introduce the partner mode formula for arbitrary pure Gaussian states of bosonic modes (specifically in Section \ref{sec:partnermode}). To be largely self-contained and to establish notation, we begin with a review of quadratic bosonic systems. We follow the conventions of~\cite{bianchi_squeezed_2016,hackl_entanglement_2018,hackl_aspects_2018}, while other comprehensive reviews include~\cite{weedbrook_gaussian_2012,eisert_colloquium_2010,plenio_entropy_2005,casini_entanglement_2009,woit_quantum_2017}.

\subsection{Phase space and observables}
For $N$ bosonic degrees of freedom, we consider a classical phase space $V\simeq\mathbb{R}^{2N}$ and its dual $V^*\simeq\mathbb{R}^{2N}$. We denote vectors $v^a\in V$ with an upper index and dual vectors $w_a\in V^*$ with a lower index. As phase space, both $V$ and $V^*$ are equipped with symplectic forms, namely $\Omega^{ab}: V^*\times V^*\to\mathbb{R}$ on $V^*$ and its inverse $\Omega_{ab}^{-1}: V\times V\to\mathbb{R}$ satisfying $\Omega^{ac}\Omega_{cb}^{-1}=\delta^a{}_b$.

We can characterize a vector $v^a\in V$ by its coordinate values with respect to a basis as
\begin{align}
    v^a\equiv\left(q_1(v),p_1(v),\cdots,q_N(v),p_N(v)\right)^\intercal\in \mathbb{R}^{2N} \,,\label{eq:def-v2}
\end{align}
where ``$\equiv$'' indicates that $v^a$ is represented by the column vector  with respect to the specified basis.

Typically, we choose coordinate functions   $q_i,p_i\in C^\infty(V)$ which are linear, \ie $q_i,p_i\in V^*$, and which consist of conjugate variables, \ie satisfying  $\Omega(q_i,p_j)=\delta_{ij}$.

With respect to such a basis, the symplectic form $\Omega^{ab}$  takes the standard Darboux form, \ie is represented by the block diagonal matrix
\begin{align}
    \Omega^{ab}&\equiv\mathbf{\Omega}=\left(\begin{array}{ccc}
    \mat{\Omega}_2 &    & 0 \\
    &\ddots &\\
    0 & &\mat{\Omega}_2
    \end{array}\right) \, ,
    \label{eq:staOmega2}
\end{align}
where we use   boldface symbols $\mathbf{\Omega}$ for the matrix representation with respect to a specific basis, and define the $2\times 2$-matrix
\begin{align}
    \mat{\Omega}_2=\begin{pmatrix}0&1\\-1&0\end{pmatrix}\, .
\end{align}

In the classical theory, observables correspond to smooth functions on the phase space $V$, \ie functions in $C^\infty(V)$. 
In particular, a covector $w\in V^*$  defines a linear observable through
\begin{align}
    \mathcal{O}_w:V\to \mathbb{R} : v^a\mapsto w_a v^a\,.
\end{align}
Similarly, a bilinear form $h_{ab}$ on $V$ describes uniquely a quadratic observable
\begin{align}
    \mathcal{O}_h: V\to \mathbb{R}: v^a\mapsto \frac{1}{2}h_{ab}v^av^b\,.
\end{align}

In quantum theory, observables correspond to linear operators on a Hilbert space $\mathcal H$. To define linear and multi-linear observables it is convenient to introduce the operator-valued vector $\hat\xi^a$ referred to as quantization map. It is subject to the constraint
\begin{align}
    [\hat{\xi}^a,\hat{\xi}^b]=\hat{\xi}^a\hat{\xi}^b-\hat{\xi}^b\hat{\xi}^a=\{\xi^a,\xi^b\}=\ii\Omega^{ab}\id
\end{align}
where $\id$ represents the identity operator on $\mathcal H$.
With respect to our initial basis where $\Omega$ takes the form of~\eqref{eq:staOmega2}, $\hat{\xi}^a$ is represented as
\begin{align}
    \hat{\xi}^a\equiv(\hat{q}_1,\cdots,\hat{q}_N,\hat{p}_1,\cdots,\hat{p}_N)^\intercal
\end{align}
with the familiar position and momentum operators. From here, we can construct the well-known creation and annihilation operators $\hat{a}_i=\frac{1}{\sqrt{2}}(\hat{q}_i-\ii\hat{p}_i)$ and $\hat{a}_i^\dagger=\frac{1}{\sqrt{2}}(\hat{q}_i+\ii\hat{p}_i)$.

The quantum observables associated to the linear form $w_a$ and quadratic form $h_{ab}$ are then
\begin{align}
    \hat{\mathcal{O}}_w=w_a\hat{\xi}^a\quad\text{and}\quad\hat{\mathcal{O}}_h=\frac{1}{2}h_{ab}\hat{\xi}^a\hat{\xi}^b\,.
\end{align}
In contrast to classical observables, a quantum observable constructed from a general tensor $t_{a_1\cdots a_n}$ depends on the tensor's non-symmetric part since the operators $\hat q_i$ and $\hat p_i$ do not commute.

Note that in the classical theory, the quantization map reduces to the vector-valued function 
\begin{align}
    \xi^a: V\to V: v^a\mapsto \xi^a(v)=v^a\,,
\end{align}
which is, in fact, just the identity map on $V$.
It can be used to represent linear and multi-linear observables, \eg through
\begin{align}
    \mathcal{O}_t: v\mapsto \frac{1}{n!}t_{a_1\cdots a_n} \xi^{a_1}(v)\cdots \xi^{a_n}(v)
\end{align}
for a general tensor $t_{a_1\cdots a_n}$. In terms of a linear basis of $V$ it is represented as the tupel of functions
\begin{align}
    \xi^a\equiv(q_1,p_1,\cdots,q_N,p_N)^\intercal \label{eq:def-xi2}.
\end{align}
Because  each  component of $\xi^a$, \ie $q_i,p_i:V\to \mathbb{R}$, is a function on  phase space, the vector valued function  encodes the Poisson brackets of the classical theory through
\begin{align}
    \{\xi^a,\xi^b\}=\Omega^{ab}\,.
\end{align}
This is particularly convenient and allows for a clean notation when expressing Poisson brackets of observables abstractly.  For instance, for two linear observables $\mathcal{O}_w=w_a\xi^a$ and $\mathcal{O}_u=u_a\xi^a$, we find immediately
\begin{align}
    \{\mathcal{O}_w,\mathcal{O}_u\}=w_au_b\{\xi^a,\xi^b\}=w_au_b\Omega^{ab}\,.
\end{align}

\subsection{Bosonic Gaussian states}
Bosonic Gaussian states are an important class of quantum states because they can be described through powerful analytic methods~\cite{weedbrook_gaussian_2012}. They appear as ground states of quadratic Hamiltonians and their wave function representation is given by multi-dimensional complex Gaussian distributions. When representing them as quasi-probability distributions on classical phase space, they are the only states with positive Wigner functions~\cite{hudson_when_1974,soto_when_1983}.  In particular, their Wigner functions are multi-dimensional real Gaussians again. In the context of Bogoliubov transformations, Gaussian states are often identified as the vacuum associated to a set of annihilation operators satisfying bosonic commutation relations. Finally, their $n$-point correlation functions are completely determined from their $1$-point and 2-point correlation function via Wick's theorem~\cite{wick_evaluation_1950}.

We label a Gaussian state $\ket{G,z}$ by a displacement vector $z^a$ and a covariance matrix $G^{ab}$
\begin{align}
    z^a&=\bra{G,z}\hat{\xi}^a\ket{G,z}\,,\\ G^{ab}&=\bra{G,z}\hat{\xi}^a\hat{\xi}^b+\hat{\xi}^b\hat{\xi}^a\ket{G,z}-2z^az^b\,.
\end{align}
The covariance matrix $G^{ab}$ is a positive definite symmetric bilinear form on the dual phase space $V^*$, \ie $G$ equips $V^*$ with an inner product. 
In particular, for each covariance matrix there exists a basis of the phase space $V$ with respect to which the covariance matrix is represented by the identity matrix $G^{ab}\equiv\id$, while $\Omega$ takes the standard form~\eqref{eq:staOmega2}.

Furthermore, the covariance matrix of a pure Gaussian state and the inverse symplectic form satisfy the condition
\begin{align}
    G^{ab}\Omega^{-1}_{bc}G^{cd}\Omega^{-1}_{de}=-\delta^a{}_e\,.
\end{align}
In fact, we can use this property to define a linear map on $V$ 
\begin{align}\label{eq:boson_J_defn}
    J^a{}_b=-G^{ac}\Omega^{-1}_{cb}\,.
\end{align}
This map is referred to as linear complex structure on $V$ because it satisfies  
\begin{align}
J^2=(G\Omega^{-1})^2=-\mathbb{1}\,,
\end{align}
\ie it acts like the multiplication with the complex number $\ii$ on the real vector space. For a given covariance matrix and displacement vector, the Gaussian state $\ket{G,z}\in\mathcal{H}$ is then fully specified by the requirement that
\begin{align}
    \frac{1}{2}(\delta^{a}{}_{b}+\ii J^a{}_b)(\hat{\xi}^b-z^b)\ket{G,z}=0\,.\label{eq:Jproj}
\end{align}
Here, the first term corresponds to a projector onto the space spanned by those annihilation operators that annihilate $\ket{G,z}$. Describing Gaussian states in terms of linear complex structures was first introduced to describe unitarily inequivalent Fock space representations~\cite{ashtekara._quantum_1975,wald_quantum_1994,woit_quantum_2017}, but since then has been used to study entanglement production~\cite{bianchi_squeezed_2016,hackl_entanglement_2018}, typical entanglement of energy eigenstates~\cite{vidmar_entanglement_2017,vidmar_volume_2018,hackl_average_2019}  and to explore circuit complexity in free field theories~\cite{vidmar_volume_2018,chapman_complexity_2019}.

Wick's theorem provides an efficient way to compute arbitrary $n$-point functions
\begin{align}
    C^{a_1\cdots a_n}_{n}=\bra{G,z}(\hat{\xi}^{a_1}-z^{a_1})\cdots (\hat{\xi}^{a_n}-z^{a_n})\ket{G,z}.
\end{align}
where the subtraction of $z^a$ simplifies later expressions. Clearly, knowing $C^{a_1\cdots a_n}_{n}$ enables us to compute arbitrary correlations even without $z^a$. With the definitions of $G^{ab}$ and $z^a$, and the commutation relations $[\hat{\xi}^a,\hat{\xi}^b]=\ii \Omega^{ab}$, we find the 2-point function to be
\begin{align}
    C^{ab}_2=\frac{1}{2}(G^{ab}+\ii\Omega^{ab})\,.
\end{align}
With this expression, we can state Wick's theorem  compactly  as
\begin{align}
    C^{a_1\cdots a_n}_{n}&=\sum\text{(contractions of $C_2^{ab}$)}\\
    &=C^{a_1a_2}_2\cdots C^{a_{n-1}a_{n}}_2+\cdots\,,
\end{align}
where the sum over all contraction refers to writing $C^{a_1\cdots a_n}_n$ as sum of products of 2-point functions $C_2^{ab}$ with all possible inequivalent assignments of ordered indices $a_i$. This implies that all odd $n$-point functions vanish, because contractions require an even number of indices. Therefore, after the 2-point function, the next non-trivial correlation function is the four-point function given by
\begin{align}
    C_4^{abcd}=C_2^{ab}C_2^{cd}+C_2^{ac}C_2^{bd}+C_2^{ad}C_2^{bc}\,.
\end{align}
Note the importance of ordering the indices on each 2-point function $C^{a_ia_j}_{2}$  in the same way as on $C_n^{a_1\cdots a_n}$, \ie we require $i<j$.

\paragraph{Example.} In order to illustrate our methods and to fix conventions, let us look at the ground state $\ket{G,z}$ of the harmonic oscillator with Hamiltonian $\hat{H}=\frac{1}{2}(\hat{p}^2+\omega^2\hat{q}^2)$. 
Clearly, we have $z^a=0$, because the ground state has zero expectation value of position $\hat{q}$ and momentum $\hat{p}.$ We represent everything in the basis $\hat{\xi}^a\equiv(\hat{q},\hat{p})^\intercal$. Here, we have
\begin{align}
    \Omega^{ab}\equiv\begin{pmatrix}0 & 1\\ -1 & 0\end{pmatrix}\,\,\,\text{and}\,\,\,
    G^{ab}\equiv\begin{pmatrix}1/\omega & 0\\ 0 & \omega\end{pmatrix}\,.\label{eq:GofSHO}
\end{align}
We can use $\Omega^{ab}$ and $G^{ab}$ to compute the matrix representation of the linear complex structure
\begin{align}
    J^a{}_b=-G^{ac}\Omega^{-1}_{cb}\equiv\begin{pmatrix}0 & 1/\omega\\-\omega&0\end{pmatrix}\,.
\end{align}
With this, the  projection of equation~\eqref{eq:Jproj} reads
\begin{align}
    \frac{1}{2}(\delta^a{}_b+\ii J^a{}_b)\hat{\xi}^b\equiv\frac{1}{2}\begin{pmatrix}\hat{q}+\frac{\ii}{\omega}\,\hat{p}\\\hat{p}-\ii\omega \hat{q}\end{pmatrix}=\begin{pmatrix}\sqrt{\frac{1}{2\omega}}\,\hat{a}\\-\ii\sqrt{\frac{\omega}{2}}\hat{a}\end{pmatrix}\,,
\end{align}
\ie we project onto the subspace spanned by the annihilation operator $\hat{a}=\sqrt{\frac{\omega}{2}}(\hat{q}+\frac{\ii}{\omega}\,\hat{p})$.

\subsection{Quadratic bosonic Hamiltonians}
The most general quadratic Hamiltonian
\begin{align}
    \hat{H}=\frac{1}{2}h_{ab}\hat{\xi}^a\hat{\xi}^b+f_a\hat{\xi}^a\,,
\end{align}
contains a linear and a quadratic term, where we require $h_{ab}$ to be positive definite and symmetric. This ensures that $\hat{H}$ is bounded from below and its ground state is a Gaussian state $\ket{G,z}$. The covariance matrix $G^{ab}$ and the linear displacement $z^a$ can be computed from $h_{ab}$ and $f_a$:

\begin{itemize}
\item \textbf{Computation of $z^a$}\\
    The ground state minimizes the energy expectation value $E=\bra{G,z}\hat{H}{\ket{G,z}}$ given by
    \begin{align}
    E=\frac{1}{4}h_{ab}G^{ab}+\frac{1}{2}h_{ab}z^az^b+f_az^a\,.
    \end{align}
    This implies $z^ah_{ab}+f_b=0$  giving 
    \begin{align}
        z^a=-(h^{-1})^{ab}f_b\,, \label{eq:zformula}
    \end{align}
    with the inverse form $(h^{-1})^{ac}h_{cb}=\delta^a{}_b$, which exists since $h_{ab}$  is positive definite, thus non-degenerate.
\item \textbf{Computation of $G^{ab}$}\\
    We can compute $G^{ab}$ directly from $h_{ab}$ via
    \begin{align}
        G^{ab}=-K^a{}_c |K^{-1}|^{c}{}_d\Omega^{db}\,, \label{eq:Gfromh}
    \end{align}
    with $K^a{}_b=\Omega^{ac}h_{cb}$. $|K^{-1}|$ refers to 
    the linear map constructed by replacing the eigenvalues of $K^{-1}$ with their absolute values.
\end{itemize}
We can think of these formulas as a projection from the space of quadratic Hamiltonians onto the space of Gaussian states. Every Hamiltonian gives rise to a unique ground state, but for each state there are plenty of Hamiltonians that share the same ground state. This redundancy is visible in~\eqref{eq:Gfromh} where rescaling $h_{ab}$ with an overall factor (or individually in each frequency sector) leaves $G^{ab}$ unchanged.

The energy expectation value $E$ and its variance $\Sigma_E$ for Gaussian states of the form $\ket{G',z}$, \ie with the above $z^a$ but with arbitrary $G'$, are readily calculated as
\begin{align}
    E&=\bra{G',z}\hat{H}\ket{G',z}=\Tr(h G')/4+E_z\,,\\ 
    \Sigma_E&=\sqrt{\bra{G',z}\hat{H}^2\ket{G',z}-E^2}\\
    &=\sqrt{\Tr(h G'hG'+h\Omega h\Omega)}/4\,,
\end{align}
where $E_z=-\frac{1}{2}h_{ab}z^az^b=-\frac{1}{2}(h^{-1})^{ab}f_af_b$. 
This follows from~\eqref{eq:zformula} to rewrite
\begin{align}
    \hat{H}=\frac{1}{2}h_{ab}(\hat{\xi}-z)^a(\hat{\xi}-z)^b+E_z\,.
\end{align}

\paragraph{Example.} Let us consider the simple harmonic oscillator $\hat{H}=\frac{1}{2}(\hat{p}^2+\omega^2\hat{q}^2)$ from before. We find $f_a=0$ and with respect to $\hat{\xi}^a\equiv(\hat{q},\hat{p})$
\begin{align}
    h_{ab}\equiv\begin{pmatrix}\omega^2 &0\\0&1\end{pmatrix}\,.
\end{align}
Applying formula~\eqref{eq:Gfromh} yields
\begin{align}
    K^a{}_b\equiv\begin{pmatrix}0 &1\\-\omega^2&0\end{pmatrix}\,\,\Rightarrow\,\,
    G^{ab}\equiv\begin{pmatrix}1/\omega &0\\0&\omega\end{pmatrix}\,,
\end{align}
which agrees with the result from~\eqref{eq:GofSHO}. Here
\begin{align}
    |K^{-1}|\equiv \begin{pmatrix}\frac1\omega &0\\0&\frac1\omega\end{pmatrix}\,,
\end{align}
because  $K^{-1}$ has eigenvalues $\pm\frac\ii\omega$.

\subsection{Entanglement of bosonic Gaussian states}
In terms of a quantum system's Hilbert space $\mathcal{H}$, a bi-partition of the system corresponds to a tensor product $\mathcal{H}=\mathcal{H}_A\otimes\mathcal{H}_B$.
A bi-partition of a system of $N$ bosonic modes, into subsystems of $N_A$ and $N_B$ modes respectively, induces a decomposition of the phase space $V=A\oplus B$ into two symplectic complements $A$ and $B$.

Generally a Gaussian state on the full system $\ket{G,z}$ contains correlations, \ie entanglement between $A$ and $B$. However, there always exist bases 
$(q_1^A,p_1^A,\cdots,q^A_{N_A},p^A_{N_A})$ of $A$ and $(q_1^B,p_1^B,\cdots,q_{N_B}^B,p^B_{N_B})$ of $B$, such that the covariance matrix $G^{ab}$ takes the standard form \cite{botero_mode-wise_2003} re-derived in appendix~\ref{app:standardform-boson}
\begin{widetext}
\begin{align}
	G\equiv\left(\begin{array}{ccc|cccccc}
	\mat{ch}_1 & \cdots & 0 & \mat{sh}_1 & \cdots & 0 & 0 &\cdots & 0\\
	\vdots & \ddots & \vdots & \vdots & \ddots & \vdots & \vdots & \ddots & \vdots \\
	0 & \cdots & \mat{ch}_{N_A} & 0 & \cdots &\mat{sh}_{N_A} & 0 & \cdots & 0\\
	\hline
	\mat{sh}_1 & \cdots & 0 & \mat{ch}_1 & \cdots & 0 & 0 &\cdots & 0\\
	\vdots & \ddots & \vdots & \vdots & \ddots & \vdots & \vdots & \ddots & \vdots\\
	0 & \cdots & \mat{sh}_{N_A} & 0 & \cdots &\mat{ch}_{N_A} & 0 &\cdots & 0\\
	0 & \cdots & 0 & 0 & \cdots & 0 & \mathbb{1}_2 & \cdots & 0\\
	\vdots & \ddots & \vdots & \vdots & \ddots & \vdots & \vdots & \ddots & \vdots \\
	0 & \cdots & 0 & 0 & \cdots & 0 & 0 & \cdots & \mathbb{1}_2\\
	\end{array}\right)\,,\label{eq:sta}
\end{align}
\end{widetext}
built from the $2\times2$-matrices, with $r_i\geq0$,
\begin{align}
	\mat{ch}_i &=\left(\begin{array}{cc}
	\cosh{2r_i} & 0\\
	0 & \cosh{2r_i}
	\end{array}\right)\,,\\
	\mat{sh}_i &=\left(\begin{array}{cc}
	\sinh{2r_i} & 0\\
	0 & -\sinh{2r_i}
	\end{array}\right)\,.
\end{align}

The standard form highlights two important features of the state's entanglement structure. Firstly, the basis modes  bring the diagonal blocks, corresponding to $A$ and $B$ into diagonal form, \ie they provide a normal mode decomposition of the subsystems' partial states: The partial states of $A$, and $B$ respectively, factorize into a product state over the modes in each subsystem basis.

Secondly, all entanglement between $A$ and $B$ is contained in entangled mode pairs: If $r_i>0$, for a mode $(q_i^A,p_i^A)$ in $A$, then it is entangled  with the mode $(q_i^B,p_i^B)$ with which it is in a two-mode squeezed state.

The total von Neumann entropy $S_A(\ket{G,z})$ between $A$ and $B$ is given by the sum of the mode pairs' individual entropies~\cite{sorkin_entropy_2014,bombelli_quantum_1986} 
\begin{align}\label{eq:Sformula} 
    S_A(\ket{G,z})
    =\sum_{i=1}^{N_A}s_{b}(\cosh{2r_i})
\end{align}
with $s_{b}(x)=\left(\frac{x+1}{2}\right)\log\left(\frac{x+1}{2}\right)-\left(\frac{x-1}{2}\right)\log\left(\frac{x-1}{2}\right)$. Here, we chose the convention to compute all logarithms with respect to base $2$.

Note, that the total entropy can be expressed in a basis invariant way as~\cite{bianchi_squeezed_2016,hackl_aspects_2018}
\begin{align}
    S_A(\ket{G,z})=\mathrm{Tr}\left[\left(\frac{\mathbb{1}_A+\ii[J]_A}{2}\right)\log\left|\frac{\mathbb{1}_A+\ii[J]_A}{2}\right|\,\!\right]\,,\label{eq:SJbosons}
\end{align}
where $[J]_A$ is the restriction of the complex structure $J^a{}_b=-G^{ac}\Omega^{-1}_{cb}$ to the symplectic subspace $A$, \ie we restrict it to a $2N_A$-by-$2N_A$ matrix. The absolute value should be understood in terms of eigenvalues of $[J]_A$ which come in conjugate pairs $\pm \ii \cosh{2r_i}$.

\subsection{Bosonic partner mode construction}\label{sec:partnermode}
If we pick a single mode from a larger system in a Gaussian state $\ket{G,z}$, the single mode is in general entangled with the rest of the modes.
However, as implied by the standard form \eqref{eq:sta}, there exists a second mode, the so called partner mode, which shares all of the first mode's entanglement, \ie these two modes are in a product state with the rest of the system.

Recent works have developed formulae for the partner mode \cite{hotta_partner_2015,trevison_pure_2018,trevison_spatially_2019}. We here recast them in terms of the complex structure $J$ of the state, thus illuminating their structure, and generalizing them to arbitrary Gaussian states. In Appendix \ref{app:standardform-boson} we derive the partner mode construction in terms of the complex structure.

Let us assume that the two observables $\hat Q_A=x_a \hat\xi^a$ and $\hat P_A=k_a\hat\xi^a$ define a mode, \ie
\begin{align}
    \Omega^{ab} x_a k_b=1\,,
\end{align}
and that they yield the standard form of the mode according to \eqref{eq:sta}, \ie  for some $r>0$
\begin{align}
G^{ab} x_a x_b&=G^{ab}k_a k_b=\cosh 2r \,,\\
G^{ab} x_a k_b&=0 \,.
\end{align}
Then the partner mode of this mode is given by $\hat Q_{\bar{A}}=\bar{x}_a\hat\xi^a$ and $\hat P_{\bar{A}}=\bar{k}_a \hat\xi^a$ with
\begin{align}\label{eq:partnermodeformula}
\bar{x}_a&=\coth(2r) x_a+ \frac{1}{\sinh 2r}\left(J^\intercal\right)_a{}^c k_c\,,\\
\bar{k}_a&=- \coth(2r) k_a+  \frac{1}{\sinh 2r}\left(J^\intercal\right)_a{}^c x_c \,.
\end{align}

\subsubsection{Partner modes yield standard form for $G$}
This definition yields a normalized mode $\Omega^{ab} \bar{x}_a\bar{k}_b=1$, which commutes with the original mode and is correlated exactly as predicted by \eqref{eq:sta}, \ie as is straightforward to check using the identities  $(J^\intercal)^2=-\mathbb{1}$ and $G \Omega^{-1}G=-\Omega$, we have
\begin{align}
G^{ab}\bar{x}_a \bar{x}_b&=G^{ab}\bar{k}_a\bar{k}_b=\cosh2r\,,\\\
G^{ab}x_a\bar{x}_b&=-G^{ab}k_a\bar{k}_b=\sinh2r\,,\\
G^{ab}k_a \bar{x}_b&=G^{ab}x_a\bar{k}_b=G^{ab}\bar{x}_a\bar{k}_b=0\,.
\end{align}
Note  that repeating the partner mode construction gives back the original mode again, \ie the partner mode's partner is the original mode.

There is a close relationship between the correlation of other modes with the original mode, and the commutator of those other modes with the partner mode: Let $x'_a$ and $k'_a$ define another mode in the system, \ie  $\Omega^{ab}x'_a k'_b=1$. Then
\begin{align}
    \Omega^{ab}x'_a \bar{x}_b&=\coth(2r) \Omega^{ab}x'_ax_b+\frac{G^{ad} x'_a k_d}{\sinh 2r}\,, \\
    \Omega^{ab}k'_a \bar{x}_b&= \coth(2r)\Omega^{ab}k'_ax_b+\frac{G^{ad} k'_a k_d}{\sinh 2r} \,, \\
    \Omega^{ab}x'_a \bar{k}_b&= -\coth(2r) \Omega^{ab}x'_ak_b+\frac{G^{ad} x'_a x_d}{\sinh 2r} \,, \\
    \Omega^{ab}x'_a \bar{x}_b&= -\coth(2r)\Omega^{ab}k'_ak_b+\frac{G^{ad} k'_a x_d}{\sinh 2r} \,.
\end{align}
This means that if the mode commutes with the original mode $\Omega(x,x')=\Omega(x,k')=\Omega(k,x')=\Omega(k,k')=0$, then the commutator with the partner mode is proportional to the correlation with the original mode.
In particular, it follows that with respect to a basis of modes which contains the original mode and the partner mode as its first two modes, 
the covariance matrix $G^{ab}$ takes precisely the standard form.
\begin{align}\label{eq:G_standardform_partner}
	G\equiv\left(\begin{array}{cc|cc|c}
	\cosh{2r} & 0 & \sinh{2r} & 0 & 0 \\
	0 & \cosh{2r} & 0 & -\sinh{2r} & 0 \\
	\hline
	\sinh{2r} & 0 & \cosh{2r} & 0 & 0 \\
	0 & -\sinh{2r} & 0 & \cosh{2r} & 0 \\
	\hline
	0 & 0 & 0 & 0 &  \mathbb{1}_{N-2}
	\end{array}\right)
\end{align}
This shows that $\ket{G,z}$ is in a product state consisting of an entangled two-mode state (between original modes and its partner) and another Gaussian state describing the rest of the system. Note that the displacement vector $z^a$ does not affect the entanglement structure of the state.

\subsubsection{Unsqueezing the partner modes}
From the matrix representation of $G$ above we see that partner modes are in a pure two-mode squeezed state. Hence, they can be viewed as arising from squeezing two modes which are in a pure product state with each other. 
We obtain these two modes by acting with the inverse squeezing transformation on the two partner modes $(x_a\hat\xi^a,k_a\hat\xi^a)$ and $(\bar{x}_a\hat\xi^a,\bar{k}_a\hat\xi^a)$. This yields the  modes $(y_a\hat\xi^a,l_a\hat\xi^a)$ and $(z_a\hat\xi^a,m_a\hat\xi^a)$,
given by
\begin{align}\label{eq:unsqueezedmodes}
    y_a &= \cosh(r) x_a-\sinh (r) \bar{x}_a =  \frac{x_a- \left(J^\intercal\right)_a{}^c k_c }{2\cosh r}\,, \\ 
    l_a &= \cosh(r) k_a +\sinh(r) \bar{k}_a =  \frac{k_a +  \left(J^\intercal\right)_a{}^c x_c}{2\cosh r}\,, \\ 
    z_a &= -\sinh(r) x_a +\cosh(r) \bar{x}_a = \frac{x_a + \left(J^\intercal\right)_a{}^c k_c }{2\sinh r}\,, \\
    m_a &= \sinh(r) k_a +\cosh(r) \bar{k}_a = \frac{-k_a + \left(J^\intercal\right)_a{}^c x_c}{2\sinh r}\,.
\end{align}
Since the inverse squeezing is a symplectic transformation these \emph{unsqueezed} modes form a  normalized and commuting basis of the subsystem spanned by the two partner modes. In particular, with respect to them the covariance matrix is represented by the identity matrix $G^{ab}\equiv \id$.

\section{Entanglement extraction from quadratic bosonic systems}\label{sec:boson_extraction_wholesection}
In this section, we present the central result of this paper for bosonic systems. We begin by defining the entanglement extraction  procedure in Section~\ref{sec:bosonic_extraction_procedure}. Our proof then follows three steps:
\begin{enumerate}
    \item \textbf{Splitting the problem into two parts}\\
    In section~\ref{sec:energycostofEE}, we show that the problem of finding the  minimal energy cost can be broken up into first solving the problem for a system of two modes, and then optimizing how to embed these two modes used for entanglement extraction into the larger system.
    \item \textbf{Solving the two-mode problem}\\
    In Section~\ref{sec:twobosonicmodes}, we solve the two mode problem, \ie we answer the question: What is the minimal energy cost $\Delta E_{\min}$ of extracting two partner modes with 
    entanglement entropy $\Delta S$ from  a system consisting of two modes with a quadratic Hamiltonian? 
    For this, we first look at the case of a degenerate two-mode Hamiltonian  in Section~\ref{sec:degenerate_case}, before tackling the general case in Section~\ref{sec:two_mode_nondeg_boson}. The result is a general formula for the minimal energy cost as a function of the Hamiltonian's excitation energies $\epsilon_1$ and $\epsilon_2$.
    \item \textbf{Solving the embedding problem}\\
    In Section~\ref{sec:optimal_modes}, we derive how to select modes from a large system that optimize the energy cost. Not surprisingly, the lowest energy cost is achieved only if the two modes are chosen from the subsystem spanned by the two lowest energy eigenmodes of the Hamiltonian. Based on this insight, we construct the optimal partner modes and the associated excitation energies in the subsystem explicitly. Furthermore, we generalize the expressions and constructions to scenarios where the two partner modes are restricted to lie in a different two-mode subsystem.
\end{enumerate}

\subsection{Pure state entanglement extraction from bosonic modes}\label{sec:bosonic_extraction_procedure}
The general framework for entanglement extraction consists of two target systems and the source system $S$. Initially, the total state is assumed to be a product state
\begin{align}
    \rho_i=\sigma_1\otimes\tau_2\otimes\ket{\Psi}\bra{\Psi}_S\,,
\end{align}
where  $\ket{\Psi}_S$ is the initial state of the source system, and $\sigma_1$ and $\tau_2$ are the initial states of the first and second target system, respectively. Without loss of generality, the interaction between the source and target systems can be modeled by two unitary operations $U_1$ and $U_2$, which each couple one of the targets to the source. The final state of the two target systems is then
\begin{align}
    \rho_{12} &= \Tr_{S} \left( U_1 U_2\rho_{i} U_2^\dagger U_1^\dagger\right)\,.
\end{align}
In general, the entanglement content of this state depends on the type of source and target system and their realizable couplings. In particular, it also depends on the resources such as the energy which is available to implement the extraction.

The concrete type of source system we study in this section, is a system composed of bosonic modes with a quadratic Hamiltonian. This system is supposed to be in its ground state initially, \ie the state $\ket\Phi_S$ is Gaussian. Consequently, its entanglement structure can be analyzed using the formalism of partner modes introduced before. The two target modes are assumed to be single bosonic modes.

\begin{figure}[t]
\begin{center}
  \includegraphics[width=\linewidth]{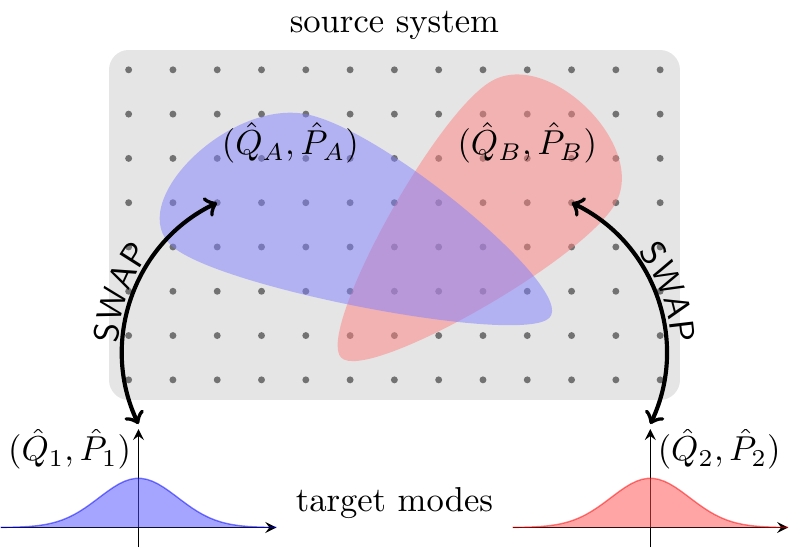}
\end{center}\vspace{-4mm}
\caption{General framework of entanglement extraction from a system of bosonic modes. Two target modes get entangled by swapping their states with a pair of entangled modes in the source system.}
\label{fig:entantlgement_extraction} 
\end{figure}

The model of interaction we use is that the target modes, with quadratures $(\hat Q_1,\hat P_1)$ and $(\hat Q_2,\hat P_2)$, each couple to one mode inside the source system, with quadratures $(\hat Q_A,\hat P_A)$ and $(\hat Q_B,\hat P_B)$. The unitary couplings $U_1$ and $U_2$ are assumed to swap the states of the  target modes and the source modes, \ie 
\begin{align}
\begin{split}
    U_1^\dagger \hat Q_1 U_1 &= \hat Q_A\,,\quad U_1^\dagger \hat Q_A U_1 = \hat Q_1\,,\\
    U_1^\dagger \hat P_1 U_1 &= \hat P_A\,,\quad\,\, U_1^\dagger \hat P_A U_1 = \hat P_1\,,\label{eq:Sswap}
\end{split}
\end{align}
and  $U_2$ swaps $(\hat Q_2,\hat P_2)$ and  $(\hat Q_B,\hat P_B)$, accordingly.

The only restriction imposed on the source modes is  that they commute according to
\begin{align}
    \comm{\hat Q_A}{\hat Q_B}\!=\!\comm{\hat Q_A}{\hat P_B}\!=\!\comm{\hat P_A}{\hat Q_B}\!=\!\comm{\hat P_A}{\hat P_B}\!=\!0.
\end{align}
This condition ensures that the entanglement of the two target modes was preexisting in the system, but not created by the couplings between target modes and system \cite{simidzija_general_2018}. Other than that, the modes can be chosen freely inside the source systems.

In particular, the source modes can be chosen to be partner modes, \ie $\hat Q_B=\hat Q_{\bar{A}}$ and $\hat P_B=\hat P_{\bar{A}}$ in the notation of Section~\ref{sec:partnermode}. As discussed, this implies that the ground state of the source system $S$ is a product state between the two modes and the rest of the source system
\begin{align}\label{eq:groundstate}
    \ket\Phi_S=\ket\psi_{A\bar{A}}\otimes\ket\phi_R\,,
\end{align}
where $\ket\psi_{A\bar{A}}$ is a two-mode squeezed state, and $\ket\phi_R$ is the ground state of the rest of the system. When the target modes and the source modes are swapped this puts the total system into the state 
\begin{align}\label{eq:finalstate}
    \rho_f= \ket\psi\bra\psi_{12} \otimes \sigma_A\otimes\tau_{\bar{A}}\otimes\ket\phi\bra\phi_R\,,
\end{align}
where now the target modes are in the entangled state $\ket{\psi}$, whereas the two source modes are in  the initial states of the target modes. The rest of the source system is unaffected by the  extraction and continues to be in its ground state $\ket{\phi_R}$.

Choosing partner modes in the source system is advantageous for two reasons: Firstly, it is the only way for the  target modes to end up in a pure state. If the system modes are not partners then after the swap the two target modes are correlated with the source system, \ie after tracing out the source system the partial state of target modes is mixed. Secondly, choosing the modes as partners maximizes the extracted entanglement. If the second party chooses a different mode, then the (mixed state) entanglement between the two modes is never larger than between the first mode and its partner. This is because the mixed state of the two modes is obtained by tracing out part of a larger state which included the  partner mode.

Motivated by this framework and these considerations we study the following concise question: If in the ground state of a quadratic Hamiltonian the state of two partner modes sharing entanglement entropy $\Delta S$ is replaced by a product state, by at least what amount $\Delta E$ does this increase the energy expectation value of the system.

While in the following we refer to $\Delta E$ as the energy cost of entanglement extraction,
in a realistic implementation of an extraction protocol additional energy costs are expected: For one, we do not take into account the energy content of the target modes outside of the source system. 
Furthermore, the implementation of the unitary interaction between targets and source may require  energy as well.
However, our reason to focus on the energy $\Delta E$  injected into the source system only, is that the other additional energy costs depend on the details and the  design of the specific extraction protocol implemented.

Another interesting aspect is the dynamics of the extraction process: In general, the source modes are not eigenmodes of the Hamiltonian, i.e., they have a non-trivial time evolution and their physical localization in the source system may change over time.
This dynamic needs to be taken into account in the design of the interaction of source and target modes. 
In the scope of this work, we assume that the swap of target and source modes happens within a time much shorter than the time scale of the time evolution of the target modes. Hence we neglect the dynamics and consider the interaction of target a
The vale modes to take part instantaneously. idity of this ind sourcdealization has to be reviewed with respect to any given implementation of an extraction protocol. We return to a discussion of this point at the end of the article.

\subsection{Energy cost of entanglement extraction}\label{sec:energycostofEE}
The energy cost of entanglement extraction, as introduced above, is given by the difference in expectation value of the source system's Hamiltonian before and after the extraction, \ie
\begin{align}
    \Delta E=\Tr\left(  \sigma_A\otimes\tau_{\bar A}\otimes\ket\phi\bra\phi_R \hat H- \ket\Psi\bra\Psi_S \hat H \right)\,.
\end{align}
The Hamiltonian $\hat H=\frac{1}{2}h_{ab}\hat\xi^a\hat\xi^b+f_a\hat\xi^a$ of the source system may be coupling all different modes of the system. Nevertheless, for the calculation of $\Delta E$,  only  the part acting on the two selected partner modes $A$ and $B=\bar{A}$ matter.

To see this, we split up the Hamiltonian into a sum of three terms
\begin{align}
\hat H=\hat H_{A\bar{A}}+\hat H_R+\hat H_{A\bar{A},R}
\end{align}
where $\hat H_{A\bar{A}}$ acts only on the two partner modes, $\hat H_R$ acts only on the rest of the source system's modes, and $\hat H_{A\bar{A},R}$ is the part which couples the partner modes and the rest of the system.\footnote{Put differently, the operators $\hat H_{A\bar{A}}$ and $\hat H_R$ are the restrictions of $\hat H$ onto the subspace spanned by the partner modes and the rest of the system, respectively. $\hat{H}_{A\bar{A},R}:=\hat{H}-\hat H_{A\bar{A}}-\hat H_R$ then contains all terms in $\hat H$ that act on both $A\bar{A}$ and $R$ simultaneously.}

The summand $\hat H_R$ does not contribute to the energy cost $\Delta E$, because  the rest of the system is in the same state before and after the extraction. Also the part $\hat H_{A\bar{A},R}$ does not contribute to $\Delta E$, because both the initial state \eqref{eq:groundstate} of the system and its final state \eqref{eq:finalstate} are product states with respect to the partition $A\bar{A}\oplus V$, hence
\begin{align}
 \Tr\left(  \sigma_A\otimes\tau_{\bar A}\otimes\ket\phi\bra\phi_R \hat H_{A\bar{A},R}\right)
 =0\,.
\end{align}
Therefore, the energy cost of entanglement extraction is determined by $\hat{H}_{A\bar{A}}$ acting on the partner modes, only:
\begin{align}\label{eq:energycost}
 \Delta E=\Tr\left(  \sigma_A\otimes\tau_{\bar A}\hat H_{A\bar{A}}\right) -\Tr\left( \ket\psi\bra\psi_{A\bar{A}} \hat H_{A\bar{A}} \right) \,.
\end{align}
From this  follows which initial states of the target modes  minimize the energy cost: Because
\begin{align}\label{eq:product_in_energycost}
    \Tr\left(  \sigma_A\otimes\tau_{\bar A}\hat H_{A\bar{A}}\right) = \Tr\left(\sigma_A\hat H_A\right)+\Tr\left(\tau_{\bar A} \hat H_{\bar A}\right)\,,
\end{align}
the initial states $\sigma$ and $\tau$ of the target modes need to be the ground states of the restrictions $\hat H_A$ and $\hat H_{\bar A}$ of the Hamiltonian $\hat H$ onto the partner modes. These states are pure Gaussian states.

Equation \eqref{eq:energycost} effectively simplifies finding the minimal energy cost to a two-mode problem: If we know the minimal energy cost for the extraction of an amount $\Delta S$ of entanglement from a source system consisting of two modes, then applying this result to $\hat H_{A\bar A}$ and optimizing over different choices of $A$ yields the minimal cost for extracting $\Delta S$ from a larger system of $N$ modes.

\subsection{Energy cost for two bosonic modes}\label{sec:twobosonicmodes}
In this section, we analyze entanglement extraction from a  source system that consists of only two bosonic modes and has a quadratic Hamiltonian. We show that the minimal energy cost for the extraction from a pair of partner modes with a fixed amount of entanglement is a function only of the Hamiltonian's excitation energies. Our derivation  is constructive in the sense that we derive the symplectic transformation which maps the eigenmodes of the Hamiltonian to a pair of partner modes which minimize the energy cost.

Since the Hamiltonian 
is quadratic, we can choose a basis $\hat{\xi}^a\equiv(\hat q_1,\hat p_1,\hat q_2,\hat p_2)$ such that the quadratic part of the Hamiltonian $h_{ab}$  and the covariance matrix $G^{ab}$ of its ground state are represented by diagonal matrices $\mat{h}$ and $\mat G$ given by
\begin{align}\label{eq:h_diag}
	h_{ab}&\stackrel{1,2}\equiv 
	\mat h=\left(\begin{array}{cccc}
	\epsilon_1 & 0 & 0 & 0\\
	0 & \epsilon_1 & 0 & 0\\
	0 & 0 & \epsilon_2 & 0\\
	0 & 0 & 0 & \epsilon_2
	\end{array}\right)\,,\\
	G^{ab}&\stackrel{1,2}\equiv
	\mat G=\left(\begin{array}{cccc}
	1 & 0 & 0 & 0\\
	0 & 1 & 0 & 0\\
	0 & 0 & 1 & 0\\
	0 & 0 & 0 & 1
	\end{array}\right)\,. \label{eq:G12}
\end{align}
These modes $1$ and $2$ are not entangled and, therefore, cannot be used as modes $A$ and $\bar{A}$ for entanglement extraction. Instead, we need to find a pair of partner modes $(\hat Q_A,\hat P_A,\hat Q_{\bar{A}},\hat P_{\bar{A}})$ with respect to which $G^{ab}$ is represented in standard form \eqref{eq:G_standardform_partner} given by
\begin{align}
G^{ab}\stackrel{A,\bar{A}}\equiv\mat{\tilde G}=\begin{pmatrix}\cosh{2r}&0&\sinh{2r}&0\\0&\cosh{2r}&0&-\sinh{2r}\\ \sinh{2r}&0&\cosh{2r}&0\\ 0&-\sinh{2r}&0&\cosh{2r}\end{pmatrix}\label{eq:GAplusPARTNER}
\end{align}
for a fixed $r>0$, which we choose based on the amount of entanglement that we want to extract. There is an infinite number of choices for such $A$ and $\bar{A}$, but we will identify those that minimize the energy cost. This is, we find the symplectic transformations which map the energy eigenmodes $1$ and $2$ to those  partner modes $A$ and $\bar{A}$ which are optimal for entanglement extraction.

The two-mode squeezing transformation
\begin{align}
	\mat{M}_r=
	\left(\begin{array}{cccc}
	\cosh{r} & 0 & \sinh{r} & 0\\
	0 & \cosh{r} & 0 & -\sinh{r}\\ 
	\sinh{r} & 0 & \cosh{r} & 0 \\
	0 & -\sinh{r} & 0 & \cosh{r}
	\end{array}\right)
\end{align}
brings $\mathbf{G}$ into the form from~\eqref{eq:GAplusPARTNER} via
\begin{align}
    \mat{\tilde G}=\mat{M}_r \mat G\mat{M}^\intercal_r\,,
\end{align}
\ie it maps the eigenmodes to a pair of partner modes with squeezing parameter $r$. Under the symplectic transformation $\mathbf{M}_r$, the Hamiltonian matrix $\mathbf{h}$ transforms under the action of the inverse transform $\mathbf{M}^{-1}_r=\mathbf{M}_{-r}$, \ie
\begin{align}
    \mat{\tilde h}= \mat{M}_{-r}^\intercal \mat h\mat{M}_{-r}\,.
\end{align}
However, $\mat{M}_r$ is not the only transformation doing this: We can compose the squeezing $\mat{M}_r$ with any symplectic transformation $\mat{N}$  which leaves the covariance matrix invariant, \ie $\mat{G}=\mat{N} \mat{G} \mat{N}^\intercal$. This yields the most general form of symplectic transformation $\mat{T}_r=\mat{M}_r \mat{N}$ that transforms the modes $1$ and $2$ into $A$ and $\bar{A}$ for fixed $r$. All of these choices share the same amount of entanglement across $A$ and $\bar{A}$ because the covariance matrix takes the same form $\mat{\tilde G}=\mat{T}_r\mat{G} \mat{T}^\intercal_r$. However, the Hamiltonian matrix $\mathbf{h}$ is in general not left invariant by $\mat{N}$, \ie $\mat{h}\neq(\mat{N}^{-1})^\intercal \mat{h}\mat{N}^{-1}$ and thus the energy cost will be different.

\subsubsection{Degenerate two-mode Hamiltonian $(\epsilon_1\!=\!\epsilon_2)$}\label{sec:degenerate_case}
We first consider the case where $\epsilon:=\epsilon_1=\epsilon_2$. This case is the simplest to solve because the matrix representations $\mathbf{h}$ and $\mathbf{G}$ are proportional to each other in the eigenmode basis and, thus, in all bases. Indeed, any symplectic transformation $\mat N$ that leaves $\mat G$ invariant, \ie $\mat{N}\mat{G}\mat{M}^\intercal=\mat{G}$, also leaves $\mat{h}$ invariant, \ie $\mat{h}=(\mat{N}^{-1})^\intercal\mat{ h}\mat{N}^{-1}$. Therefore, for any pair of partner modes $A$ and $\bar{A}$ where the covariance matrix takes the standard form $\mat{\tilde G}$ for fixed $r$, we find
\begin{align}\label{eq:h_partnerbasis_degenerate}
	\mat{\tilde h}=\left(\begin{array}{cc|cc}
	\epsilon\cosh{2r} & 0 & -\epsilon\sinh{2r} & 0\\
	0 & \epsilon\cosh{2r} & 0 & \epsilon\sinh{2r}\\
	\hline
	-\epsilon\sinh{2r} & 0 & \epsilon\cosh{2r} & 0 \\
	0 & \epsilon\sinh{2r} & 0 & \epsilon\cosh{2r}
	\end{array}\right).
\end{align}

As discussed following equation \eqref{eq:product_in_energycost}, the entanglement extraction swaps the state of the two modes against the product state $\sigma_A\otimes\tau_{\bar A}$ of two Gaussian states $\sigma$ and $\tau$. Hence, denoting $\sigma_A=\ket{G_A'}\bra{G_A'}$ and $\tau_{\bar{A}}=\ket{G_{\bar{A}}'}\bra{G_{\bar{A}}'}$, with respect to partner mode basis, the covariance matrix after the extraction takes the block diagonal form
\begin{align}\label{eq:tildeGprime}
	\mat{\tilde G'}=
	\left(\begin{array}{c|c}
	\mat{\tilde G}'_A & 0 \\ [1mm]
	\hline \\ [-3mm]
	0 & \mat{\tilde G}'_{\bar{A}}
	\end{array}\right).
\end{align}
To minimize the energy cost, we choose the initial states of the target modes to be the ground states of the restricted single-mode Hamiltonians, \ie $\mat{\tilde{G}}'_A=\mat{\tilde{G}}'_{\bar{A}}=\id_2$.
In terms of this new covariance matrix, the energy cost \eqref{eq:energycost} of the entanglement extraction is
\begin{align}
    \Delta E&= \frac{1}{4}\Tr \left( \mat{\tilde h}^\intercal \mat{\tilde G'}\right)-\frac{1}{4}\Tr \left( \mat h^\intercal {\mat G}\right)\\
    &=\frac{1}{4}\Tr\left(\mat{\tilde h}_A\mat{\tilde G}_A+\mat{\tilde h}_{\bar{A}} \mat{\tilde G}_{\bar{A}}\right)-\epsilon,
\end{align}
where $\mat{\tilde h}_{A}=\mat{\tilde h}_{\bar{A}}=\epsilon \cosh{2r} \id_2$ are the diagonal blocks of $\mat{\tilde h}$ above.

With these initial states, the minimal energy cost for a pure two-mode squeezed state with entanglement entropy $\Delta S= s_b(\cosh{2r})$ (see \eqref{eq:Sformula}) is
\begin{align}\label{eq:energycost_degenerate}
    \Delta E_{\mathrm{min}}&= 2\epsilon \sinh^ 2{r}\,.
\end{align}
The final state of the source system after the swap is not an energy eigenstate anymore. Its energy variance is
\begin{align}
   \Sigma_E=\frac{\sqrt{\Tr\left(\mat{\tilde h}\mat{\tilde G}'\mat{\tilde h}\mat{\tilde G}'+\mat{\tilde h}\mat{\Omega}\mat{\tilde h}\mat{\Omega}\right)}}{4} =\frac{\epsilon \sinh{2r}}{\sqrt{2}}\,.
\end{align}

We notice that for the degenerate Hamiltonian with $\epsilon_1=\epsilon_2$, the energy cost and energy variance is actually the same for all partner modes with entanglement entropy $\Delta S$. In the next section, we will see that this is different for the case $\epsilon_1<\epsilon_2$.

\subsubsection{General two-mode Hamiltonian ($\epsilon_1\!<\!\epsilon_2$)}\label{sec:two_mode_nondeg_boson}
The case of a general Hamiltonian, with $\epsilon_1 \leq \epsilon_2$, is more involved to analyze because there exist symplectic transformations which leave the covariance matrix of the state invariant but do change the Hamiltonian matrix and vice versa.

Thus, before we apply the squeezing transformation $\mat{M}_r$, we can apply another symplectic transformation $\mat{N}=\exp\left(\theta \mat K\right)$ with\vspace{-2mm}
\begin{align}
   \mat{K}&=\left(\begin{array}{cccc}
0 & 0 & \cos{\phi} & -\sin{\phi}\\
0 & 0 & \sin{\phi} & \cos{\phi}\\
-\cos{\phi} & -\sin{\phi} & 0 & 0\\
\sin{\phi} & -\cos{\phi} & 0 & 0
\end{array}\right)\,,
\end{align}
which leaves $\mat{G}$ invariant, but possibly changes $\mat{h}$. Above general form can be derived by writing $\mat{N}$ as a matrix exponential $\mat{N}=\exp\left(\theta \mat K\right)$ of a generator $\mat{K}$ subject to the constraints $\mat{K}\mat{G}+\mat{G}\mat{K}^\intercal=0$ (leaving $\mat{G}$ invariant) and $\mat{K}\mat{\Omega}+\mat{\Omega}\mat{ K^\intercal}=0$ (being symplectic).

While  $\mat{N}=\exp\left(\theta \mat{K}\right)$ leaves $\mat{G}$ invariant, it changes $\mat{h}$. Consequently, $\mat{T}_r=\mat{M}_r\mat{N}$ provides the most general transformation that turns the covariance matrix from~\eqref{eq:G12} into $\mat{\tilde{G}}=\mat{T}_r\mat{G}\mat{T}_r^\intercal$ in the form of~\eqref{eq:GAplusPARTNER}.
Under this transformation, the Hamiltonian matrix transforms to
\begin{align}
	\mat{\tilde h}=(\mat{T}_r^{-1})^\intercal \mat{h}  \mat{T}_r^{-1}
	=\left(\begin{array}{c|c}
	\mat{h}_A & \mat{h}_{X}\\ [1mm]
	\hline \\ [-4mm]
	\mat{h}_X^\intercal & \mat{h}_{\bar{A}}
	\end{array}\right)\,, \label{eq:h_matrix_optimal_boson}
\end{align}
where the individual blocks are given by
\begin{widetext}
\begin{align}
	\mat{h}_A&=\left(\begin{array}{cc}
	\epsilon_+ \cosh{2 r}-\epsilon_- (\cos{2\theta}+\sin {2\theta}\cos{\phi} \sinh{2r}) & -\epsilon_- \sin{2\theta} \sin{\phi} \sinh{2r} \\
	-\epsilon_- \sin{2\theta} \sin{\phi} \sinh{2r} & \epsilon_+ \cosh{2 r}-\epsilon_- (\cos {2\theta}-\sin {2\theta}\cos{\phi}  \sinh{2r})
	\end{array}\right)\,,\\
	\mat{h}_{\bar{A}}& = \left(\begin{array}{cc}
	\epsilon_+ \cosh{2 r}+\epsilon_- (\cos{2\theta}-\sin {2\theta}\cos{\phi} \sinh{2r}) & \epsilon_- \sin{2\theta} \sin{\phi} \sinh{2r} \\
	\epsilon_- \sin{2\theta} \sin{\phi} \sinh{2r} & \epsilon_+ \cosh{2 r}+\epsilon_- (\cos {2\theta}+\sin {2\theta}\cos{\phi}  \sinh{2r})
	\end{array}\right)\,,\\
	\mat{h}_X&=\left(\begin{array}{cc}
	-\epsilon_+ \sinh{2r} +\epsilon_-\sin{2\theta}\cos{\phi}\cosh{2r} & -\epsilon_-\sin{2\theta}\sin{\phi}\cosh{2r}\\
	\epsilon_-\sin{2\theta}\sin{\phi}\cosh{2r} & \epsilon_+ \sinh{2r}+\epsilon_-\sin{2\theta}\cos{\phi}\cosh{2r}
	\end{array}\right)
\end{align}
\end{widetext}
with $\epsilon_\pm=\frac{1}{2}(\epsilon_2\pm\epsilon_1)$. The entanglement extraction swaps the state of the two modes against the initial state of the target modes. We denote its covariance matrix with respect $A$ and $\bar{A}$ by $\mat{\tilde G'}$ as in~\eqref{eq:tildeGprime}. The energy cost is then given by
\begin{align}
    \Delta E &=\frac{1}{4}\Tr\left(\mat{\tilde h}_A\mat{\tilde G'}_A+\mat{\tilde h}_{\bar{A}}\mat{\tilde G'}_{\bar{A}}\right)-\frac{\epsilon_1+\epsilon_2}2\,.
\end{align}
To minimize the energy  cost we choose the initial states to be the ground states of the single-mode Hamiltonians of $\mat{h_A}$ and $\mat{h_{\bar{A}}}$. This yields
\begin{align}\label{eq:boson_general_DeltaE}
	\Delta E =\frac{1}{2}(\epsilon_A+\epsilon_{\bar{A}}-\epsilon_1-\epsilon_2)\, ,
\end{align}
where the  excitation energies of the single-mode Hamiltonians are given by the symplectic eigenvalues 
of $\mat{h}_A$ and $\mat{h}_{\bar{A}}$, \ie
\begin{widetext}
\begin{align}\label{eq:boson_epsilonA}
	\epsilon_A&=\frac{1}{2}\sqrt{4\epsilon_+^2\cosh^2{2r}+\epsilon_-^2\left(3+\cos{4\theta}-(1-\cos{4\theta})\cosh{4r}\right)-8\epsilon_+\epsilon_-\cos{2\theta}\cosh{2r}}\,,\\
	\epsilon_{\bar{A}}&=\frac{1}{2}\sqrt{4\epsilon_+^2\cosh^2{2r}+\epsilon_-^2\left(3+\cos{4\theta}-(1-\cos{4\theta})\cosh{4r}\right)+8\epsilon_+\epsilon_-\cos{2\theta}\cosh{2r}}\,.
\end{align}
\end{widetext}
Since the energy cost is independent of  the parameter $\phi$,  we only need to compute the minimum with respect to $\theta$, in order to find partner modes which yield the minimal energy cost of entanglement extraction. 
$\Delta E$ only depends on $\theta$ through $\epsilon_A+\epsilon_{\bar{A}}$ and we find that the minimum is attained for $\theta=\frac{\pi}{4}+\frac{n\pi}{2}$ with $n\in\mathbb{Z}$. This value gives $\epsilon_A=\epsilon_{\bar{A}}=\frac{1}{2}\sqrt{\epsilon_1^2+\epsilon_2^2+2\epsilon_1\epsilon_2\cosh{4r}}$. Hence the minimal energy cost for the  extraction of a pure squeezed state with entanglement entropy $\Delta S=S(\cosh{2r})$ is given by
\begin{align}\label{eq:energycost_nondeg}
    \Delta E_{\mathrm{min}}&=\frac12\left(\sqrt{\epsilon_1^2+\epsilon_2^2+2\epsilon_1\epsilon_2\cosh{4r}}-\epsilon_1-\epsilon_2\right)\,.
\end{align}
The energy variance of the final system state is
\begin{align}
    \Sigma_E&=\sqrt{\frac{\epsilon_1^2\epsilon_2^2(1+\cosh{4r})}{\epsilon_1^2+\epsilon_2^2+2\epsilon_1\epsilon_2\cosh{4r}}}\,\sinh{2r}\,.
\end{align}

\begin{figure}[t]
\begin{center}
  \includegraphics[width=\linewidth]{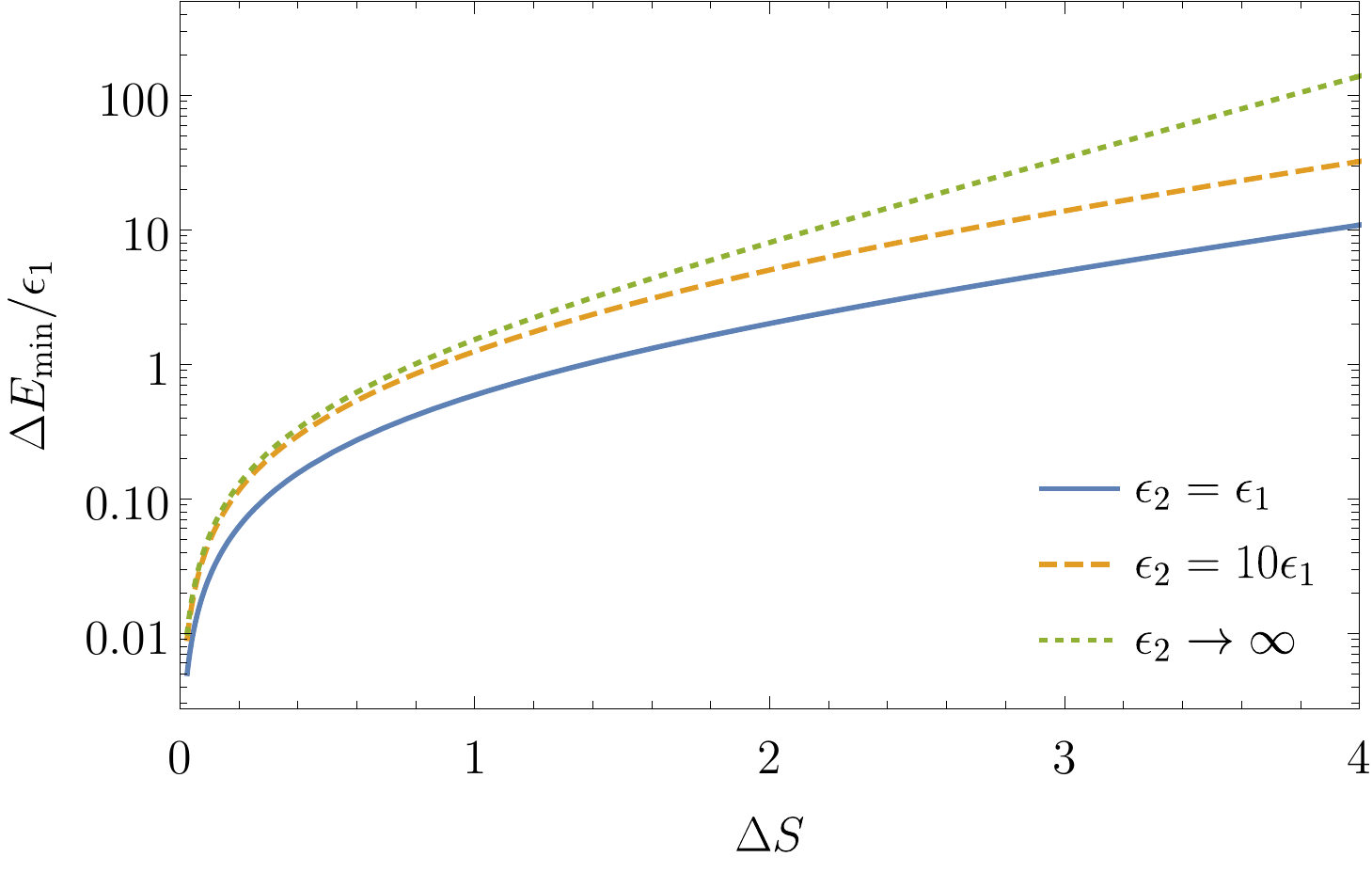}
\end{center}\vspace{-4mm}
\caption{Minimal energy cost $\Delta E_{\mathrm{min}}$ as a function of the extracted entanglement entropy $\Delta S$ from a bosonic, quadratic two-mode Hamiltonian with excitation energies $\epsilon_1$ and $\epsilon_2$.
} 
\label{fig:DeltaEvsDeltaSboson} 
\end{figure}

The minimal energy cost has a couple of interesting properties, which are seen in Figure~\ref{fig:DeltaEvsDeltaSboson}. First, we note that the expression above correctly reduces to the degenerate case for $\epsilon_2=\epsilon_1$.
For large amounts of extracted entanglement, the energy cost grows exponentially as 
\begin{align}
    \Delta E_{\mathrm{min}}\sim \ee^{2r}\sim e^{\Delta S}\quad\text{as}\quad r\to\infty\,,
\end{align}
where we used the asymptotics $\Delta S\sim 2 r$ as $r\to\infty$. For small $r$, we find
\begin{align}
    \Delta E_{\mathrm{min}}\sim \frac{4\epsilon_1\epsilon_2}{\epsilon_1+\epsilon_2}r^2+\mathcal{O}(r^4)\quad\text{as}\quad r\to 0\,,
\end{align}
which should be seen in the context of the asymptotics $\Delta S\sim r^2 \left(1-\log \left(r^2\right)\right)$ as $r\to 0$.

Furthermore, the energy cost is a monotonously increasing function in both excitation energies. This means, for example that $\Delta E_{\mathrm{min}}$ increases when $\epsilon_2$ is increased and $\epsilon_1$ is kept fixed. However, there exists an upper bound on the energy cost in this case, which is determined by the lower excitation energy
\begin{align}
    \lim_{\epsilon_2\to\infty} \Delta E_{\mathrm{min}} = \epsilon_1 \left(\sinh 2r\right)^2\,.
\end{align}
Comparing the minimal energy cost  for a non-degenerate Hamiltonian $\Delta E_{\mathrm{min}, \epsilon_2>\epsilon_1}$ with the cost for a degenerate Hamiltonian $\Delta E_{\epsilon_2=\epsilon_1}$, we find that the ratio between them asymptotes to
\begin{align}
    \lim_{r\to\infty}\frac{\Delta E_{\mathrm{min},\epsilon_2>\epsilon_1}}{\Delta E_{\epsilon_2=\epsilon_1}}=\sqrt{\frac{\epsilon_2}{\epsilon_1}}\,.
\end{align}
Another interesting figure of merit is to look at the ratio between a given amount of extracted entanglement and the minimal energy cost necessary for its extraction $\Delta S/\Delta E_{\mathrm{min}}$.
Since $\Delta S=S(\cosh{2r})\sim (1-2\log(r)) r^2$ as $r\to0$, the ratio diverges in the limit of small amounts of extracted entanglement, \ie we have
\begin{align}
    \lim_{r\to0}\frac{\Delta S}{\Delta E_{\mathrm{min}}}=\infty\,.\label{eq:limDeltaSperDeltaE}
\end{align}
As the extracted entanglement increases the ratio is strictly decreasing, and in the limit of large entanglement $r\to\infty$ it decays exponentially. This means that it requires more energy to extract an amount of entanglement $\Delta S$ from a system once, than twice time the energy required to extract half the amount $\Delta S/2$.

\begin{figure}[t]
\begin{center}
  \includegraphics[width=\linewidth]{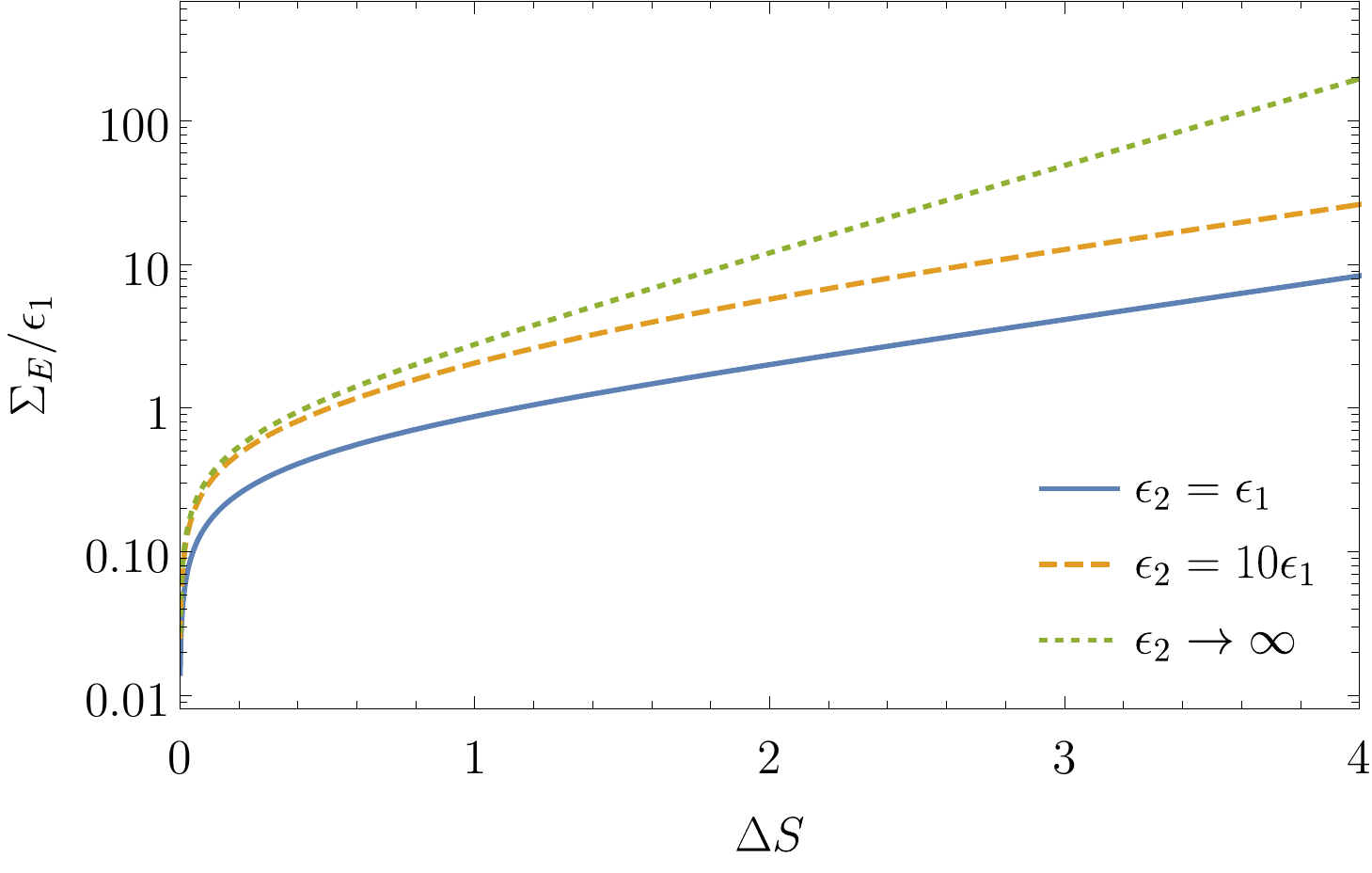}
\end{center}\vspace{-4mm}
\caption{
We show the energy variance $\Sigma_E$ as a function of the extracted entanglement entropy $\Delta S$ from a bosonic, quadratic two-mode Hamiltonian with excitation energies $\epsilon_1$ and $\epsilon_2$
}
\label{fig:SigmaEvsDeltaSboson} 
\end{figure}

\subsection{Optimal modes in large source systems}\label{sec:optimal_modes}
Having understood how to minimize the energy cost of entanglement extraction from a two-mode source system, allows us to also understand how to optimally extract entanglement from large source systems.

The partner modes from which entanglement can be extracted for the least possible energy cost are linear combinations of the two lowest energy eigenmodes of the system only.
No other choice of partner modes can have a lower energy cost, because the excitation energies $\epsilon_1$ and $\epsilon_2$ of the restricted two-mode Hamiltonian $\hat H_{A\bar A}$ are restricted by the excitation energies of $\hat H$ by
\begin{align}\label{eq:spec_restrictions_boson}
    \omega_1\leq\epsilon_1\leq\omega_{N-1}\,,\quad \omega_2\leq\epsilon_2\leq \omega_N
\end{align}
as discussed in Appendix~\ref{app:spectrumproof}.

These modes can be constructed according to above derivation. Take the two lowest energy eigenmodes $(\hat q_1,\hat p_1,\hat q_2, \hat p_2)$ of the system and mix them by acting on them with the transformation
\begin{align}
\exp\left(\frac\pi4\mat K\right)=
\left(
\begin{array}{cccc}
 \frac{1}{\sqrt{2}} & 0 & \frac{\cos \phi}{\sqrt{2}} & -\frac{\sin \phi}{\sqrt{2}} \\
 0 & \frac{1}{\sqrt{2}} & \frac{\sin \phi}{\sqrt{2}} & \frac{\cos \phi}{\sqrt{2}} \\
 -\frac{\cos \phi}{\sqrt{2}} & -\frac{\sin \phi}{\sqrt{2}} & \frac{1}{\sqrt{2}} & 0 \\
 \frac{\sin \phi}{\sqrt{2}} & -\frac{\cos \phi}{\sqrt{2}} & 0 & \frac{1}{\sqrt{2}} \\
\end{array}
\right)
\end{align}
where $\phi$ can be chosen freely. The two obtained modes need to by squeezed by acting with $\mat M_r$ on them, in order to obtain the globally optimal partner modes of the source system.

In many scenarios the globally optimal modes may not be accessible to the experimenter, but they may be restricted to use a particular pair of partner modes $(\hat Q_A,\hat P_A,\hat Q_{\bar A},\hat P_{\bar A})$ as source modes. 
In this case, applying results of the previous section to the two-mode restriction $\hat H_{A\bar A}$ yields a lower bound on the energy cost of entanglement extraction, as discussed in Section \ref{sec:energycostofEE}.

However, the modes $(\hat Q_A,\hat P_A,\hat Q_{\bar A},\hat P_{\bar A})$ can only achieve the minimal energy cost $\Delta E_{\mathrm{min}}$ which  the spectrum of $\hat H_{A\bar A}$ allows for according to \eqref{eq:energycost_nondeg}, if they exactly correspond to the optimal modes with respect to $\hat H_{A\bar A}$ as we constructed them in the derivation of the previous section.

To check if this is the case it is sufficient to check if $\hat H_{A\bar A}$ acts symmetrically on the two modes $A$ and $\bar A$ because this is exactly what happens for the Hamiltonian matrix in \eqref{eq:h_matrix_optimal_boson} when $\theta=\frac\pi4+\frac{n\pi}2$ is chosen optimally.

The Hamiltonian matrix of $\hat H_{A\bar A}$ with respect to the basis $(\hat Q_A,\hat P_A,\hat Q_{\bar A},\hat P_{\bar A})$ is conveniently calculated using an inner product on $V^*$ which is induced by $h_{ab}$. To do this, we denote
\begin{align}
    \hat Q_A&=x_a\hat\xi^a\,,\qquad \hat P_A=k_a\hat\xi^a\,,\\
    \hat Q_{\bar{A}}&=\bar{x}_a\hat\xi^a,\qquad \hat P_{\bar{A}}=\bar{k}_a\hat\xi^a\,.
\end{align}
and assume, without loss of generality, that the modes are in standard form as in Section \ref{sec:partnermode}. With respect to this basis, the quadratic part of $\hat H_{A\bar A}$ is represented by 
\begin{align}\label{eq:hmatrix_partnermodes}
\left(\begin{array}{cc|cc} 
\check{h}(k,k)&-\check{h}(k,x) &\check{h}(k,\bar{k}) &-\check{h}(k,\bar{x}) \\
-\check{h}(x,k)& \check{h}(x,x)&-\check{h}(x,\bar{k}) & \check{h}(x,\bar{x})\\ \hline
\check{h}(\bar{k},k)&-\check{h}(\bar{k},x) &\check{h}(\bar{k},\bar{k}) &-\check{h}(\bar{k},\bar{x}) \\
-\check{h}(\bar{x},k)&\check{h}(\bar{x},x) & -\check{h}(\bar{x},\bar{k})&\check{h}(\bar{x},\bar{x})
\end{array}\right)\,,
\end{align}
with the inner product $\check h: V^*\times V^*\to\mathbb{R}$ with
\begin{align}
    \check{h}(w,u) = {h}_{ab}\Omega^{ac}\Omega^{bd}w_cu_d= {h}_{ab}G^{ac}G^{bd}w_cu_d\,,
\end{align}
induced by $h_{ab}$ of the full Hamiltonian. Denoting  $(J^\intercal w)_a=(J^\intercal)_a{}^b r_b$, the inner product satisfies
\begin{align}
\check{h}(J^\intercal w,J^\intercal u)= \check{h}(w,u)\,,
\end{align}
which just implies that $J^\intercal$ is orthogonal with respect to $\check{h}$ in the same way as $J$ is orthogonal with respect to $h$.
In particular, this implies $\check{h}(w,J^\intercal w)=0$, such that the matrix expressions simplify
\begin{align}
\check{h}(\bar{x},\bar{x})&= \check{h}(x,x)+ \Delta\,, \\
\check{h}(\bar{k},\bar{k})&=\check{h}(k,k)+ \Delta\,, \\
\check{h}(\bar{x},\bar{k})&=-\check{h}(x,k)\,, \\
\check{h}(x,\bar{x})&= \frac{\cosh(2r)\,\check{h}(x,x)+\check{h}(x,J^\intercal k)}{\sinh(2r)}\,, \\
\check{h}(k,\bar{k}) &= - \frac{\cosh(2r)\,\check{h}(k,k)\!+\!\check{h}(x,J^\intercal k)}{\sinh(2r)}\,, \\
\check{h}(x,\bar{k})&=-\frac{\cosh(2r)}{\sinh(2r)}\, \check{h}(x,k)=-\check{h}(k,\bar{x})\,,
\end{align}
where
\begin{align}\label{eq:Delta}
\Delta&=\frac{\check{h}(x,x)+\check{h}(k,k)+2\cosh(2r)\, \check{h}(x,J^\intercal k)}{\sinh^2(2r)}\,.
\end{align}
For $\Delta=0$, the Hamiltonian matrix is symmetric with respect to the two partner modes. This indicates that the partner modes are in fact optimal modes inside the two-mode subspace on which $\hat H_{A\bar A}$ acts and they achieve the minimal energy cost. Why this is the case is further illuminated by the following analysis of the case  where $\Delta$ does not vanish

To analyze the general case of $\Delta\neq0$ it is useful to reverse-engineer the derivation of the previous section. 
There, we showed that all partner modes in the subspace of $\hat H_{A\bar A}$ are connected to the eigenmodes of $\hat H_{A\bar A}$ through a transformation $\mat T_r=\mat M_r \mat N_\theta$ which is the product of an orthogonal transformation $\mat N_\theta=\exp\left(\theta\mat K\right)$ and a squeezing operation $\mat M_r$.

If we apply the first orthogonal transformation $\mat N_\theta$ to the energy eigenmodes we obtain a pair of modes which still diagonalize the covariance matrix $\mat G$. This pair of modes we also obtain by acting with the inverse squeezing transformation $\mat M_{-r}$ on the partner modes, \ie they are the unsqueezed modes $(y,l,z,m)$ as defined in \eqref{eq:unsqueezedmodes}.

Now we are able to express the matrix representing $\hat H_{A\bar A}$ in two different ways. On the one hand, we obtain it starting from its diagonal form $\mat h$ in \eqref{eq:h_diag} for the energy eigenmodes, and on the other hand we can express its entries using the form $\check h$ and the expressions \eqref{eq:unsqueezedmodes} for the unsqueezed modes.
\begin{align}
\begin{split}
   & \left(\mat N_\theta^{-1}\right)^\intercal \mat h \mat N_\theta^{-1}= \\
&
\left(\begin{array}{cc|cc} 
\check{h}(l,l)&-\check{h}(l,y) &\check{h}(l,m) &-\check{h}(l,z) \\
-\check{h}(y,l)& \check{h}(y,y)&-\check{h}(y,m) & \check{h}(y,z)\\ \hline
\check{h}(m,l)&-\check{h}(m,y) &\check{h}(m,m) &-\check{h}(m,z) \\
-\check{h}(z,l)&\check{h}(z,y) & -\check{h}(z,m)&\check{h}(z,z)
\end{array}\right)\,.
\end{split}
\end{align}
From this equation the values of the parameters $\phi$ and $\theta$ which yield the transformation $\mat N$ from the eigenmodes of $\hat H_{A\bar A}$ to the unsqueezed modes of $A$ and $\bar A$ as well as the spectrum of $\hat H_{A\bar A}$ can be determined. The latter is given by
\begin{align}
    \epsilon_+&=\frac{ \check h(x,x)+\check h(k,k) + \Delta  }{2 \cosh{2r}}\,, \\
    \epsilon_- &=\sqrt{\frac{4 \check h(x,k)^2+\left( \check h(x,x)-\check h(k,k)\right)^2}{4\sinh^2(2r)}+\frac{ \Delta^2}4}\,,
\end{align}
with $\epsilon_\pm=\frac12(\epsilon_2\pm\epsilon_1)$ as before. The parameter values are determined by
\begin{align}
    \tan\phi&=\frac{2\check h (x,k)}{\check h(x,x)-\check h(k,k)}\,,\\
    \tan2\theta&=\frac{\sqrt{4\epsilon_-^2-\Delta^2}}{\Delta}\,,
\end{align}
or, alternatively, $\Delta=2\epsilon_-\cos2\theta$.

With these parameters, we can calculate the minimal energy cost $\Delta E$ of entanglement extraction from a pair of given partner modes $(x,k\bar,x,\bar k)$ using \eqref{eq:boson_general_DeltaE}. The energy cost lies in the interval
\begin{align}
    \Delta E_{\theta=\pi/4}\leq\Delta E\leq\Delta E_{\theta=0}\,,
\end{align}
which is bounded by the minimal energy cost $\Delta E_{\min}$, achieved by $\theta=\frac\pi4+\frac{n\pi}2$,
\begin{align}
    \Delta E_{\theta=\pi/4}=
    \Delta E_{\mathrm{min}}=\sqrt{\epsilon_+^2\cosh^2{2r}-\epsilon_-^2\sinh^22r}
\end{align}
and the energy cost for a pair of partner modes obtained with the least favorable choice for the parameter $\theta=0+\frac{n\pi}2$, which is
\begin{align}
    \Delta E_{\theta=0}=2\epsilon_+\sinh^2{r}\,.
\end{align}

In fact, the last equality implies an upper bound on the energy cost of entanglement extraction with arbitrary partner modes of the system. Due to \eqref{eq:spec_restrictions_boson}, we have
\begin{align}\label{eq:upper_bound_cost_boson}
    \Delta E_{\theta=0}
    \leq 2\omega_N \sinh^2r=:\Delta E_{\mathrm{max}}\,.
\end{align}
This means that for any pair of partner modes in the system the  energy cost $\Delta E$ (see \eqref{eq:energycost} and \eqref{eq:product_in_energycost}) of swapping their entangled state against the product of their one-mode restricted ground states is upper bounded by the minimal cost for a degenerate Hamiltonian (see \eqref{eq:energycost_degenerate}) with only the system's largest excitation energy $\omega_N$ in its spectrum.

Finally, we note that the analysis above allows us to construct from a given pair of partner modes $(\hat Q_A,\hat P_A,\hat Q_{\bar A},\hat P_{\bar A})$ the pair of modes which are optimal for extracting entanglement from the two-mode subspace of $\hat H_{A\bar A}$. For this we only need to correct for the mismatch between $\mat N_\theta$ corresponding to the given modes, and $\mat N_{\theta=\pi/4}$ which yields the optimal pair. So to obtain the ideal modes we only need to act with the transformation
\begin{align}
    \mat M_r \exp\left( (\frac\pi4-\theta)\mat K\right)\mat M_{-r}
\end{align}
on the modes $(\hat Q_A,\hat P_A,\hat Q_{\bar A},\hat P_{\bar A})$.

\section{Quadratic fermionic systems and their partner mode construction}\label{sec:fermion_review}
We review the most important definitions of quadratic fermionic systems and fermionic Gaussian states. More detailed reviews can be found in~\cite{eisert_colloquium_2010,plenio_entropy_2005,casini_entanglement_2009}, but we follow mostly the conventions introduced in~\cite{vidmar_entanglement_2017,hackl_aspects_2018}. In particular, defining fermionic Gaussian states using linear complex structures emphasizes the similarities to the bosonic case. The key difference between bosonic and fermionic Gaussian states is that $\Omega$ and $G$ exchange their roles. From a group theoretic perspective, this implies that the symplectic group $\mathrm{Sp}(2N,\mathbb{R})$ preserving $\Omega$ (governing commutation relations) is now replaced by the orthogonal group $\mathrm{O}(2N)$ preserving a positive definite metric $G$ (governing anti commutation relations).

\subsection{Classical and quantum theory}
As for bosons, we can formally define a classical phase space $V\simeq\mathbb{R}^{2N}$ and its dual $V^*\simeq\mathbb{R}^{2N}$ for a system with $N$ fermionic degrees of freedom. In contrast to bosons, $V$ and $V^*$ are now equipped with a positive-definite bilinear form $G^{ab}: V^*\times V^*\to\mathbb{R}$ and its inverse $G^{-1}_{ab}: V\times V\to\mathbb{R}$ satisfying $G^{ac}G^{-1}_{cb}=\delta^a{}_b$.

We can introduce an orthonormal basis of linear observables $q_i,p_i\in V^*\subset C^{\infty}(V)$, such that $G(p_i,p_j)=0$ and $G(q_i,q_j)=G(p_i,p_j)=\delta_{ij}$. As in the bosonic case,  $q_i$ and $p_i$ yield a component representation of a vector $v^a\in V$ as
\begin{align}
    v^a\equiv(q_1(v),p_1(v),\dots,q_N(v),p_N(v))\,.
\end{align}
With respect to this basis, the bilinear form $G^{ab}$ takes the standard form $G^{ab}\equiv\mathbb{1}$. We use $\xi^a: V\to V: v^a\mapsto \xi^a(v)=v^a$ in the same way as for bosons. However, the fermionic classical observables are not smooth functions on $V$, but rather the Grassmann algebra generated by $\xi^a$, \ie a formal series of anti-symmetrized powers in $\xi^a$.

When we quantized the bosonic system, we used the classical Poisson brackets encoded in $\Omega^{ab}$. For fermions, we implement the canonical anti-commutation relations
\begin{align}
    [\hat{\xi}^a,\hat{\xi}^b]_+=\hat{\xi}^a\hat{\xi}^b+\hat{\xi}^b\hat{\xi}^a=\{\xi^a,\xi^b\}_{+}=G^{ab}\,,
\end{align}
where $\{\xi^a,\xi^b\}_+=G^{ab}$ represents the classical fermionic Poisson brackets.

\subsection{Fermionic Gaussian states}
Fermionic Gaussian states appear under various names in the literature, ranging from fermionic vacua to Slater determinant states. There is no Gaussian wave function representations of them, but they share many properties with their bosonic counterparts. They form the set of ground states of quadratic Hamiltonians and also---in contrast to bosonic Gaussian states---their higher excited eigenstates. In the context of fermionic Bogoliubov transformations, they appear as vacua of annihilation operators satisfying fermionic anti commutation relations. Finally, their $n$-point correlation functions are completely determined from their 2-point correlation function. 

We label a fermionic Gaussian state $\ket{\Omega}$ by its covariance matrix
\begin{align}
    \Omega^{ab}=\bra{\Omega}\hat{\xi}^a\hat{\xi}^b-\hat{\xi}^b\hat{\xi}^a\ket{\Omega}\,,
\end{align}
which is antisymmetric, rather than symmetric as in the bosonic case. Note that there is no linear displacement for physical fermionic Gaussian states, \ie we require $\bra{\Omega}\hat{\xi}^a\ket{\Omega}=0$.
We can use $\Omega$ to construct a linear complex structure
\begin{align}
    J^a{}_b=\Omega^{ac}G^{-1}_{cb}=-G^{ac}\Omega^{-1}_{bc}\,,
\end{align}
that satisfies $J^2=-\mathbb{1}$ just as for bosons. Moreover, we have the condition $\frac{1}{2}(\delta^a{}_b+\ii J^a{}_b)\hat{\xi}^b\ket{\Omega}=0$, analogously to~\eqref{eq:Jproj}.

In contrast to bosons,  there is no displacement vector $z^a$ for fermions. Hence,  fermionic $n$-point functions are directly defined as
\begin{align}
    C_n^{a_1\cdots a_n}=\bra{\Omega}\hat{\xi}^{a_1}\dots\hat{\xi}^{a_n}\ket{\Omega}\,.
\end{align}
Apart from this, the definition of the two-point function is unchanged:
\begin{align}
    C_2^{ab}=\frac{1}{2}(G^{ab}+\ii\Omega^{ab})\,.
\end{align}
Wick's theorem applies  in almost the same way
\begin{align}
    C^{a_1\cdots a_n}_{n}&=\sum\text{(contractions of $C_2^{ab}$)}\\
    &=\sigma_{(a_1,\dots,a_n)}C^{a_1a_2}_2\cdots C^{a_{n-1}a_{n}}_2+\cdots\,,
\end{align}
however, each contraction enters with a positive or negative sign, depending on the parity $\sigma_{(a_1,\dots,a_n)}\in\{-1,1\}$ of its index ordering. This subtlety needs to be taken into account carefully when applying  Wick's theorem to fermions.

\paragraph{Example.} Looking at the ground state $\ket{\Omega}$ of the fermionic oscillator $\hat{H}=\frac{\ii}{2}h_{ab}\hat{\xi}^a\hat{\xi}^b=\ii \hat{q}\hat{p}$. The covariance matrix $\Omega^{ab}$ and the metric $G^{ab}$ with respect to the basis $\hat{\xi}^a\equiv(\hat{q},\hat{p})^\intercal$ are given by
\begin{align}
    \Omega^{ab}\equiv\begin{pmatrix}
    0 & 1\\
    -1 & 0
    \end{pmatrix}\quad\text{and}\quad G^{ab}\equiv\begin{pmatrix}
    1 & 0\\
    0 & 1
    \end{pmatrix}\,.
\end{align}
Just as in the bosonic case, we can construct the linear complex structure
\begin{align}
    J^a{}_b=\Omega^{ac}G^{-1}_{cb}\equiv\begin{pmatrix}
    0 & 1\\
    -1 & 0
    \end{pmatrix}\,.
\end{align}
With this, the fermionic projector becomes
\begin{align}
    \frac{1}{2}(\delta^a{}_b+\ii J^a{}_b)\hat{\xi}^b\equiv\frac{1}{2}\begin{pmatrix}\hat{q}+\ii \hat{p}\\\hat{p}-\ii \hat{q}\end{pmatrix}=\frac{1}{\sqrt{2}}\begin{pmatrix}\hat{a}\\-\ii \hat{a}\end{pmatrix}\,,
\end{align}
\ie the condition $\frac{1}{2}(\delta^a{}_b+\ii J^a{}_b)\hat {\xi}^b\ket{\Omega}=0$ is the same as for bosons. The projector selects the subspace spanned by complex linear combinations of annihilation operators $\hat{a}=\frac{1}{\sqrt{2}}(\hat{q}+\ii\hat{p})$.

\subsection{Quadratic fermionic Hamiltonians}
We are considering the most general quadratic fermionic Hamiltonian given by
\begin{align}
    \hat{H}=\frac{\ii}{2}h_{ab}\hat{\xi}^a\hat{\xi}^b\,,
\end{align}
where  $h_{ab}$ is antisymmetric and non-degenerate. Note that $\hat{H}$ is always bounded from below.

The ground state of $\hat H$ is given by a Gaussian state $\ket{\Omega}$. As for bosons, we can compute the ground state covariance matrix from $h_{ab}$:
\begin{itemize}
    \item \textbf{Computation of $\Omega^{ab}$}\\
    Similar to~\eqref{eq:groundstate}, we compute
    \begin{align}\label{eq:ferm_groundstate}
        \Omega^{ab}=K^a{}_c|K^{-1}|^c{}_dG^{db}
    \end{align}
    with $K^a{}_b=G^{ac}h_{cb}$.
\end{itemize}

The energy expectation value $E=\bra{\Omega'}\hat H\ket{\Omega'}$ and its variance $\Sigma_E$ for an arbitrary Gaussian state $\ket{\Omega'}$ can be efficiently computed as \begin{align}
    E&=-\frac{1}{4}h_{ab}{\Omega'}^{ab}=-\frac{\Tr(h {\Omega'}^\intercal)}{4}\,,\\
    \Sigma_E&=\frac{\sqrt{-\Tr(hGhG+h {\Omega'}h {\Omega'})}}{4}\,.
\end{align}

Again, we note that the symplectic form $\Omega^{ab}$ and the positive definite metric $G^{ab}$ switched their roles with respect to the bosonic formulae in Section \ref{sec:boson_review}.

\paragraph{Example.} We consider the harmonic oscillator of a single fermionic mode $\hat{H}=\frac{\ii}{2}h_{ab}\hat{\xi}^a\hat{\xi}^b=\ii\hat{q}\hat{p}$. The bilinear form $h$ with respect to $\hat{\xi}^a\equiv(\hat{q},\hat{p})$ is
\begin{align}
    h_{ab}\equiv\begin{pmatrix}
    0 & \omega\\
    -\omega & 0
    \end{pmatrix}\,,
\end{align}
which gives via~\eqref{eq:ferm_groundstate} rise to
\begin{align}
    K^a{}_b\equiv\begin{pmatrix}
    0 & \omega\\
    -\omega &0
    \end{pmatrix}\,\,\Rightarrow\,\,\Omega^{ab}\equiv \begin{pmatrix}
    0 & 1\\
    -1 & 0
    \end{pmatrix}\,,
\end{align}
where we used $|K^{-1}|=\id/\omega$, just like in the bosonic case. The ground state energy is given by $E=\bra{\Omega}\hat{H}\ket{\Omega}=-\omega/2$ and expressing the Hamiltonian in creation and annihilation operators gives $\hat{H}=\frac{\omega}{2}(\hat{a}^\dagger\hat{a}-1)$.

\subsection{Entanglement of fermionic Gaussian states}
Given a fermionic Gaussian state $\ket{\Omega}$ and a system decomposition into two even dimensional orthogonal complements $V=A\oplus B$, we can always choose an orthonormal basis $(q_1^A,p_1^A,\dots,p_{N_A}^A,q^A_{N_A},q_1^B,p_1^B,\dots,p_{N_B}^B,q^B_{N_B})$, such that the covariance matrix takes the standard form derived in appendix~\ref{app:standardform-fermions}
\begin{widetext}
\begin{align}
	\Omega\equiv\left(\begin{array}{ccc|cccccc}
	\mat{cos}_1 & \cdots & 0 & \mat{sin}_1 & \cdots & 0 & 0 &\cdots & 0\\
	\vdots & \ddots & \vdots & \vdots & \ddots & \vdots & \vdots & \ddots & \vdots \\
	0 & \cdots & \mat{cos}_{N_A} & 0 & \cdots &\mat{sin}_{N_A} & 0 & \cdots & 0\\
	\hline
	-\mat{sin}_1 & \cdots & 0 & \mat{cos}_1 & \cdots & 0 & 0 &\cdots & 0\\
	\vdots & \ddots & \vdots & \vdots & \ddots & \vdots & \vdots & \ddots & \vdots\\
	0 & \cdots & -\mat{sin}_{N_A} & 0 & \cdots &\mat{cos}_{N_A} & 0 &\cdots & 0\\
	0 & \cdots & 0 & 0 & \cdots & 0 & \mathbb{A}_2 & \cdots & 0\\
	\vdots & \ddots & \vdots & \vdots & \ddots & \vdots & \vdots & \ddots & \vdots \\
	0 & \cdots & 0 & 0 & \cdots & 0 & 0 & \cdots & \mathbb{A}_2
	\end{array}\right)\,,\label{eq:sta_ferm}
\end{align}
\end{widetext}
where $\mathbb{A}_2$, $\mat{cos}_i$ and $\mat{sin}_i$ are given by
\begin{align}
    \mathbb{A}_2 &=\left(\begin{array}{cc}
	0 & 1\\
	-1 & 0
	\end{array}\right)\,,\\
	\mat{cos}_i &=\left(\begin{array}{cc}
	0 & \cos{2r_i}\\
	-\cos{2r_i} & 0
	\end{array}\right)\,,\\
	\mat{sin}_i &=\left(\begin{array}{cc}
	0 & \sin{2r_i}\\
	\sin{2r_i} & 0
	\end{array}\right)\,.\label{eq:standardform-fermions}
\end{align}
This decomposition is analogous to the bosonic one in~\eqref{eq:sta}, where the mode $(q_i^A,p_i^A)$ is entangled with the mode $(q_i^B,p_i^B)$. The parameters $r_i$ can always be chosen to lie in the interval $[0,\pi/4]$. Again, the amount of entanglement is encoded in the parameter $r_i$, but now given by~\cite{banuls_entanglement_2007}
\begin{align}
    S_A(\ket{\Omega})=\sum^{N_A}_{i=1}s_f(\cos{2r_i})
\end{align}
with $s_f(x)=-\frac{1+x}{2}\log\left(\frac{1+x}{2}\right)-\frac{1-x}{2}\log\left(\frac{1-x}{2}\right)$. 
As for bosons, there exists a compact formula to express the entanglement entropy in terms of the linear complex structure $J$, namely
\begin{align}
    \textstyle S_A(\ket{G,z})=-\mathrm{Tr}\left[\left(\frac{\mathbb{1}_A+\ii[J]_A}{2}\right)\log\left(\frac{\mathbb{1}_A+\ii[J]_A}{2}\right)\,\!\right]\,,\label{eq:SJfermions}
\end{align}
which closely resembles~\eqref{eq:SJbosons}, but where  the modulus is replaced by an overall minus sign.

\subsection{Fermionic partner mode construction}
The standard form of the fermionic covariance matrix introduced above implies that also for a system of fermionic modes in an overall pure state, each single mode has a unique partner with which it is entangled.

In fact the partner mode for bosonic modes can be almost directly carried over to the fermionic case. Only the hypergeometric functions need to be replaced by their trigonometric analogues. If $(x_a,k_a)$ define a mode in standard form
\begin{align}
    G^{ab}x_ax_b=G^{ab}k_ak_b=1\,, \\
    G^{ab}x_ak_b=0\,,\\
    \Omega^{ab}x_ak_b=\cos2r\,,
\end{align}
then its partner mode is given by
\begin{align}\label{eq:partnermode_formula_ferm}
    \bar x_a&=\cot(2r) x_a+\frac1{\sin2r} \left(J^\intercal\right)_a^c k_c\,, \\
    \bar k_a&= -\cot(2r) k_a+\frac1{\sin2r}\left(J^\intercal\right)_a^c x_c\,.
\end{align}
This means that the partner mode is in standard form itself
\begin{align}
G^{ab}\bar{x}_a\bar{x}_b=G^{ab}\bar{k}_a \bar{k}_b=1\,,\\
G^{ab}\bar x_a\bar k_b=0\,,\\
\Omega^{ab}\bar{x}_a \bar{k}_k=\cos 2r\,,
\end{align}
and the correlations with the original mode are as required by the standard form
\begin{align}
G^{ab}x_a \bar{x}_b=G^{ab}k_a \bar{k}_b=G^{ab}x_a \bar{k}_b= 0\,\\
\Omega^{ab} x_a \bar{k}_b = \Omega^{ab} k_a \bar{x}_b=\sin2r\,.
\end{align}

Just as in the bosonic case, also for fermionic modes the partner of the partner mode is the original mode itself.
Also, we observe that any other mode $(x',k')$ which anti-commutes with $(x,k)$, has an anti-commutatator with the partner mode proportional to the covariance with the first mode, \ie
\begin{align}
    G^{ab}x'_a\bar x_b= \cot2r\, G^{ab} x'_a x_b-\frac1{\sin2r} \Omega^{ab}x'_a k_b\,.
\end{align}
This means that any mode which anticommutes with both partner modes is not correlated with them. In particular, a basis of modes which contains the two partner modes brings the covariance matrix of the state into block diagonal form
\begin{align}
    \Omega\equiv \left(\begin{array}{c|c}
         \mat{\tilde \Omega}&0  \\ \hline
         0&\mat\Omega_R 
    \end{array}\right)
\end{align}
where
\begin{align}\label{eq:OmegaAplusPARTNER}
    \mat{\tilde\Omega}=\left( \begin{array}{cccc}
         0&\cos 2r &0 &\sin2r  \\ -\cos2r&0&\sin2r\\
         0&-\sin2r&0&\cos2r\\ -\sin2r&0&-\cos2r
    \end{array}\right)
\end{align}
takes the standard form of \eqref{eq:sta_ferm}.

Also fermionic partner modes can be unsqueezed to find a pair of modes $(y,l)$ and $(z,m)$ which are in a pure product state and from which $(x,k)$ and $(\bar x,\bar k)$ are obtained by squeezing. These modes are given by:
\begin{align}
    y_a&= \cos(r) x_a-\sin(r)\bar x_a=\frac{  x_a-(J^\intercal)_a{}^b k_b }{2\cos r}\,, \\
    l_a&= \cos(r) k_a+\sin(r) \bar k_a=\frac{ k_a+(J^\intercal)_a{}^b x_b }{2\cos r} \,,\\
    z_a&= \sin(r) x_a +\cos(r) \bar x_a=\frac{ x_a+(J^\intercal)_a{}^b k_b }{2 \sin r}\,,\\
    m_a&=-\sin(r) k_a +\cos(r)\bar k_a=\frac{ -k_a+(J^\intercal)_a{}^b x_b }{2 \sin r}\,.
\end{align}

\section{Entanglement extraction from quadratic fermionic systems}\label{sec:fermion_extraction}
In this section, we analyze the minimal energy cost of pure state entanglement extraction from the ground state of a system of fermionic modes with a quadratic Hamiltonian. 
Our approach is motivated by the same framework as discussed for bosonic modes in Section \ref{sec:boson_extraction_wholesection}. Again, after reviewing the extraction procedure in section~\ref{sec:fermionic_extraction_procedure}, our proof follows exactly the same three steps, split over the sections~\ref{sec:fermion_proof1},~\ref{sec:twofermionicmodes} and~\ref{sec:fermion_proof3}.

\subsection{Pure state entanglement extraction from bosonic modes}\label{sec:fermionic_extraction_procedure}
We assume that the source system consists of fermionic modes which are in the ground state $\ket\Psi_S$ of an arbitrary quadratic Hamiltonian $\hat H$, which is Gaussian.
Within the source system, we select two anti-commuting source modes from which we want to extract entanglement.
For this we assume that it is possible to swap the state of the two source modes onto a pair of target modes outside of the system.
We can always use Gaussian evolution, \ie quadratic coupling terms, to swap the entangled Gaussian state in the source system with the unentangled Gaussian state in the target system.
The target modes can be prepared in an arbitrary product state. Hence the joint initial state of target modes and source system is of the form
\begin{align}
    \rho_i=\sigma_1\otimes \tau_2\otimes \ket\Psi\bra\Psi_S.
\end{align}

Just as in the bosonic case, also for fermionic modes the partner mode structure of entanglement of Gaussian states implies that the only way to extract a pure entangled state in this way is to chose the two source modes to be partner modes which we denote by $A$ and $\bar A$. 
This is also the optimal choice in the sense that any entanglement measure of one mode with with a second non-partner mode is never larger than of the mode with its partner.

The ground state of the source system is a product state $\ket\Psi=\ket\psi_{A\bar A}\otimes \ket\phi_R$ between the partner modes and the remainder $R$ of the system.
After the entanglement extraction, \ie after the states of the partner modes and the target modes have been swapped, the source system is in the state
\begin{align}
    \rho'_S=\sigma_A\otimes\tau_{\bar A}\otimes \ket\phi\bra\phi_R
\end{align}
where now the partner modes are in the product state given by the initial state of the target modes.

\subsection{Energy cost of entanglement extraction}\label{sec:fermion_proof1}
By the minimal energy cost of entanglement extraction we are referring to the minimum value by which the expectation value of $\hat H$ is increased, \ie by how much energy on average is injected into the system when the modes are swapped.
Since we have used the same notation, the discussion and formulae from Section~\ref{sec:energycostofEE} apply equally to the fermionic case, as considered here. In particular, the energy cost for the extraction of a fermionic pair of partner modes is given by
\begin{align}\label{eq:energycostfermion}
    &\Delta E=\Tr\left(\sigma_A\otimes\tau_{\bar A} \hat H_{A\bar A} -\ket\psi\bra\psi_{A\bar A}\hat H_{A\bar A}\right)\,.
\end{align}
It results from the restriction $\hat H_{A\bar A}$ of the full Hamiltonian onto the space spanned by the two partner modes. 
Thus, again the problem of finding the minimal energy cost reduces to a two-mode problem.

Also as before, we find that the final energy expectation value of the system is
\begin{align}
\Tr\left(\sigma_A\otimes\tau_{\bar A} \hat H_{A\bar A}\right)=\Tr\left(\sigma_A\hat H_A\right)+\Tr\left(\sigma_{\bar A}\hat H_{\bar A}\right)
\end{align}
which  implies that in order to minimize the energy cost, the target modes need to be initialized in the ground states of the single-mode restrictions $\hat H_A$ and $\hat H_{\bar A}$ of $\hat H$ onto the individual partner modes.
Also for fermionic modes  these states are Gaussian states, as discussed in the previous section.
Thus, we can perform our entire analysis within the framework of Gaussian states.

\subsection{Energy cost for two fermionic modes}\label{sec:twofermionicmodes}
This section presents the fermionic analogue to \ref{sec:twobosonicmodes}. We analyze entanglement extraction from a source system that consists of only two fermionic modes and has a quadratic Hamiltonian.

Since the Hamiltonian 
is quadratic, we can choose a basis $\hat{\xi}^a\equiv(\hat q_1,\hat p_1,\hat q_2,\hat p_2)$ of energy eigenmodes such that the quadratic part of the Hamiltonian $h_{ab}$  and the covariance matrix $\Omega^{ab}$ of its ground state are both represented by block diagonal matrices $\mat{h}$ and $\mat \Omega$ given by
\begin{align}
	h_{ab}&\stackrel{1,2}\equiv 
	\mat h=\left(\begin{array}{cccc}
	0 & \epsilon_1 & 0 & 0\\
	-\epsilon_1 & 0 & 0 & 0\\
	0 & 0 & 0 & \epsilon_2\\
	0 & 0 & -\epsilon_2 & 0
	\end{array}\right)\,,\\
	\Omega^{ab}&\stackrel{1,2}\equiv
	\mat{\Omega}=\left(\begin{array}{cccc}
	0 & 1 & 0 & 0\\
	-1 & 0 & 0 & 0\\
	0 & 0 & 0 & 1\\
	0 & 0 & -1 & 0
	\end{array}\right)\,.
\end{align}
The energy eigenmodes are not entangled and, therefore, cannot be used for entanglement extraction. Instead, we need to find a pair of partner modes $(\hat Q_A,\hat P_A,\hat Q_{\bar{A}},\hat P_{\bar{A}})$ with respect to which $\Omega^{ab}$ is represented by the matrix corresponding to the standard form \eqref{eq:OmegaAplusPARTNER}
\begin{align}
&\Omega^{ab}\stackrel{A,\bar{A}}\equiv
\mat{\tilde \Omega}
\end{align}
for some fixed $r\in[0,\pi/4]$, which depends on the amount of entanglement we want to extract. Among all choices of partner modes $A$ and $\bar{A}$, we want to identify those that minimize the energy cost associated to the entanglement extraction. Consequently, we will search for the orthogonal transformation which maps the energy eigenmodes $1$ and $2$ to those optimal partner modes $A$ and $\bar{A}$ for entanglement extraction.

The two-mode squeezing transformation
\begin{align}
	\mat{M}_r=
	\left(\begin{array}{cccc}
	\cos{r} & 0 & \sin{r} & 0\\
	0 & \cos{r} & 0 & -\sin{r}\\ 
	-\sin{r} & 0 & \cos{r} & 0 \\
	0 & \sin{r} & 0 & \cos{r}
	\end{array}\right)
\end{align}
brings $\mathbf{\Omega}$ into the form from~\eqref{eq:OmegaAplusPARTNER} via
\begin{align}
    \mat{\tilde \Omega}=\mat{M}_r \mat \Omega\mat{M}^\intercal_r\,,
\end{align}
\ie it maps the eigenmodes to a pair of partner modes with squeezing parameter $r$. Under the orthogonal transformation $\mathbf{M}_r$, the Hamiltonian matrix $\mathbf{h}$ transforms under the action of the inverse transform $\mathbf{M}^{-1}_r=\mathbf{M}_{-r}$, \ie
\begin{align}
    \mat{\tilde h}= \mat{M}_{-r}^\intercal \mat h\mat{M}_{-r}\,.
\end{align}
However, $\mat{M}_r$ is not the only transformation doing this: We can compose the squeezing $\mat{M}_r$ with any orthogonal transformation $\mat{N}$  which leaves the covariance matrix invariant, \ie $\mat{\Omega}=\mat{N} \mat{\Omega} \mat{N}^\intercal$, to obtain the most general transformation $\mat{T}_r=\mat{M}_r \mat{N}$ that transforms the modes $1$ and $2$ into $A$ and $\bar{A}$ for fixed $r$. All of these choices share the same amount of entanglement across $A$ and $\bar{A}$ because the covariance matrix $\Omega$ takes the same form $\mat{\tilde \Omega}=\mat T_r \mat\Omega \mat T_r^\intercal$ with respect to them. However, the Hamiltonian matrix $\mathbf{h}$ is in general not left invariant by $\mat{N}$, \ie $\mat{h}\neq(\mat{N}^{-1})^\intercal \mat{h}\mat{N}^{-1}$ and thus the energy cost can be different.

\subsubsection{Degenerate two-mode Hamiltonian $(\epsilon_1\!=\!\epsilon_2)$}
We first consider the case where $\epsilon:=\epsilon_1=\epsilon_2$. This case is the simplest to solve because the matrix representations $\mathbf{h}$ and $\mathbf{\Omega}$ are proportional to each other in the eigenmode basis and, thus, in all bases. Indeed, any orthogonal transformation $\mat N$ that leaves $\mat \Omega$ invariant, \ie $\mat{N}\mat{\Omega}\mat{M}^\intercal=\mat{\Omega}$, also leaves $\mat{h}$ invariant, \ie $\mat{h}=(\mat{N}^{-1})^\intercal\mat{ h}\mat{N}^{-1}$. Therefore, for any pair of partner modes $A$ and $\bar{A}$ where the covariance matrix takes the standard form $\mat{\tilde \Omega}$ for fixed $r$, we find
\begin{align}\label{eq:h_ferm_partnerbasis_degenerate}
	\mat{\tilde h}=\left(\begin{array}{cc|cc}
	0 & \epsilon\cos{2r}  & 0 & -\epsilon\sin{2r}\\
	-\epsilon\cos{2r} & 0 & -\epsilon\sin{2r} & 0\\
	\hline
	0 & \epsilon\sin{2r} & 0 & \epsilon\cos{2r}\\
	\epsilon\sin{2r} & 0 & -\epsilon\cos{2r} & 0
	\end{array}\right).
\end{align}
As discussed before, the entanglement extraction swaps the  state of the chosen partner modes against a product state $\sigma_{\bar{A}}\otimes\tau_{\bar{A}}$.
This brings the covariance matrix into block diagonal form with respect to the partner modes:
\begin{align}
	\mat{\tilde\Omega}'=
	\left(\begin{array}{c|c}
	\mat{\tilde \Omega}_A & 0 \\ [1mm]
	\hline \\ [-3mm]
	0 & \mat{\tilde \Omega}_{\bar{A}}
	\end{array}\right).
\end{align}
In terms of this new covariance matrix, the energy cost \eqref{eq:energycostfermion} of the entanglement extraction is
\begin{align}
    \textstyle\Delta E&=-\frac{1}{4}\Tr \left(\mat{\tilde h}\, \mat{\tilde\Omega}^\intercal\right)+\frac{1}{4}\Tr \left(\mat h\,{\mat\Omega}^\intercal\right)\\
    &=\epsilon-\frac{1}{4}\Tr\left(\mat{\tilde h}_A\mat{\tilde \Omega}_A+\mat{\tilde h}_{\bar{A}} \mat{\tilde \Omega}_{\bar{A}}\right)\,,
\end{align}
where $\mat{\tilde h_{A}}=\mat{\tilde h_{\bar{A}}}=\epsilon \cos{2r} \,\mathbb{A}_2$ are the Hamiltonian matrices of the restriction of $\hat H$ onto the single partner modes respectively, \ie the diagonal blocks of $\mat{\tilde h}$ in \eqref{eq:h_ferm_partnerbasis_degenerate}. To  minimize $\Delta E$, the target modes need to be prepared in the ground state of $\mat{\tilde h_{A}}$ and $\mat{\tilde h_{\bar{A}}}$, \ie $\mat{\tilde\Omega_A}'=\mat{\tilde\Omega_{\bar{A}}}'=\mathbb{A}_2$. 

With these initial states, the minimal energy cost for the extraction of a pure two-mode squeezed state with entanglement entropy $\Delta S= S_f(\cos2r)$ is
\begin{align}\label{eq:energycostfermion_deg}
    \Delta E&= 2\epsilon \sin^2{r}\,.
\end{align}
In addition to this elevation of the energy expectation value, the final state of the source system has an energy variance of
\begin{align}
    \textstyle \Sigma_E&=\frac{\sqrt{-\Tr\left(\mat{\tilde h}\mat{\tilde G}\mat{\tilde h}\mat{\tilde G}+\mat{\tilde h}\mat{\tilde\Omega}'\mat{\tilde h}\mat{\tilde\Omega}'\right)}}{4}\\
    &=\frac{\epsilon}{\sqrt{2}}\sin{2r}\,.
\end{align}

\subsubsection{General two-mode Hamiltonian ($\epsilon_1\!\leq\!\epsilon_2$)}
The case of a general Hamiltonian, with $\epsilon_1 \leq \epsilon_2$, is slightly more involved than the degenerate case. Interestingly, there exists a critical value of extracted entanglement $\Delta S$ below which the minimal energy cost is independent of the choice of partner modes. Above the critical amount, it is optimal to chose partner modes obtained from squeezing the eigenmodes of $\hat H$ directly.

In the non-degenerate case the \emph{orthogonal} transformations which map the energy eigenmodes to any pair of partner modes is of the general form $\mat T_{r}=\mat M_r\mat N$. This means that before the squeezing transformation $\mat{M}_r$  we can apply a transformation $\mat{N}$ which can change $\mat h$ but leaves $\mat{\Omega}$ invariant.
To find all possible forms of $\mat{N}$ we write it as a matrix exponential $\mat{N}=\exp\left(\theta \mat K\right)$ of the generator $\mat{K}$ subject to the constraints $\mat{K}\mat{G}+\mat{G}\mat{K}^\intercal=0$ (leaving $\mat{G}$ invariant, \ie being orthogonal) and $\mat{K}\mat{\Omega}+\mat{\Omega}\mat{ K^\intercal}=0$ (leaving $\mat \Omega$ invariant, \ie being symplectic).\footnote{These are indeed the same conditions as for bosons. Mathematically speaking, the intersection of the orthogonal algebra $\mathrm{so}(2N)$ and the symplectic algebra $\mathrm{sp}(2N)$ is given by the unitary algebra $\mathrm{u}(N)$.} From this, we find the most general form
\begin{align}
   \mat{K}&= \left(\begin{array}{cccc}
0 & 0 & \cos{\phi} & -\sin{\phi}\\
0 & 0 & \sin{\phi} & \cos{\phi}\\
-\cos{\phi} & -\sin{\phi} & 0 & 0\\
\sin{\phi} & -\cos{\phi} & 0 & 0
\end{array}\right)\,,
\end{align}
giving rise to $\mat{N}=\exp(\theta \mat{K})$.

Consequently, $\mat{T}_r=\mat{M}_r\mat{N}$ provides the most general transformation which brings the covariance matrix $\mat{\tilde{\Omega}}=\mat{T}_r\mat{\Omega}\mat{T}_r^\intercal$ in the form of~\eqref{eq:OmegaAplusPARTNER}.
Such a transformation brings the Hamiltonian matrix into the general form
\begin{align}\label{eq:htilde_fermion}
	\mat{\tilde h}=(\mat{T}_r^{-1})^\intercal \mat{h}  \mat{T}_r^{-1}
	=\left(\begin{array}{c|c}
	\mat{h}_A & \mat{h}_{X}\\ [1mm]
	\hline \\ [-4mm]
	-\mat{h}_X^\intercal & \mat{h}_{\bar{A}}
	\end{array}\right)\,, 
\end{align}
with blocks that read, using  $\epsilon_\pm=\frac{1}{2}(\epsilon_2\pm\epsilon_1)$,
\begin{widetext}
\begin{align}
	\mat{h}_A&=\begin{pmatrix}
	0 & \epsilon_+\cos{2r}-\epsilon_-\cos{2\theta}\\
	-\epsilon_+\cos{2r}+\epsilon_-\cos{2\theta} &0
	\end{pmatrix}\,,\\
	\mat{h}_{\bar{A}}&=\begin{pmatrix}
	0 & \epsilon_+\cos{2r}+\epsilon_-\cos{2\theta}\\
	-\epsilon_+\cos{2r}-\epsilon_-\cos{2\theta} &0
	\end{pmatrix}\,,\\
\mat{h}_X&=\begin{pmatrix}
	\epsilon_-\sin2\theta \sin\phi& \epsilon_+ \sin 2r+\epsilon_-\cos\phi\sin2\theta\\
	\epsilon_+\sin2r-\epsilon_-\cos\phi\sin2\theta &\epsilon_-\sin2\theta\sin\phi
	\end{pmatrix}\,.
\end{align}
\end{widetext}
The excitation energies of the restricted one-mode Hamiltonian $\hat{H}_A$ and $\hat{H}_{\bar{A}}$ are thus
\begin{align}
    \epsilon_{A}=|\epsilon_+\cos{2r}-\epsilon_-\cos{2\theta}|\,,\\
    \epsilon_{\bar{A}}=|\epsilon_+\cos{2r}+\epsilon_-\cos{2\theta}|\,.
\end{align}
As discussed above, the entanglement extraction brings the partner modes $A$ and $\bar A$ into a pure Gaussian product state. 
Since $A$ and $\bar A$ are fermionic modes, there are only four different product states. Their covariance matrices are
\begin{align}
	\mathbb{A}^{(k,l)}=
	\left(\begin{array}{c|c}
	(-1)^k \mathbb{A}_2 & 0 \\ [1mm]
	\hline \\ [-3mm]
	0 &(-1)^l\mathbb{A}_2
	\end{array}\right)\,,
\end{align}
with $k,l\in\{0,1\}$. 
Which of these states minimizes the energy cost, \ie is   the product of the ground states of $\mat h_A$ and $\mat h_{\bar A}$, depends on spectrum of the Hamiltonian and the squeezing parameter $r$.
This is because the energy cost arising from the different $\mathbb{A}^{(k,l)}$ is
\begin{align}\label{eq:ferm_energycost_productsates}
    &\Delta E=\epsilon_+-\frac14 {\Tr\left( \mat{\tilde h}\, \mathbb{A}^{(k,l)}\right)}\nonumber\\
    &=\left\{ \begin{array}{cl}
        \epsilon_+ (1-\cos2r), & \text{if }(k,l)=(0,0) \\
        \epsilon_+-\epsilon_- \cos2\theta, & \text{if } (k,l)=(1,0)\\
        \epsilon_++\epsilon_- \cos2\theta, &\text{if } (k,l)=(0,1)\\
        \epsilon_+ (1+\cos2r), & \text{if }(k,l)=(1,1) 
    \end{array}\right.\,.
\end{align}

The fact that for two of the states the energy cost is independent of $\theta$ versus it being independent of $r$ for the other two states, has an interesting consequence:
It means that for low values of $r$, \ie for low amounts of entanglement, the minimal energy cost is independent of $\theta$ and given by $\Delta E_{\text{min}}=2\epsilon_+\sin^2r$ which is achieved by the state $\mathbb{A}^{(0,0)}$.
However, as soon as $\cos2r\leq \frac{\epsilon_-}{\epsilon_+}$ is stisfied, it is more beneficial to bring the modes into the final state $\mathbb{A}^{(1,0)}$ which results in an energy cost $\Delta E=\epsilon_+-\epsilon_-\cos2\theta$. In this case, the energy cost is minimized by the parameter value $\theta=n\pi$ with $n\in \mathbb{Z}$, \ie if the partner modes are obtained by squeezing directly the energy eigenmodes.

In summary, the minimal energy cost for the extraction of a pair of partner modes sharing entanglement entropy $\Delta S=S_f\left(\cos2r\right)$ from a  fermionic quadratic Hamiltonian with excitation energies $\epsilon_1\leq\epsilon_2$, shown in Figure~\ref{fig:DeltaEvsDeltaSfermion}, is
\begin{align}\label{eq:DeltaEmin_fermion}
	\Delta E_{\mathrm{min}}=\begin{cases}
	(\epsilon_1+\epsilon_2) \sin^2{r}, & \text{if  }\cos{2r} \geq\frac{\epsilon_2-\epsilon_1}{\epsilon_2+\epsilon_1} \\
	\epsilon_1, &\text{if  }\cos{2r} \leq\frac{\epsilon_2-\epsilon_1}{\epsilon_2+\epsilon_1}
	\end{cases}\,.
\end{align}
In particular, the minimal energy cost is bounded from above by the lower excitation energy of the Hamiltonian $\epsilon_1$.
This is because above the critical squeezing parameter with $\cos2r\leq\epsilon_-/\epsilon_+$, the product state $\mathbb{A}^{(1,0)}$ in which the partner modes are left after the extraction is exactly the first excited state of the Hamiltonian $\hat H$. In fact, the covariance matrix $\mathbb{A}^{(0,1)}$ is left invariant by the squeezing operation $\mat M_r  \mathbb{A}^{(1,0)} \mat M_r^\intercal$, \ie the first excited state of $\hat H$ has the same covariance matrix with respect to the energy eigenmodes and with respect to the partner modes $A$ and $\bar A$ which are obtained by squeezing the energy eigenmodes.
\begin{figure}[t]
\begin{center}
  \includegraphics[width=\linewidth]{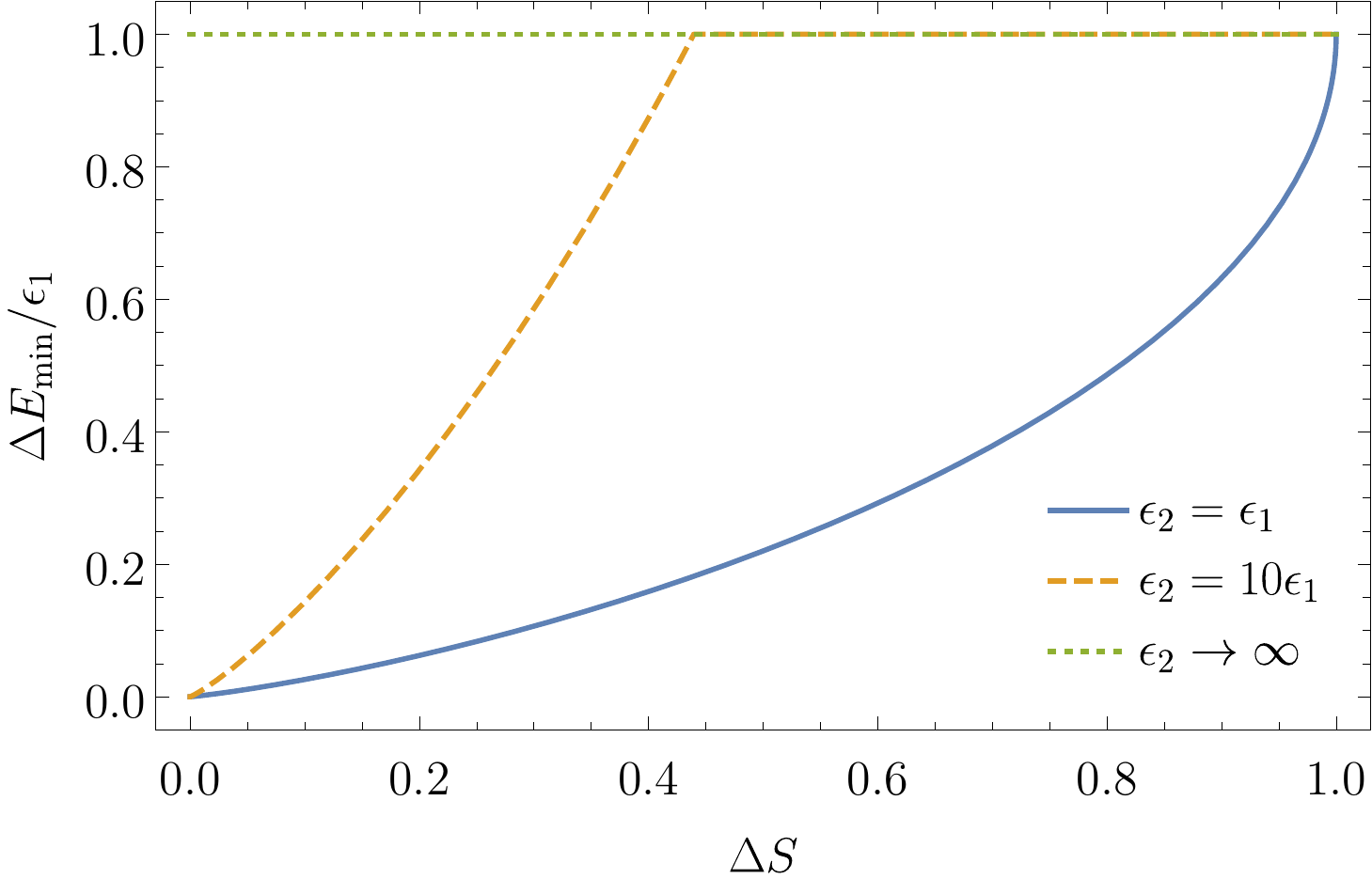}
\end{center}\vspace{-4mm}
\caption{Minimal energy cost $\Delta E_{\mathrm{min}}$ of extracting entanglement entropy $\Delta S$ from a fermionic, quadratic two-mode Hamiltonian with excitation energies $\epsilon_1$ and $\epsilon_2$.}
\label{fig:DeltaEvsDeltaSfermion} 
\end{figure}

As a consequence, in the regime $\cos{2r}\leq\epsilon_-/\epsilon_+$, the energy variance $\Sigma_E$ of the final system state vanishes because the system is in an energy eigenstate. For lower amounts of entanglement, where this is not the case, the variance of the state which minimizes the energy cost depends on the squeezing parameter. It is plotted in Figure \ref{fig:SigmaEvsDeltaSfermion} and reads
\begin{align}
    \Sigma_E&=\left\{\begin{array}{lcl}
	\frac{\epsilon_+}{\sqrt{2}}\sin{2r}, &&  \text{if  }\cos{2r} >\frac{\epsilon_2-\epsilon_1}{\epsilon_2+\epsilon_1}\\
	0, &&  \text{if  }\cos{2r} \leq \frac{\epsilon_2-\epsilon_1}{\epsilon_2+\epsilon_1}
	\end{array}\right.\,,
\end{align}
Note that for $\cos{2r} >\frac{\epsilon_2-\epsilon_1}{\epsilon_2+\epsilon_1}$, it is possible to leave the source system in an energy eigenstate such that $\Sigma_E=0$. The energy cost for this is $\epsilon_1$.

Another relevant figure of merit is to look at the ratio between a given amount of extracted entanglement and its minimal energy cost $\Delta S/\Delta E$. Since $\Delta S=S_f(\cos{2r})\sim (1-2\log{r})\,r^2$ as $r\to0$, the ratio diverges as
\begin{align}
    \lim_{r\to0}\frac{\Delta S}{\Delta E_{\text{min}}}=\infty
\end{align}
for small amounts of extracted entanglement, just as we found for bosons in~\eqref{eq:limDeltaSperDeltaE}. As the extracted entanglement increases the ratio is strictly decreasing and reaches its minimum $\log(2)/\epsilon_1$ for $r=\pi/4$. Similar to the bosonic case, we thus find that the price per entanglement bit increases as we are trying extract a larger total amount.

\begin{figure}[t]
\begin{center}
  \includegraphics[width=\linewidth]{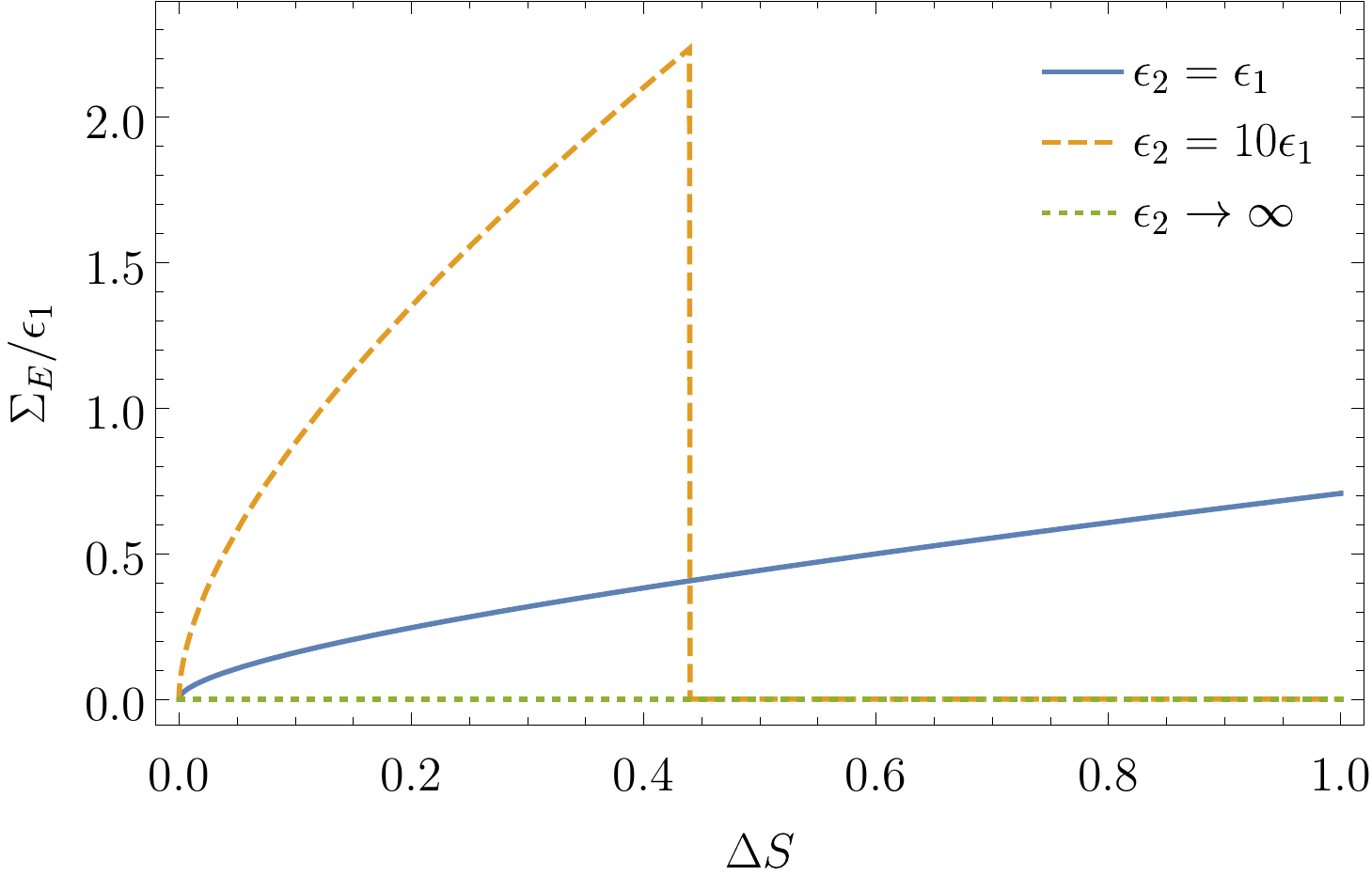}
\end{center}\vspace{-4mm}
\caption{Energy variance $\Sigma_E$ of the state minimizing the energy cost of extracting partner modes with entanglement entropy $\Delta S$ from a fermionic, quadratic two-mode Hamiltonian with excitation energies $\epsilon_1$ and $\epsilon_2$.}
\label{fig:SigmaEvsDeltaSfermion} 
\end{figure}

\subsection{Choosing optimal partner modes}\label{sec:fermion_proof3}
The analysis of the previous sections explain how the increase of energy expectation value of the source system can be minimized when extracting a pair of partner modes sharing a given amount of entanglement: To keep the energy cost as low as possible the modes need to be constructed from the two lowest energy eigenmodes of the Hamiltonian. Specifically, the partner modes should be obtained by directly squeezing the two energy eigenmodes, such that always the lowest possible energy cost according to \eqref{eq:DeltaEmin_fermion} is achieved. Again, as discussed in Appendix \ref{app:spectrumproof}, no other choice of partner modes yields a lower energy cost because the energy spectrum of the restricted Hamiltonian is bounded by the spectrum of the full Hamiltonian.

For a general pair of partner modes $A$ and $\bar A$ the minimal energy cost is determined by the spectrum of the restricted two-mode Hamiltonian $\hat H_{A\bar A}$, which enters the formula \eqref{eq:DeltaEmin_fermion}.
In contrast to the fermionic case, an arbitrarily chosen pair of partner modes will in fact achieve this minimal energy price if the amount of extracted entanglement is below the critical value, \ie if $\cos{2r}\geq\frac{\epsilon_-}{\epsilon_+}$.
If the squeezing is above this value, according to the analysis of the previous section only the partner modes obtained by directly squeezing the energy eigenmodes of $\hat H_{A\bar A}$ do achieve the possible minimal energy cost of $\Delta E_{\text{min}}=\epsilon_1$.

To perform this analysis for a given choice of partner modes, it is necessary to express the Hamiltonian matrix of $\hat H_{A\bar A}$ with respect to the partner mode operators $\left(\hat Q_A,\hat P_A,\hat Q_{\bar A},\hat P_{\bar A}\right)$. As before, we denote these mode operators as
\begin{align}
    \hat Q_A&=x_a\hat\xi^a\,,\qquad \hat P_A=k_a\hat\xi^a\,,\\
    \hat Q_{\bar{A}}&=\bar{x}_a\hat\xi^a,\qquad \hat P_{\bar{A}}=\bar{k}_a\hat\xi^a\,.
\end{align}
In this basis, the Hamiltonian matrix representing the restriction $\hat H_{A\bar A}$ of a general quadratic Hamiltonian $\hat H=\frac\ii2 h_{ab}\hat\xi^a\hat\xi^b$ is given by
\begin{align}
\left(\begin{array}{cc|cc}
0& \check h(x,k) & \check h(x,\bar x) &\check h(x,\bar k)\\
\check h(k,x) & 0 & \check h(k,\bar x) & \check h(k,\bar k)\\ \hline
\check h(\bar x,x) & \check h(\bar x,k) & 0& \check h(\bar x, \bar k)\\
\check h(\bar k,x) & \check h(\bar k,k) & \check h(\bar k,\bar x) & 0
\end{array}\right)
\end{align}
where we now define $\check h:V^*\times V^*\to\mathbb{R}$ by
\begin{align}
\check h(v,w)=  G^{ab}  G^{cd}v_ah_{bc} w_d\,,
\end{align}
which for fermionic modes yields a  bi-linear   anti-symmetric form:
\begin{align}
\check h(v,w)=-\check h(w,v)\,.
\end{align}
Using the fact that $K$ as defined in \eqref{eq:ferm_groundstate} commutes with $J$, it can be shown that
\begin{align}
    \check h(J^\intercal v,J^\intercal w)=\check h(v,w).
\end{align}
Which allows to calculate the terms invovling the partner mode $(\bar x,\bar k)$ as
\begin{align}
&\check h(x,\bar x)=\check h(k,\bar k)=\frac{ \check h(x,J^\intercal k)}{\sin2r} \,,\\
&\check h(x,\bar k)=-\cot2r \,\check h(x,k)+\frac{\check h(x,J^\intercal x)}{\sin2r}\,,\\
&\check h(k,\bar x)= -\cot2r \,\check h(x,k)+\frac{\check h(k,J^\intercal k)}{\sin2r}\,,\\
&\check h(\bar x,\bar k)=\frac{\cot2r}{\sin2r}\left(\check h(x,J^\intercal x)+\check h(k,J^\intercal k)\right)\nonumber\\*
&\qquad\qquad -(1+2\cot^2(2r)) \check h(x,k)\,.
\end{align}
Then, by equating the obtained matrix with \eqref{eq:htilde_fermion}, we obtain the spectrum of $\hat H_{A\bar A}$ with $\epsilon_\pm=\frac12(\epsilon_2\pm\epsilon_1)$ as
\begin{align}
\epsilon_+^2& \!=\!\frac{\left( \check h(\bar x,\bar k)+\check h(x,k)\right)^2+ \left( \check h(x,\bar k)+ \check h(k,\bar x)\right)^2}4,\\
\epsilon_-^2& \!=\!\frac{\left( \check h(\bar x,\bar k)-\check h(x,k)\right)^2+ \left( \check h(x,\bar k)- \check h(k,\bar x)\right)^2}4\nonumber\\*
&\qquad+\frac{\check h(x,\bar x)^2 }4\,.
\end{align}
The two parameters $\phi$ and $\theta$ in the transformation $\mat T_r$ (mapping the eigenmodes of $\hat H_{A\bar A}$ to the partner modes $A$ and $\bar A$) are determined by
\begin{align}
&\tan\phi=\frac{2 \check h(x,\bar x)}{\check h(x,\bar k)-\check h(k,\bar x)}\,, \\
&\cos2\theta=\frac{\check h(\bar x,\bar k)-\check h(x,k)}{2 \epsilon_-}\,.
\end{align}

With this information on the spectrum and the parameters it is possible to check if a given pair of partner modes minimizes the energy cost: If we find that the extracted amount of entanglement is low enough such that $\cos{2r}\geq\frac{\epsilon_-}{\epsilon_+}$ then the pair achieves the minimal energy cost of $\Delta E_{\text{min}}=2\epsilon_+\sin^2r$ independent of the value of $\theta$. However, beyond this regime the minimal energy cost depends on the value of $\theta$ and according to \eqref{eq:ferm_energycost_productsates} can be lowered to
\begin{align}
    \Delta E_\theta=\epsilon_+-\epsilon_-|\cos2\theta|\,,
\end{align}
\ie only if $\theta\in\{0,\pi/2,\pi,\dots\}$ the minimal energy cost of $\Delta E_{\text{min}}=\epsilon_+-\epsilon_-=\epsilon_1$ is achieved by the chosen partner modes $A$ and $\bar A$.

Finally, also for fermionic modes we can give an upper bound on the energy cost of entanglement extraction from arbitrary partner modes of the system. Combining \eqref{eq:DeltaEmin_fermion} and \eqref{eq:spec_restrictions_boson} (which as shown in Appendix \ref{app:spectrumproof} applies to fermionic Hamiltonians in the same form), we find that the energy cost to replace the state of an arbitrary pair of partner modes by the product states of their one-mode restricted ground states
\begin{align}\label{eq:upper_bound_cost_ferm}
    \Delta E \leq 2\epsilon_+\sin^2r \leq 2\,\omega_N\sin^2r=:\Delta E_{\mathrm{max}}
\end{align}
is again upper bounded by the energy cost for a degenerate Hamiltonian with only the system's largest excitation energy $\omega_N$ in its spectrum.

\section{Applications}\label{sec:applications}
In this section, we apply the previously derived minimal energy cost to concrete physical models. The main goal is to present a proof of concept that our general results are easily related to concrete scenarios of extracting entanglement from modes that are accessible in a physical model. In particular, we present examples for both bosonic and fermionic systems, the latter being also related to spin systems via Jordan-Wigner transformation.

\subsection{Hamiltonian of dilute Boson gas}\label{sec:example_Einstein_Bose}
We consider the Hamiltonian of the form 
\begin{align}\label{eq:bosongasH}
    \hat H=\sum_{k\neq0} \omega_k a_k^\dagger a_k+ \gamma_k \left(a_k^\dagger a_{-k}^\dagger+a_k a_{-k}\right)\,,
\end{align}
with $\omega_k=\omega_{-k}$ and $\gamma_k=\gamma_{-k}$, which is well known to describe a weakly interacting dilute Boson gas (see, \eg \cite{schwabl_advanced_2008}).
The sum runs over all modes  with non-zero momentum, which have very low occupation numbers, but it excludes the zero momentum mode, which is macroscopically occupied. In this way, the Hamiltonian describes the interaction of excitations in the higher modes with the condensate of particles in  the zero-mode.

Typically, the squeezing term in the Hamiltonian is much smaller than the number operator term, \ie $\gamma_k\ll\omega_k$. However, as long as $\gamma_k<\omega_k/2$ the Hamiltonian is bounded from below, as we shall see below. 

Inserting $a_k=\frac1{\sqrt2}\left(\hat q_k+\ii\hat p_k\right)$, we can rewrite the Hamiltonian  as
\begin{align}
\hat H&=\sum_{k>0} \hat H_k -\sum_{k\neq0} \frac{\omega_k}2
\end{align}
with
\begin{align}
\hat{H}_k&= \frac{\omega_k}2\left(\hat q_k^2+\hat p_k^2+\hat q_{-k}^2+\hat p_{-k}^2\right)\\
&\qquad\qquad +2 \gamma_k \left( \hat q_k\hat q_{-k} - \hat p_k\hat p_{-k}\right)\,.
\end{align}
It is evident, that modes with opposite momentum, \ie pairs of the form $(\hat q_k,\hat p_k,\hat q_{-k},\hat p_{-k})$ are partner modes in the ground state of   $\hat H$ because the blocks $\hat H_k$ couple them pairwise.
Since we assumed $\omega_k=\omega_{-k}$ each block $\hat H_k$ is a degenerate two-mode Hamiltonian. In fact, the Hamiltonian matrix corresponding to $\hat H_k$ is
\begin{align}
    \mat{h}_k&= \begin{pmatrix}\omega_k &0 & 2\gamma_k &0\\ 0&\omega_k&0&-2\gamma_k \\ 2\gamma_k&0&\omega_k&0\\0&-2\gamma_k&0&\omega_k \end{pmatrix}\,,
\end{align}
which  matches  \eqref{eq:h_partnerbasis_degenerate}, with the squeezing parameter $r$ and the excitation energy of $\hat H_k$ being
\begin{align}
&\epsilon\cosh{2r}=\omega_k\,,\quad \epsilon\sinh{2r}=-2\gamma_k\,, \nonumber\\*
&\Leftrightarrow
    \epsilon=\sqrt{\omega_k^2-4\gamma_k^2},\quad 
    \cosh{2r}=\frac1{\sqrt{1-4\gamma_k^2/\omega_k^2}}\,. 
\end{align}

This means that in order to extract pure state entanglement from the ground state of the Hamiltonian $\hat H$, one needs to select two modes with opposite momentum $k$ and $-k$. Following  \eqref{eq:Sformula}, these modes share an entanglement entropy of 
\begin{align}
    \Delta S
    =s_b\left(\left(1-4\gamma_k^2/\omega_k^2\right)^{-\frac12}\right)\,.
\end{align}
The minimal energy cost of entanglement extraction from these modes is, according to \eqref{eq:energycost_degenerate},
\begin{align}
    \Delta E= \epsilon \left(\cosh{2r}-1\right)
    = \omega_k \left(1-\sqrt{1-4\gamma_k^2/\omega_k^2}\right)\,.
\end{align}

\begin{figure}[t]
\begin{center}
  \includegraphics[width=\linewidth]{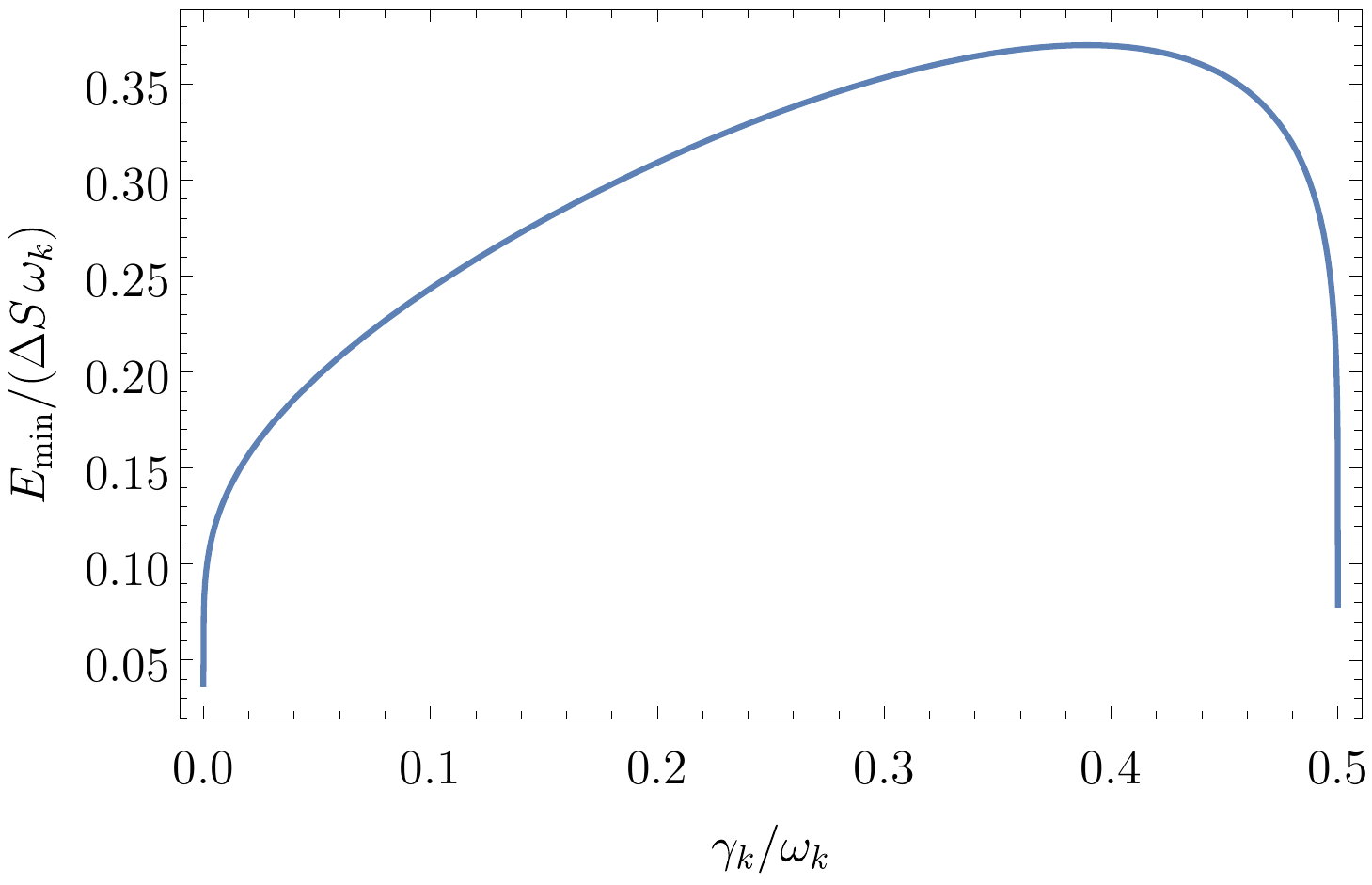}
\end{center}\vspace{-4mm}
\caption{Ratio of energy cost to extracted entanglement from the ground state of a dilute weakly interacting Boson gas (Hamiltonian of \eqref{eq:bosongasH}) as a function of $\gamma_k/\omega_k$. 
}
\label{fig:bosongas} 
\end{figure}

It is interesting to see that while the entropy between the partner modes diverges as $\gamma_k\to\omega_k/2$, the energy cost is upper bounded and approaches $\omega_k$. At first this may seem in contradiction to the results of Section \ref{sec:boson_extraction_wholesection}, but it is due to the excitation energy of $\hat H_k$ vanishing, $\epsilon\to0$, in this limit.
In particular, this leads to an interesting dependency of the ratio $E_{\mathrm{min}}/\Delta S$ between energy cost and extracted entanglement on the ratio $\gamma_k/\omega_k$ of the Hamiltoian parameters, seen in Figure \ref{fig:bosongas}.
The energy that needs to be invested per bit of extracted entanglement falls off drastically both for very small amounts of entanglement, $\gamma_k\to0$, and for extremely large amounts of entanglement as $\gamma_k\to\omega_k/2$.
In between the highest amount of energy per extracted entanglement entropy is required at $\gamma_k/\omega_k\approx 0.389368$, where the entanglement entropy is $\Delta S\approx 1.00679$ and the energy cost $E_{\mathrm{min}}\approx 0.372648 \, \omega_k$.

An interesting feature of the entanglement structure of this example system is that it exhibits partner modes which may be realistically accessible in experiment.
In other systems it seems much more challenging to access both modes in a pair of partner modes equally: For example, in the coupled harmonic chain generally (at least) one  of the partner modes are delocalized over the complete system.
The Hamiltonian \eqref{eq:bosongasH}, however, couples and entangles  modes of counter-propagating momentum symmetrically such that both partner modes should be equally accessible in experiment.

\subsection{XY spin model}

\begin{figure}[t]
\begin{center}
  \includegraphics[width=\linewidth]{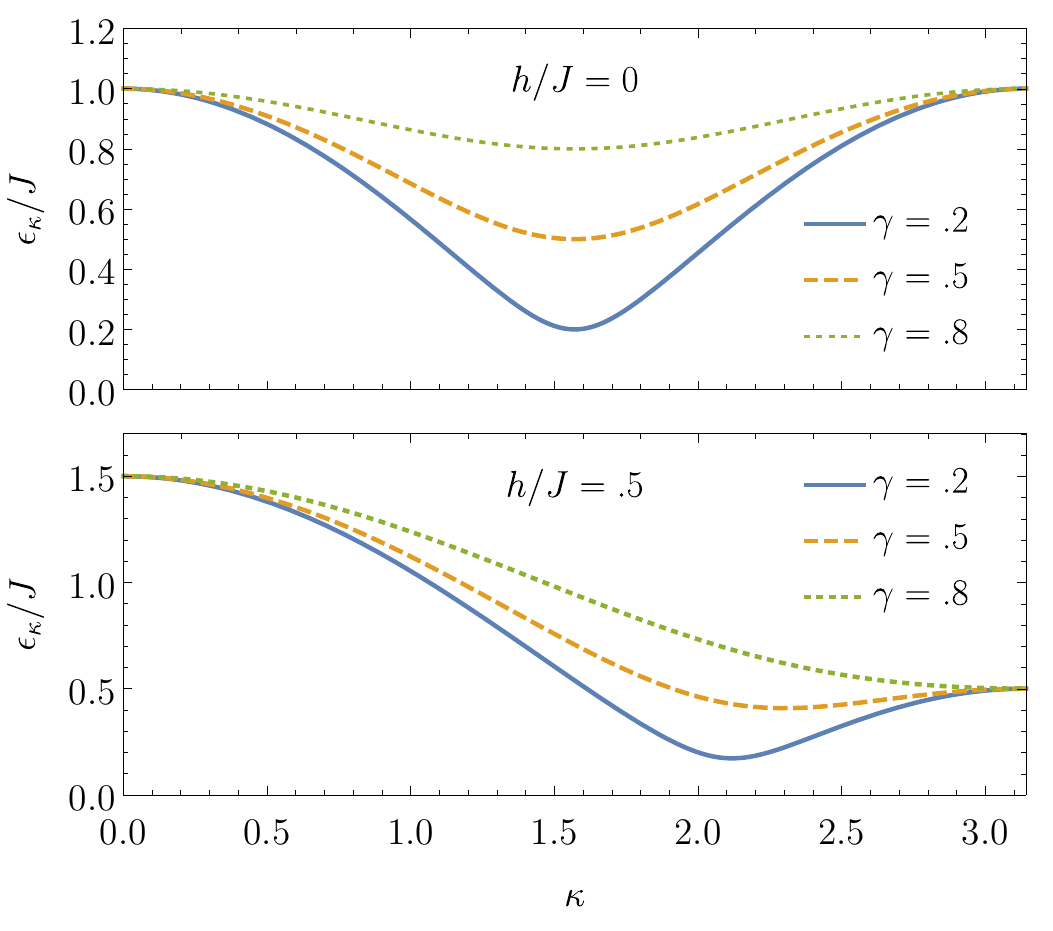}
\end{center}\vspace{-4mm}
\caption{Dispersion relation $\varepsilon_k$ of the XY model for different values of $h/J$ and $\gamma$.}
\label{fig:epsilonkXY} 
\end{figure}

The XY model presents a prime example of a spin model that can be mapped to a fermionic quadratic Hamiltonian via the Jordan-Wigner transformation~\cite{jordan_ueber_1993}. We consider the Hamiltonian with an external field $h$ given by~\cite{lieb_two_1961}
\begin{align}\label{eq:HXY1}
\begin{split}
 \hat{H}_{\mathrm{XY}}&=-\frac{J}{2}\sum^N_{j =1}\left[(1+\gamma)\hat{S}^{\mathrm{X}}_j \hat{S}^{\mathrm{X}}_{j +1}\right.\\&\qquad\quad\left.+(1-\gamma)\hat{S}^{\mathrm{Y}}_j \hat{S}^{\mathrm{Y}}_{j +1}\right]-h\hat{S}^{\mathrm{Z}}_j\,,
\end{split}
\end{align}
where $\hat{S}_j^\mathrm{X}$, $\hat{S}_j^\mathrm{Y}$ and $\hat{S}_j^\mathrm{Z}$ represent the spin-$\frac{1}{2}$ operators on site $j$. Ignoring boundary terms that are negligible in the thermodynamic limit, the XY model can be mapped to the quadratic fermionic Hamiltonian given by~\cite{holstein_field_1940,cazalilla_one_2011}
\begin{align}\label{eq:H_XYfermion}
\begin{split}
    \hat{H}_{\mathrm{XY}}=\sum_{k\in\mathcal{K}}\varepsilon_{\kappa}\ \left(\hat{\eta}_{\kappa}^\dagger\hat{\eta}_{\kappa}-\frac{1}{2}\right)\qquad\text{with}\qquad\\
    \varepsilon_{\kappa}\!=\!\sqrt{h^2\!+2h J\cos(\kappa)\!+J^2\!+(\gamma^2-1)J^2\sin(\kappa)^2}\,,
\end{split}
\end{align}
where we defined $\mathcal{K}=\{2\pi k/N|0\leq k<N\}$. 
Figure~\ref{fig:epsilonkXY} shows the dispersion relation $\epsilon_\kappa$ as a function of $\kappa$ for various parameter choices. The fermionic annihilation operators $\hat{\eta}_\kappa$ diagonalize the Hamiltonian and are related to on-site annihilation operators
\begin{align}
    \hat{f}_j=\frac{1}{\sqrt{N}}\sum_{\kappa\in\mathcal{K}}e^{-\ii\kappa }(u_\kappa \eta_{\kappa}-v_\kappa\eta_{-\kappa}^\dagger)
\end{align}
on site $j$, where the Bogoliubov coefficients $u_\kappa$ and $v_\kappa$ are given by
\begin{align}
\begin{split}
    u_\kappa=\frac{\varepsilon_\kappa+a_\kappa}{\sqrt{2\varepsilon_\kappa(\varepsilon_\kappa+a_\kappa)}}\,,\quad v_\kappa=\frac{\ii b_\kappa}{\sqrt{2\varepsilon_\kappa(\varepsilon_\kappa+a_\kappa)}}\,,\\
    a_\kappa=-J\cos(\kappa)-h\,,\quad\text{and}\quad b_\kappa=\gamma\, J\sin(\kappa)\,. 
\end{split}
\end{align}
Let us emphasize that the Jordan-Wigner transformation relating the spin Hamiltonian~\eqref{eq:HXY1} with the fermion Hamiltonian~\eqref{eq:H_XYfermion} is non-local. However, it preserve bi-partite entanglement~\cite{amico_entanglement_2008} between localized regions, \ie the entanglement entropy associated to a single fermionic site is equal to the respective entanglement entropy of the same site in the spin Hamiltonian~\eqref{eq:HXY1}. (The relationship for general non-local subsystems is more subtle.) Also, we note that the Hamiltonian in~\eqref{eq:H_XYfermion} is a paradigmatic model in its own right: Choosing $\gamma=1$, we obtain the transverse field Ising model~\cite{pfeuty_one-dimensional_1970}.

To find the minimal excitation energy, \ie the lowest eigenvalue of $\hat H_{XY}$, we need to find the minimum of the dispersion relation $\varepsilon_{\kappa}$. 
We find that the condition $\cos\kappa=\frac{h}{J(\gamma^2-1)}$ ensures that $\varepsilon^2_\kappa$ is minimal, which yields the lowest eigenvalue 
\begin{align}
    \epsilon_{\min}=\min_{\kappa}\varepsilon_{\kappa}=\sqrt{\gamma^2\left(J^2+\frac{h^2}{\gamma^2-1}\right)}\,.
\end{align}

The model becomes gapless for $\gamma=0$ and for $h=J\sqrt{1-\gamma^2}$. If the number of sites  $N$ is large enough, there are sufficiently many excitation energies close to the minimum that we can consider the spectrum as flat.
This means, that by extracting a pair of partner modes from the subspace of these lowest energy modes, we can achieve the minimal energy cost
\begin{align}
    \Delta E_{\mathrm{min}}=2\epsilon_{\min} \sin^2{r}\,,
\end{align}
with the partner modes' squeezing parameter $r$.

The ground state $\ket{\Omega}$ of this model is encoded by the complex structure~\cite{hackl_average_2019}
\begin{widetext}
\begin{align}
 J\equiv\mat{J}=-\frac{\ii}{N}\sum_{{\kappa}\in\mathcal{K}}\left(\begin{array}{c|c} |u_{\kappa}|^2e^{\ii {\kappa}(j -l)}-|v_{\kappa}|^2e^{\ii {\kappa}(l-j)} & u^*_{\kappa}v^*_{\kappa}\left(e^{\ii {\kappa}(l-j)}-e^{\ii {\kappa}(j -l)}\right)\\ \hline u_{\kappa}v_{\kappa}\left(e^{\ii {\kappa}(l-j )}-e^{\ii {\kappa}(j -l)}\right)& -|u_{\kappa}|^2e^{\ii {\kappa}(l-j )}+|v_{\kappa}|^2e^{\ii {\kappa}(j -l)} \end{array}\right)\,,
\end{align}
\end{widetext}
here expressed with respect to the local basis $\hat{\xi}^a\equiv(\hat{f}_1^\dagger,\dots,\hat{f}_N^\dagger,\hat{f}_1,\dots, \hat{f}_N)$. 
We can change to a hermitian basis defined by $\hat{q}_i=\frac{1}{\sqrt{2}}(\hat{f}_i^\dagger+\hat{f}_i)$ and $\hat{p}_i=\frac{\ii}{\sqrt{2}}(\hat{f}_i^\dagger-\hat{f}_i)$, such that $G\equiv\id$. In this basis, the only non-vanishing entries of the covariance matrix are given by
\begin{align}
\begin{split}
    \Omega(q_j,p_l)&=\frac{1}{L}\sum_{\kappa\in\mathcal{K}}\Big[(|v_\kappa|^2-|u_\kappa|^2)\cos{\kappa(j-l)}\\
    &\qquad\qquad\,\,+2\,\mathrm{Im}(u_kv_k)\sin{\kappa(j-l)}\Big]\,.
\end{split}
\end{align}
Similarly, we find the Hamiltonian matrix to be
\begin{align}
    \check{h}(q_j,p_l)=h\delta_{jl}+\frac{J+\gamma}{2}\delta_{j+1,l}+\frac{1-\gamma}{2}\delta_{j-1,l}\,,
\end{align}
where we have periodicity $N+1\equiv 1$.

We can now explore how the energy cost of randomly chosen partner modes relate to the minimal energy cost $\Delta E_{\mathrm{min}}$ and the upper bound $\Delta E_{\mathrm{max}}$ established above. To this end, we choose  a Haar random mode $(x_a,k_a)$ spanning a subsystem $A\subset V$ and then apply the following steps:
\begin{enumerate}
    \item Compute entanglement entropy $S_A(\ket{\Omega})$.
    \item Compute partner mode $(\bar{x},\bar{k})$.
    \item Compute the restriction $\hat{H}_{A\bar{A}}$.
    \item Compute the energy cost $\Delta E$ for replacing the partner mode state   by the ground state $\ket{\Omega_A'}\otimes \ket{\Omega'_{\bar{A}}}$ of $\hat{H}_A+\hat{H}_{\bar{A}}$.
\end{enumerate}
Selecting a random mode with respect to the Haar measure of the orthogonal group can be accomplished by generating a random $2N$-by-$2N$ matrix with Gaussian distributed entries and orthonormalize its column vectors with respect to the inner product induced by $G^{ab}$. From here, we can select the first two column vectors. This is equivalent to directly generating two vectors and orthonormalizing them.

\begin{figure}[t!]
\begin{center}
  \includegraphics[width=\linewidth]{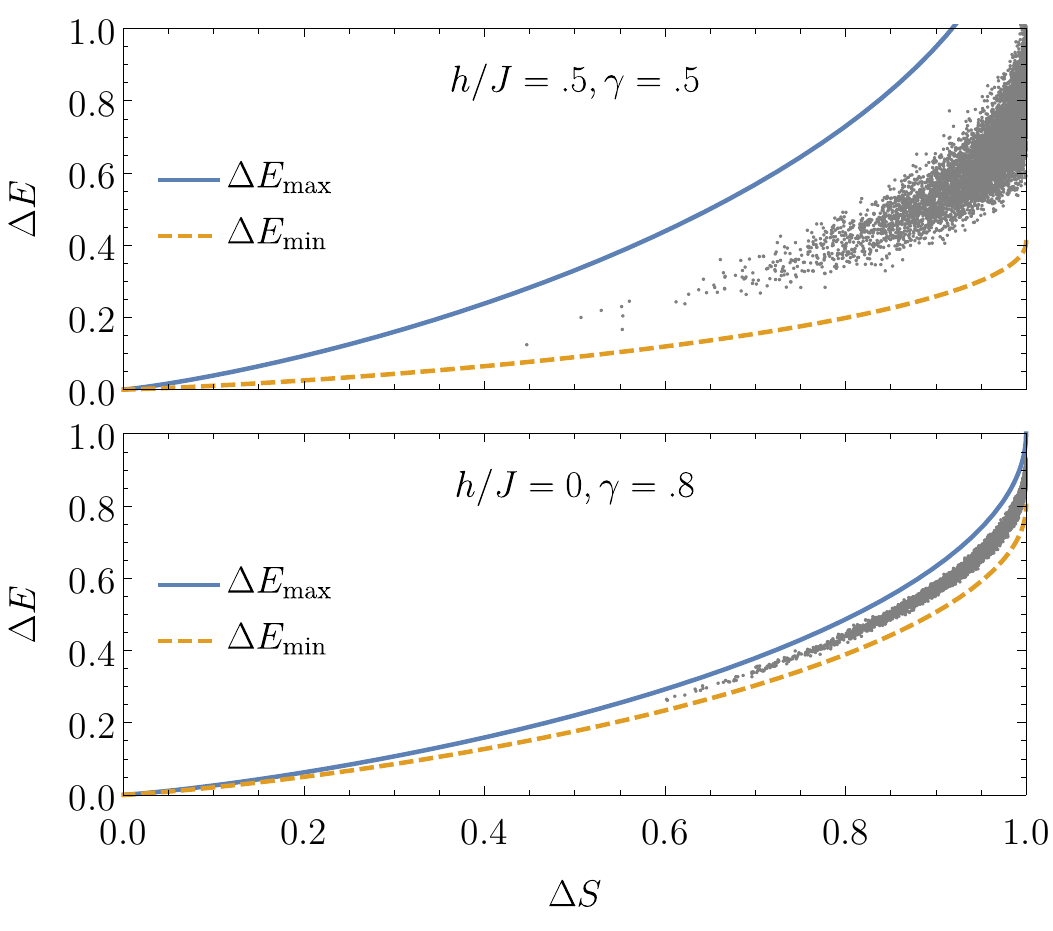}
\end{center}\vspace{-4mm}
\caption{Copmarison of  lower and upper bounds for the energy cost of entanglement extraction with a sample of $10,000$ Haar randomly generated partner modes.
We consider the XY model with $N=10$, and parameter choices $(h=.5,\gamma=.5)$ and $(h=0,\gamma=.8)$. 
}
\label{fig:randomXY} 
\end{figure}

As seen in Figure~\ref{fig:randomXY}, the Haar random sample is accurately bounded from below by our predicted minimal energy cost. The upper bound on the energy cost is due to the fact that the spectrum of the XY model is also bounded from above by
\begin{align}
    \epsilon_{\max}=\max_{\kappa}\epsilon_{\kappa}=\sqrt{J^2+h^2+2hJ}\,.
\end{align}
We see that our bounds become tighter for narrower excitation spectra. As seen from Figure~\ref{fig:epsilonkXY} our spectrum becomes flat for $h\to 0$ and $\gamma\to 1$. This is reflected in Figure~\ref{fig:randomXY} for $(h=0,\gamma=.8)$.

Our analysis of the energy cost for the XY model did not take into account locality in real space so far. In particular, the optimal partner modes that accomplish our minimal energy cost are spanned by the de-localized energy eigenmodes with lowest energy values. 
It is therefore interesting, to investigate the energy cost in a more physical scenario where one mode is  localized on a given site, while its partner mode is localized in the complement of the site.
In this framework the choice of modes is completely fixed, hence we can scan through the $(h,\gamma)$ parameter space of the XY model to compare the energy cost for this single site extraction to the global lower and upper bounds.

In Figure~\ref{fig:XYsingle}, we show the energy cost of entanglement extraction for a single-site mode and its partner. We scan through the parameter space along three different paths of points, which are constrained to yield the same lowest excitation energy $\epsilon_{\min}$ for the Hamiltonian $\hat H_{XY}$, \ie
\begin{align}\label{eq:XYfixedE}
    h=\pm\frac1{\gamma } {\sqrt{\left(1-\gamma ^2\right) \left(\gamma ^2-\epsilon_{\min}^2\right)}}\,.
\end{align}
We notice that the entanglement entropy of a single site with the rest of the system is close to maximal, \ie $\Delta S\approx 1$. This is not surprising, as it is known~\cite{vidmar_entanglement_2017}  that every local site becomes maximally entanglement in the thermodynamic limit $N\to \infty$. Furthermore, we see that for $\epsilon_{\min}=.9$, the actual energy cost from the single site extraction is comparably close to our lower bound. This indicates that for certain parameter choices our minimal energy cost could actually be achieved in this concrete physical scenario.

\begin{figure}[t]
\begin{center}
  \includegraphics[width=\linewidth]{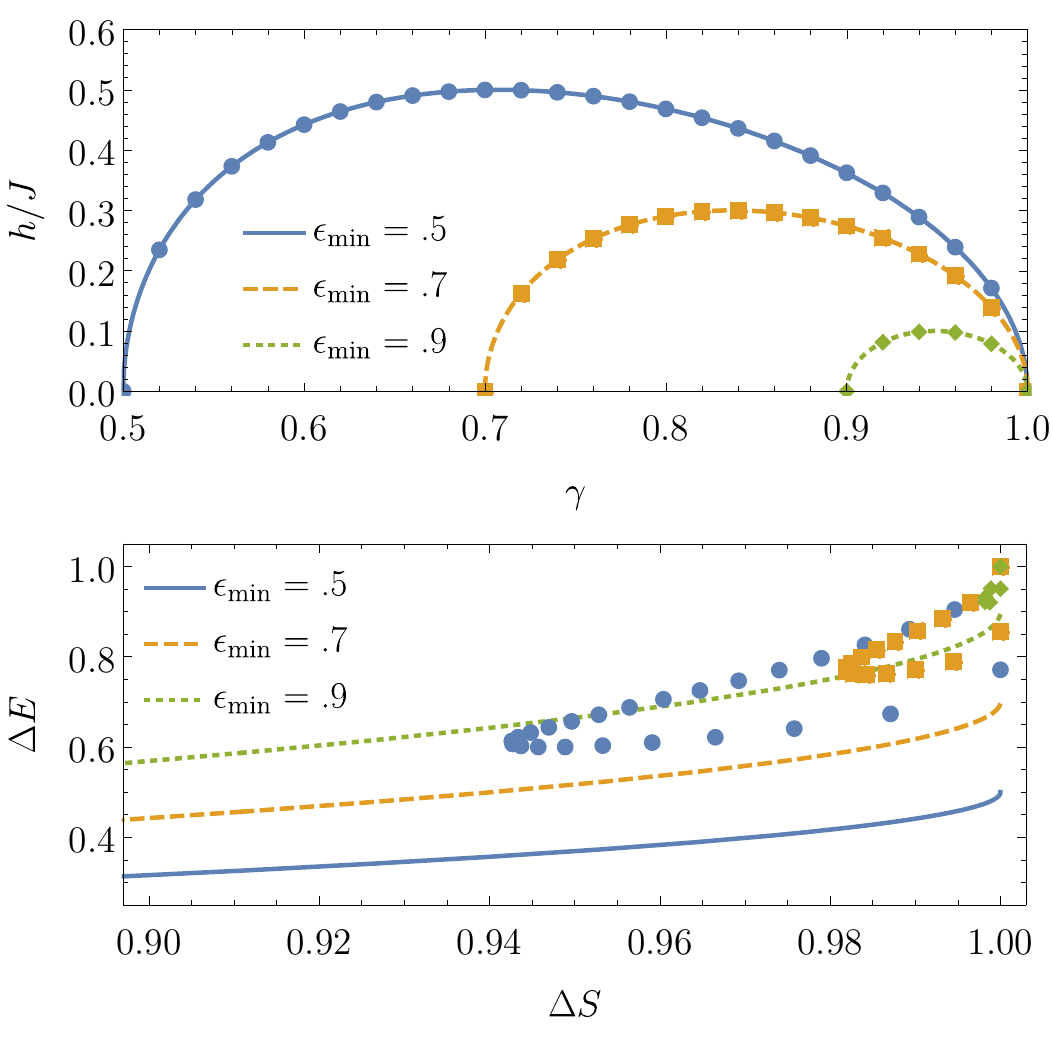}
\end{center}\vspace{-4mm}
\caption{We consider the XY model with $N=10$ and choose the model parameters $h/J$ and $\gamma$, such that the minimal excitation energy $\epsilon_{\min}$ is in $\{.5,.7,.9\}$ as indicated in the upper panel. For each point in the parameter space, we then compute the energy cost of entanglement extraction between a single site mode and its partner. We compare these sample points with our lower bounds for the given energy.}
\label{fig:XYsingle} 
\end{figure}

\section{Discussion and outlook}\label{sec:discussion}
In this work, we have found the minimal energy cost of extracting a pair of entangled partner modes from the ground state of a quadratic Hamiltonian, and we have characterized the modes which achieve this minimal energy cost.

The minimal energy cost is linear in the spectrum of the Hamiltonian restricted to the two-mode space spanned by the extracted partner modes. Hence, as one may intuitively expect, the globally lowest energy cost possible is achieved by partner modes constructed from  the two lowest energy eigenmodes of the Hamiltonian.
In some applications, access to such less localized energy eigenmodes may physically not be possible. However, our analysis of the XY spin model showed instances of modes that are localized on a single site but still came relatively close to the global minimum, with their delocalized partner mode.

Interestingly, the minimal energy cost of entanglement extraction can be used to derive an upper bound on a Hamiltonian's spectral gap: If it is possible to extract a pair of partner modes sharing entanglement $S$ from the ground state, while only increasing the system's energy by an amount $\Delta E$, then the Hamiltonian's lowest eigenvalue is at most as high as the value $\epsilon$ implied by the minimal energy cost relation derived above.

Apart from the question of the global minimal energy cost, as dictated by the spectrum of the full Hamiltonian, it is meaningful to ask whether a given pair of partner modes minimize the energy cost locally, \ie whether they achieve the minimal energy cost allowed for by the spectrum of the Hamiltonian restricted to their two-mode subspace. If not, then there exist symplectic transformations acting only on the given partner modes which yield another pair of partner modes achieving the local minimum.

We showed that for bosonic modes there are  only specific pairs of partner modes which achieve the local minimal energy cost. These modes are  obtained by squeezing the odd and even superpositions of the eigenmodes of the restricted Hamiltonian.
However, for fermionic modes, if the amount of extracted entanglement lies below a critical value, all partner modes achieve the local minimal energy cost. Only above the critical value, specific partner modes are required which are obtained by squeezing the eigenmodes of the restricted Hamiltonian. 
The latter is due to the fact that above the critical value, the lowest excited energy eigenstate of the two-mode subsystem then coincides with the product of the restricted ground states of the partner modes.

An intriguing question for future research will be to analyze the dynamics of the extraction process due to the time evolution of the source modes under the source system's Hamiltonian.
In certain cases this dynamic may be relatively simple. For example, the partner modes which minimize the energy cost lie within the subspace of only the two lowest energy eigenmodes of the system. Hence their time evolution is restricted to this subspace.
In general, for partner modes constructed from a range of energy eigenmodes, a more complex time evolution is expected. 
This would affect how the extraction from such source modes can be practically implemented, as well as it affects how the energy injected during the extraction process  dissipates inside the source system after the extraction process.
Indeed, the prospect that such complex dynamics potentially could be exploited in the design of entanglement extraction protocols motivates further study of the dynamics and time evolution of partner mode pairs.

\subsection{Outlook: Extraction from arbitrary modes}
In this work, we assumed that the system modes which are swapped onto the target modes  can be selected freely from the source system.
This led us to choose a pair of partner modes which is the only way to obtain a pure final state in the target modes.
Hence, we derived the energy cost of replacing the state of arbitrary partner modes in a system's ground state with the product state of their restricted ground states. In particular, we established an upper bound to this cost in terms of the systems largest excitation energy.

In practical implementations of entanglement extraction protocols, however, the access to the source system might be limited.
For example,  the modes could be restricted to be localized in a given subregion of the system, or to lie in the span of a limited set of modes. In particular, such restriction could make it impossible to access  pairs of partner modes. It could also be interesting to consider randomly chosen modes, see, e.g., \cite{dahlsten_entanglement_2014}.

This raises the important question what our results imply for more general extraction protocols that are not based on partner modes. 
We expect that the minimal energy cost for entanglement extraction from partner modes, that we derived here, also constitutes a lower bound for the minimal energy cost for swapping pairs of arbitrary single modes $A$ and $B$ out of the system that are not necessarily partners.

If the modes $A$ and $B$ are not partner modes then other entanglement monotones than the entanglement entropy $S(\rho_A)$ or $S(\rho_B)$ are necessary to measure the entanglement between them, such as  logarithmic negativity or entanglement of purification need to be used instead.
This is because the modes $A$ and $B$ share entanglement with the remainder of the system, such that their joint state $\rho_{AB}$, \ie the state of the subsystem $A\oplus B$, is mixed.

Whichever mixed state entanglement $\mathcal{N}_{AB}$ measure we employ, we may assume that the lower entanglement entropy of the two modes $S_{\min}=\min (S(\rho_A),S(\rho_B))$ implies  an upper bound on $\mathcal{N}_{AB}$ because the entanglement between a mode $A$ and its complement is larger than the entanglement between modes $A$ and $B$.
In particular, this means that a pair of partner modes sharing entanglement $S_{\min}$ is at least as valuable as the two-mode state $\rho_{AB}$.

However, we expect that the minimal energy cost for the extraction of a partner mode pair $A$ and $\bar{A}$ with entanglement $S_{\min}$ is always lower than the energy cost of extracting modes $A$ and $B$. In fact, this can be proven for the case of a Hamiltonian with degenerate excitation energies and we have numerical evidence for the general case~\cite{hackl_jonsson_2019b}.

In this way, the derived minimal energy cost from partner mode extraction actually applies much more broadly to essentially any entanglement monotone (of Gaussian states) and any two target modes $A$ and $B$.
Note, however, that this minimal energy cost will only be saturated if we are actually using partner modes, which are in this sense optimal.
Where this is prevented by additional constraints, \eg locality assumptions, we are likely to underestimate the actually required energy cost.

\subsection{Outlook: Mixed states}
Furthermore, it will be interesting to consider extraction from states other than the ground state of the source system. Of particular relevance for practical implementations should be general mixed states and thermal states. 
There, we may encounter situations with vanishing or even negative energy cost. Similarly, in quantum systems with spontaneously broken symmetries, \ie with several ground states, one might be able to extract entanglement by moving from one ground state to another.

In the context of mixed states, but also in general, it should be relevant to consider more general methods of entanglement extraction. Instead of replacing the entangled state of a subsystem by a product state in one discrete step, one could consider more general local operations on the subsystem such as the  local operations and classical communication (LOCC) envisioned in the general framework of~\cite{beny_energy_2018}. This may yield an infinitesimal relation between entropy and energy increase around a given system state.

For these and other future research directions, the advancement of the partner mode formula provided in this work should prove instrumental because it can yield better insight into the correlation structure of mixed Gaussian states.
In particular, we are interested in extending the partner mode formula to mixed states. This allows, \eg for the decomposition of a mixed Gaussian states into chains of $k$-tuples of modes where the modes in each $k$-tuple are not correlated with each other, and only correlated with modes in the preceding and subsequent tuple. In fact, this construction does not only apply to mixed states but to bi-linear forms in general. Thus, it may be used to further develop techniques such as \cite{chin_exact_2010}  to efficiently describe the interaction of several quantum systems with a given bath of harmonic oscillators, with applications ranging from open system dynamics to fundamental quantum  communication between local observers via relativistic fields \cite{mancini_preserving_2014,jonsson_information_2015,jonsson_information_2016,jonsson_quantum_2017,jonsson_quantum_2014,jonsson_transmitting_2018}.

\section*{Acknowledgements}
The authors thank Dmytro Bondarenko, Tommaso Guaita, Terry Farrelly, and Tommaso Roscilde for inspiring discussions.
LH is supported by the the Max Planck Harvard Research Center for Quantum Optics and the Deutsche Forschungsgemeinschaft (DFG, German Research Foundation) under Germany’s Excellence Strategy – EXC-2111 – 39081486.
LH thanks the QMATH center for their hospitality during several visits related to this project.
RHJ acknowledges support by ERC Advanced grant 321029 and VILLUM FONDEN via the QMATH center of excellence (grant no.10059).

\appendix

\section{Standard forms of Gaussian states and the complex structure}\label{app:normalform}

The linear complex structure associated to a quantum state is a powerful method to parametrize Gaussian states. This is because most properties of the partial, mixed state  of subsystems,  such as the von Neumann entropy and the full entanglement spectrum, are encoded in the restriction of the complex structure to the relevant sub-phase space.
Therefore, this appendix uses the complex structure to rephrase some known properties of Gaussian states  in a largely basis-independent way. 
The key properties of the complex structure are reviewed and derived based on compatibility with the symplectic form (for bosons) or the positive-definite metric (for fermions). This appendix is in parts based on  the respective appendix in~\cite{hackl_aspects_2018}.

\begin{definition}
	Given a finite-dimensional real vector space $V$, we refer to a linear map $J: V\to V$ as complex structure if it satisfies $J^2=-\mathbb{1}$.
\end{definition}
The spectrum of a linear complex structure takes the following simple form.
\begin{proposition}
	A complex structure is diagonalizable over the complex numbers with eigenvalues that come in conjugate pairs as $\pm \ii$ with equal multiplicity. This also implies that $V$ must be even dimensional.
\end{proposition}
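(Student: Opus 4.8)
The plan is to prove the proposition by a standard linear-algebra argument about a real linear map whose square is $-\mathbb{1}$. First I would view $J$ as acting on the complexification $V_{\mathbb{C}}=V\otimes_{\mathbb{R}}\mathbb{C}$ and note that any eigenvalue $\lambda$ of $J$ on $V_{\mathbb{C}}$ must satisfy $\lambda^2=-1$, since the minimal polynomial of $J$ divides $x^2+1$. Hence $\lambda\in\{+\ii,-\ii\}$, and moreover $x^2+1=(x-\ii)(x+\ii)$ has no repeated roots, so $J$ is diagonalizable over $\mathbb{C}$.

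Next I would establish that the two eigenvalues occur with equal multiplicity. The key observation is that $J$ is a \emph{real} matrix, so complex conjugation on $V_{\mathbb{C}}$ (with respect to the real form $V$) commutes with the action of $J$ in the sense that $J\overline{w}=\overline{Jw}$. Therefore, if $w$ is an eigenvector with eigenvalue $+\ii$, then $\overline{w}$ is an eigenvector with eigenvalue $-\ii$, and conjugation gives a (conjugate-linear) bijection between the $+\ii$-eigenspace $V_+$ and the $-\ii$-eigenspace $V_-$. This forces $\dim_{\mathbb{C}}V_+=\dim_{\mathbb{C}}V_-=:n$. Since $V_{\mathbb{C}}=V_+\oplus V_-$ by diagonalizability, we get $\dim_{\mathbb{R}}V=\dim_{\mathbb{C}}V_{\mathbb{C}}=2n$, so $V$ is even-dimensional.

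There is essentially no hard part here; the argument is routine once one passes to the complexification. The only point that requires a little care is making precise the claim that conjugation intertwines the two eigenspaces, which relies on $J$ having real entries (equivalently, on $J$ preserving the real subspace $V\subset V_{\mathbb{C}}$). I would state this explicitly: for $w\in V_{\mathbb{C}}$ with $Jw=\ii w$, applying componentwise conjugation and using $\overline{J}=J$ yields $J\overline{w}=\overline{Jw}=\overline{\ii w}=-\ii\,\overline{w}$, so $w\mapsto\overline{w}$ maps $V_+$ onto $V_-$ and is injective. Finally I would remark, as a sanity check consistent with the rest of the paper, that in a basis adapted to the splitting $V=A\oplus B$ with $\dim A=\dim B=n$ one recovers the familiar matrix form $J\equiv\begin{pmatrix}0 & \mathbb{1}_n\\ -\mathbb{1}_n & 0\end{pmatrix}$, which visibly squares to $-\mathbb{1}$ and has eigenvalues $\pm\ii$ each with multiplicity $n$.
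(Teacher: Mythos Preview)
Your proof is correct and follows essentially the same approach as the paper: the paper's own proof simply asserts that $J^2=-\mathbb{1}$ forces diagonalizability with eigenvalues squaring to $-1$, and that reality of $J$ makes the non-real eigenvalues come in conjugate pairs of equal multiplicity. You spell out the underlying reasons (the minimal polynomial divides $x^2+1$ and has distinct roots; conjugation intertwines the $\pm\ii$-eigenspaces), which the paper leaves implicit, but the logical route is the same.
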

\begin{proof}
The equality $J^2=-\id$ implies that $J$ is diagonalizable and every eigenvalue squares to $-1$. Moreover, $J$ is a real map which means that all non-real eigenvalues come in complex conjugate pairs of equal multiplicity. Thus, the spectrum of $J$ is given by $\pm \ii$ with equal multiplicity equal to half of the dimension of $V$. The last part implies that such a linear map $J$ can only exist in an even dimensional real vector space $V$.
\end{proof}
We will be particularly interested in the restriction of linear complex structures.
\begin{proposition}\label{prop:J_decomp}
	Given a complex structure $J$ over $V$ and a direct sum decomposition of $V=A\oplus B$, we have the unique decomposition
	\begin{align}
	\begin{split}
	J=\left(\begin{array}{c|c}
	J_A & J_{AB}\\
	\hline
	J_{BA} & J_B
	\end{array}\right)\quad\text{with}\quad\\
	\begin{array}{rcc}
	J_A:& A\to A:& a\mapsto \pro_A(Ja)\,,\\
	J_B:& B\to B:& b\mapsto \pro_B(Jb)\,,\\
	J_{AB}:& B\to A:& b\mapsto \pro_A(Jb)\,,\\
	J_{BA}:& A\to B:& a\mapsto \pro_B(Ja)\,,
	\end{array}
	\end{split}
	\end{align}
	where the projections $\pro_A: V\to A$ and $\pro_B: V\to B$ with $\pro_A+\pro_B=\id$ provide the unique decomposition of a vector $v=a+b$ into its part in $a=p_A(v)\in A$ and $b=p_B(v)\in B$.\\
	Given such a decomposition, the spectrum of $J^2_A$ and $J^2_B$ is the same except for the number of eigenvalues equal to $-1$. Put differently, given an eigenvalue $\lambda$ of $J^2_A$, we have $\lambda=-1$ or $J^2_B$ also has $\lambda$ as eigenvalue. 
\end{proposition}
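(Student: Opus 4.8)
The plan is to read off the block form of $J$ from the complementary projectors, to extract from $J^2=-\id$ a direct relation between $J_A^2$, $J_B^2$ and the off-diagonal blocks, and then to appeal to the elementary fact that a product of two rectangular maps and its reverse share the same spectrum away from zero.

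First I would record that, since $\pro_A+\pro_B=\id$ with $\pro_A^2=\pro_A$, $\pro_B^2=\pro_B$ and $\pro_A\pro_B=\pro_B\pro_A=0$, every linear map on $V$ is the sum of its four ``sandwiched'' pieces $\pro_X J\pro_Y$; applied to $J$ this is precisely the stated block decomposition, with $J_A=\pro_A J|_A$, $J_{AB}=\pro_A J|_B$, and so on. This part is purely formal. Expanding $J^2=-\id$ in this block form and matching the $A\to A$ and $B\to B$ components yields
\begin{align}
    J_A^2+J_{AB}J_{BA}&=-\id_A\,,\\
    J_B^2+J_{BA}J_{AB}&=-\id_B\,,
\end{align}
while the off-diagonal blocks give $J_AJ_{AB}=-J_{AB}J_B$ and $J_{BA}J_A=-J_BJ_{BA}$, which we do not need. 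Consequently a scalar $\lambda$ is an eigenvalue of $J_A^2$ exactly when $-1-\lambda$ is an eigenvalue of the composition $J_{AB}J_{BA}:A\to A$, and likewise $\lambda$ is an eigenvalue of $J_B^2$ exactly when $-1-\lambda$ is an eigenvalue of $J_{BA}J_{AB}:B\to B$.

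Next I would invoke the standard linear-algebra lemma---of the kind assembled in Appendix~\ref{app:spectrumproof}---that for linear maps $P:B\to A$ and $Q:A\to B$ the characteristic polynomials of $PQ$ and $QP$ agree up to multiplication by a power of the indeterminate; in particular $PQ$ and $QP$ have exactly the same nonzero eigenvalues, with the same algebraic multiplicities, and may differ only in the multiplicity of the eigenvalue $0$. Applying this with $P=J_{AB}$ and $Q=J_{BA}$: whenever $\lambda\neq-1$, so that $-1-\lambda\neq0$, we have $-1-\lambda\in\mathrm{spec}(J_{AB}J_{BA})$ if and only if $-1-\lambda\in\mathrm{spec}(J_{BA}J_{AB})$, hence $\lambda\in\mathrm{spec}(J_A^2)$ if and only if $\lambda\in\mathrm{spec}(J_B^2)$. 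The only eigenvalue that can occur for one of $J_A^2$, $J_B^2$ but not the other is therefore $\lambda=-1$, corresponding precisely to the eigenvalue $0$ of the two compositions.

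The one point demanding care---rather than a genuine obstacle---is exactly this $\lambda=-1$ case: the kernels of $J_{AB}J_{BA}$ and $J_{BA}J_{AB}$ generally have different dimensions, controlled by $\dim A-\dim B$ through the rank bookkeeping behind the lemma, which is why the statement must except the eigenvalue $-1$. Beyond that, the argument is just the block expansion of $J^2=-\id$ together with the $PQ$/$QP$ lemma.
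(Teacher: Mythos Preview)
Your argument is correct and follows essentially the same route as the paper: expand $J^2=-\id$ in blocks to obtain $J_A^2+J_{AB}J_{BA}=-\id_A$ and $J_B^2+J_{BA}J_{AB}=-\id_B$, and then use that $J_{AB}J_{BA}$ and $J_{BA}J_{AB}$ share their nonzero spectrum; the paper carries out this last step by the explicit eigenvector transport $a\mapsto b:=J_{BA}a$ rather than invoking the $PQ$/$QP$ lemma as a black box. One small correction: your pointer to Appendix~\ref{app:spectrumproof} is misplaced---that appendix concerns the Courant--Fischer--Weyl min-max principle for restricted Hamiltonians, not the $PQ$/$QP$ spectral identity, which is simply standard linear algebra.
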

\begin{proof}
	We write $J^2=-\mathbb{1}$ in blocks to find
	\begin{align}\label{eq:J2_as_block_matrix}
	J^2&=\left(\begin{array}{c|c}
	J_A^2+J_{AB}J_{BA} & J_AJ_{AB}+J_{AB}J_B\\
	\hline
	J_BJ_{BA}+J_{BA}J_A & J_B^2+J_{BA}J_{AB}
	\end{array}\right)\nonumber\\
	&=\left(\begin{array}{c|c}
	-\mathbb{1}_A & 0\\
	\hline
	0 & -\mathbb{1}_B
	\end{array}\right)\,.
	\end{align}
	Let us consider the eigenvector $a\in A$ with $J_A^2\,a=\alpha\, a$. From the first diagonal block equation, we find that $a$ must also be eigenvector of $J_{AB}J_{BA}$ with $J_{AB}J_{BA}\,a=-(1+\alpha)a$. Let us define $b:=J_{BA}\,a$. Now there are two possibilities: Either $b=0$ which implies $\alpha=-1$ or $b\neq 0$, in which case we must have $J_{BA}J_{AB}\,b=-(1+\alpha)J_{BA}\,a=-(1+\alpha)\,b$. The second diagonal block equation then implies $J_B^2\,b=\alpha\,b$.
\end{proof}

\subsection{Compatible symplectic form (bosons)}\label{app:standardform-boson}
For bosonic systems, we are interested in linear complex structures that are compatible with the underlying symplectic form or vice versa.
\begin{definition}
	An antisymmetric, non-degenerate bilinear form $\omega:V\times V\to \mathbb{R}$ is called compatible to a complex structure $J: V\to V$ if and only if it satisfies the following conditions:
	\begin{itemize}
		\item[(I)] Skew-symmetry: The symplectic form satisfies $\omega(Jv,w)=-\omega(v,Jw)$.
		\item[(II)] Taming: The bilinear form $g(v,w):=\omega(v,Jw)$ is positive definite.
	\end{itemize}
	We will  refer to $\omega$ as a $J$-compatible symplectic form and to $J$ as a $\omega$-compatible complex structure.
\end{definition}
The compatibility conditions implies the following invariance property.
\begin{proposition}
	A $J$-compatible symplectic form $\omega$ is invariant under $J$ meaning
	\begin{align}
	    \omega(Jv,Jw)=\omega(v,w)
	\end{align}
	for all $v,w\in V$, thus  $J\in \mathrm{Sp}(V,\omega)$. 
	The bilinear form 
	\begin{align}\label{eq:g_defn_boson_app}
	    g(v,w):=\omega(v,Jw)
	\end{align}
	is symmetric and positive definite and thus a proper metric.
\end{proposition}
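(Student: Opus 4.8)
The plan is to obtain both assertions as direct algebraic consequences of the two compatibility conditions together with the identity $J^2=-\mathbb{1}$, using nothing beyond bilinearity and antisymmetry of $\omega$. The statement is essentially a bookkeeping exercise with signs, so I expect no genuine difficulty.

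\emph{Invariance of $\omega$.} First I would apply the skew-symmetry condition~(I), $\omega(Jv,w)=-\omega(v,Jw)$, with $v$ replaced by $Jv$. This yields $\omega(J^2v,w)=-\omega(Jv,Jw)$, and since $J^2=-\mathbb{1}$ the left-hand side equals $-\omega(v,w)$. Hence $\omega(Jv,Jw)=\omega(v,w)$ for all $v,w\in V$. Because $\omega$ is non-degenerate and $J$ is invertible (indeed $J^{-1}=-J$), this identity says precisely that $J$ preserves $\omega$, i.e.\ $J\in\mathrm{Sp}(V,\omega)$.

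\emph{Symmetry and positivity of $g$.} For $g(v,w):=\omega(v,Jw)$ I would compute $g(w,v)=\omega(w,Jv)$, use the antisymmetry of $\omega$ to rewrite this as $-\omega(Jv,w)$, and then invoke condition~(I) once more to obtain $-\omega(Jv,w)=\omega(v,Jw)=g(v,w)$. Thus $g$ is symmetric. Positive-definiteness of $g$ is literally the taming condition~(II). Therefore $g$ is a symmetric positive-definite bilinear form, i.e.\ a proper metric.

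The only point requiring care is applying condition~(I) to the correct argument (substituting $Jv$ rather than $v$) and keeping track of the signs coming from $J^2=-\mathbb{1}$ and from the antisymmetry of $\omega$; beyond that the argument is immediate.
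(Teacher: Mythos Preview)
Your proof is correct and essentially identical to the paper's: both derive invariance by combining condition~(I) with $J^2=-\mathbb{1}$ (the paper moves $J$ onto the second argument rather than the first, which is an immaterial difference), and both obtain symmetry of $g$ from the antisymmetry of $\omega$ together with~(I), while positive-definiteness is condition~(II) verbatim.
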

\begin{proof}
	Straightforward computation leads to
	\begin{align}
	\omega(Jv,Jw)
	=-\omega(v,-\mathbb{1}w)=\omega(v,w)\,.
	\end{align}
	Every linear map $M: V\to V$ that preserves the symplectic form $\omega$ in the sense of $\omega(Mv,Mw)=\omega(v,w)$ is part of the symplectic group $\mathrm{Sp}(V,\omega)$. 
	The form $g(v,w)=\omega(v,Jw)$ is symmetric due to
	\begin{align}
	g(v,w)=\omega(v,Jw)
	=\omega(w,Jv)=g(w,v)\,.
	\end{align}
	The taming property (II) of $\omega$ ensures that $g$ is positive definite, and thus a proper positive definite metric.
\end{proof}

The compatibility conditions  characterize  the possible spectra of  restrictions of the  linear complex structure to symplectic subspaces.
\begin{proposition}\label{prop:Jomega_compatible}
	Let the  complex structure $J$ and the symplectic form $\omega$  on a  vector space $V$ be compatible, and let  $V=A\oplus B$ a decomposition of $V$ into a direct sum of even dimensional symplectic complements. 
	Then, the restricted complex structure $J_A$ is diagonalizable and its spectrum consists of complex conjugate pairs $\pm i c_i$ with $c_i\in [1,\infty)$. Both $J_A$ and $\omega_A$ can simultaneously be brought into the  block diagonal form
	\begin{align}
	J_A\equiv\bigoplus^{N_A}_{i=1}\left(\begin{array}{cc}
	0 & c_i\\
	-c_i & 0
	\end{array}\right)\,\text{and}\,\,\,
	\omega_A\equiv\bigoplus^{N_A}_{i=1}\left(\begin{array}{cc}
	0 & -1\\
	1 & 0
	\end{array}\right)\,. \label{eq:JA_omegaA_block_form}
	\end{align}
\end{proposition}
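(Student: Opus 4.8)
The plan is to reduce the statement to the interplay of the two restricted forms $g_A := g|_{A\times A}$ (positive definite) and $\omega_A := \omega|_{A\times A}$ (non-degenerate, since $A$ is a symplectic complement, so in particular $A$ is even-dimensional). First I would observe that, although $g$ need \emph{not} be block-diagonal with respect to $V = A\oplus B$, the restricted objects still satisfy the defining taming relation amongst themselves: writing $Ja' = J_A a' + \pro_B(Ja')$ for $a'\in A$ and using $\omega(a,\pro_B(Ja')) = 0$ by symplectic orthogonality of $A$ and $B$, one obtains $g_A(a,a') = \omega_A(a,J_A a')$ for all $a,a'\in A$. Combining this with the symmetry of $g_A$ and the antisymmetry of $\omega_A$, two short manipulations show that $J_A$ is $\omega_A$-skew and hence $g_A$-antisymmetric; moreover $J_A$ is invertible, since $J_A a' = 0$ would force $g_A(\,\cdot\,,a') = \omega_A(\,\cdot\,,J_A a') \equiv 0$ and hence $a'=0$.

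An invertible antisymmetric operator on the Euclidean space $(A,g_A)$ is diagonalizable over $\mathbb{C}$ with spectrum $\{\pm\ii c_i\}$, $c_i>0$, and admits a $g_A$-orthonormal basis in which it is block-diagonal with $2\times2$ blocks $\left(\begin{smallmatrix}0&c_i\\-c_i&0\end{smallmatrix}\right)$. Rescaling the $i$-th conjugate pair of basis vectors by $\sqrt{c_i}$ leaves $J_A$ in this form while turning $g_A$ into $\bigoplus_i c_i\,\mathbb{1}_2$ and $\omega_A$ into the standard Darboux form $\bigoplus_i\left(\begin{smallmatrix}0&-1\\1&0\end{smallmatrix}\right)$, which is exactly the simultaneous normal form of \eqref{eq:JA_omegaA_block_form}. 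Note that up to this point only the compatibility of $\omega$ and $J$ has been used, not $J^2=-\mathbb{1}$.

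The hard part will be upgrading $c_i>0$ to $c_i\ge1$; this is where $J^2=-\mathbb{1}$ enters. I would first prove the global uncertainty inequality: the Hermitian form $h(v,w):=g(\bar v,w)+\ii\,\omega(\bar v,w)$ on the complexification $V_{\mathbb C}$ is positive semidefinite. Writing $w=u+\ii v$ with $u,v\in V$, one computes $h(w,w)=g(u,u)+g(v,v)-2\,\omega(u,v)$; since $\omega(u,v)=g(Ju,v)$ and $g(Ju,Ju)=g(u,u)$, Cauchy--Schwarz followed by AM--GM gives $2\,\omega(u,v)\le g(u,u)+g(v,v)$, whence $h(w,w)\ge0$. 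The restriction of $h$ to $A_{\mathbb C}$ stays positive semidefinite, and evaluating it on the $i$-th $2$-dimensional block of the normal-form basis constructed above reduces this to $\left(\begin{smallmatrix}c_i&-\ii\\ \ii&c_i\end{smallmatrix}\right)\ge0$, i.e.\ $c_i^2\ge1$. This gives $c_i\in[1,\infty)$ and finishes the proof.

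I expect the only genuinely delicate point to be the first step: one must resist treating $g$ as block-diagonal across $A\oplus B$ (it is block-diagonal only when $B$ is $J$-invariant) and instead notice that its mixed part is invisible to the relation $g_A(a,a')=\omega_A(a,J_A a')$ precisely because $A\perp_\omega B$. Everything else is classical: the antisymmetric-operator normal form is standard, and the bound $c_i\ge1$ is the familiar statement that the symplectic eigenvalues of a reduced pure-state covariance matrix are at least one, here obtained intrinsically from the taming condition and $J^2=-\mathbb{1}$. One could add the remark that \eqref{eq:JA_omegaA_block_form} is simply the Williamson normal form of $g_A$ on the symplectic space $(A,\omega_A)$, the $c_i$ being the symplectic eigenvalues of the reduced covariance matrix --- equivalently $c_i=\cosh 2r_i$ in the parametrization of \eqref{eq:sta}.
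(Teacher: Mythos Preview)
Your proof is correct, and it proceeds along a genuinely different route from the paper's. The paper works with the block decomposition of $J$ across $A\oplus B$: it argues diagonalizability of $J_A$ from diagonalizability of $J_A^2=-\mathbb{1}_A-J_{AB}J_{BA}$, then obtains the bound $c_i\ge1$ by expanding $g(J_Aa,J_Aa)=g(Ja-J_{BA}a,Ja-J_{BA}a)\ge g(a,a)$, using that the cross term $g(Ja,J_{BA}a)$ vanishes by symplectic orthogonality; the simultaneous normal form is then quoted from a classification of diagonalizable $\mathfrak{sp}(2N_A,\mathbb{R})$-elements with imaginary spectrum. Your approach instead stays intrinsic to $A$: the identity $g_A(a,a')=\omega_A(a,J_Aa')$ makes $J_A$ skew with respect to the \emph{positive-definite} form $g_A$, which gives diagonalizability cleanly via the spectral theorem (in fact more cleanly than the paper, whose step ``$J_A^2$ diagonalizable $\Rightarrow$ $J_A$ diagonalizable'' is stated without justification); the normal form then follows from the standard antisymmetric-operator canonical form plus a rescaling; and the bound $c_i\ge1$ comes from positivity of the sesquilinear form $g+\ii\omega$ on $V_{\mathbb{C}}$, which is precisely the Heisenberg uncertainty inequality phrased covariantly. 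What the paper's computation buys is that the off-diagonal block $J_{BA}$ is already in hand for the next proposition (the full standard form of $J$ and the partner-mode formula), whereas your argument, while more conceptual and self-contained for this proposition, does not set up that machinery. Your final remark identifying the $c_i$ as the Williamson symplectic eigenvalues of $g_A$ on $(A,\omega_A)$, equivalently $c_i=\cosh 2r_i$, is exactly right and worth keeping.
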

\begin{proof}
    The restricted linear complex structure $J_A$ is diagonalizable due to the fact that it squares to the diagonalizable matrix $J_A^2=-\mathbb{1}_A-J_{AB}J_{BA}$.
	Given a non-zero eigenvector $a\in A$ with $J_A^2\,a=\alpha\, a$, we have $\alpha\in (-\infty,-1]$ by the following argument: From
	\begin{align}
	\alpha \underbrace{g(a,a)}_{>0}=g(a,J_A^2a)=-\underbrace{g(J_Aa,J_Aa)}_{\geq 0}
	\end{align}
	follows $\alpha\leq 0$. Moreover, we can compute
	\begin{align}
	\begin{split}
    &g(J_Aa,J_Aa)=g(Ja-J_{BA}a,Ja-J_{BA}a)\\
	&=\underbrace{g(Ja,Ja)}_{=g(a,a)\geq 0}+\underbrace{g(J_{BA}a,J_{BA}a)}_{\geq0}-2\underbrace{g(Ja,J_{BA}a)}_{=0}\,,
	\end{split}
	\end{align}
	where we used in the last step that $A$ and $B$ are symplectic complements with $g(Ja,J_{BA}a)=\omega(Ja,JJ_{BA}a)=-\omega(J^2a,J_{BA}a)=\omega(a,J_{BA}a)=0$. We thus find  $g(J_Aa,J_Aa)\geq g(a,a)$ which implies $\alpha\in (-\infty,-1]$.\\
	From this, we find that $J_A$ must have eigenvalues appearing in complex conjugate pairs $\pm \ii c$ with $c=\sqrt{-\alpha}$, meaning that every eigenvalue $\alpha$ of $J_A^2$ appears with even multiplicity.
	Since $J_A$ is a linear real map with complex conjugated eigenvalues $\pm \ii c$, it can can be brought into the block diagonal form of  \eqref{eq:JA_omegaA_block_form}. In particular, in such a basis, $J_A^2$ is diagonal. Next, we show that $\omega_A$ simultaneously takes the form claimed above.\\
	While the full $J$ satisfies $\omega(Jv,Jw)=\omega(v,w)$ for $v,w\in V$, the restricted $J_A$ does in general not satisfy $\omega_A(J_A a,J_Aa')=\omega_A(a,a')$ for $a,a'\in A$. However, we can use the relation 
	$\omega(Jv,w)=-\omega(v,Jw)$ for $v,w\in V$ to derive the relation
	\begin{align}
	\begin{split}
	    \omega_A(J_Aa,a')&=\omega(Ja,a')=-\omega(a,Ja')\\
	    &\quad=-\omega_A(a,J_Aa')\,,
	\end{split}
	\end{align}
	which is the condition for $J_A$ to represent a symplectic algebra element, \ie $J_A\in \mathfrak{sp}(2N_A,\mathbb{R})$. For a diagonalizable symplectic algebra element with purely imaginary eigenvalues, we can always choose a basis, \eg see proposition~3.1.18 in~\cite{abraham1978foundations}, such that the matrix representations of $J_A$ and $\omega_A$ are given by~\eqref{eq:JA_omegaA_block_form}.
\end{proof}

This result is central for computing the entanglement spectrum of bosonic systems. In particular, the fact that the magnitude of possible eigenvalues lies in the interval $[1,\infty)$ indicates that the entanglement entropy in bosonic systems can be arbitrarily large. This can also be related to the non-compactness of symplectic group whose group elements relate different linear complex structures.

Before we construct the standard form for the matrix representation of $J$ for a general symplectic decomposition $V=A\oplus B$, we show that if $J_A^2=-\id_A$, \ie when the restricted complex structure defines a complex structure on the subspace $A$, then the complex structure leaves $A$ and $B$ invariant.
\begin{proposition}\label{prop:short_on_JA2_iscomplex}
    In the setup of the previous Proposition~\ref{prop:Jomega_compatible}, if $J_A^2=-\id$, then $J_{BA}=0$ vanishes, furthermore $J_B^2=-\id$ and $J_{AB}=0$.
\end{proposition}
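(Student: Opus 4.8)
The plan is to read off $J^2=-\id$ in the block form of Proposition~\ref{prop:J_decomp} and then use positivity of the taming metric $g(v,w)=\omega(v,Jw)$ to eliminate the off-diagonal blocks, reusing two identities that already appear inside the proof of Proposition~\ref{prop:Jomega_compatible}. First I would write out $J^2=-\id$ blockwise; the $(A,A)$-component reads $J_A^2+J_{AB}J_{BA}=-\id_A$, so the hypothesis $J_A^2=-\id_A$ forces $J_{AB}J_{BA}=0$. This alone is not enough, so the substance of the argument is to show $J_{BA}=0$ first.

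For that step I would invoke two facts established in the proof of Proposition~\ref{prop:Jomega_compatible}: (i) the $g$-antisymmetry of $J_A$ (a consequence of $J_A\in\mathfrak{sp}(2N_A,\mathbb{R})$ shown there), which gives $g(a,J_A^2a)=-g(J_Aa,J_Aa)$ for all $a\in A$; and (ii) the identity $g(J_Aa,J_Aa)=g(a,a)+g(J_{BA}a,J_{BA}a)$, where the cross term vanishes because $A$ and $B$ are symplectic complements. Substituting $J_A^2=-\id_A$ into (i) gives $g(J_Aa,J_Aa)=g(a,a)$, and comparison with (ii) yields $g(J_{BA}a,J_{BA}a)=0$ for every $a\in A$. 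Since $g$ is positive definite, $J_{BA}a=0$ for all $a$, i.e.\ $J_{BA}=0$.

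Once $J_{BA}=0$, the $(B,B)$-block of $J^2=-\id$ collapses to $J_B^2=-\id_B$, which is one of the claims. It remains to show $J_{AB}=0$. I would derive this from the $\omega$-invariance of $J$ (Proposition in the bosonic-compatibility subsection): for $a\in A$ and $b\in B$ one has $0=\omega(a,b)=\omega(Ja,Jb)=\omega(J_Aa,\,J_{AB}b+J_Bb)=\omega_A(J_Aa,J_{AB}b)$, using $Ja=J_Aa$ and that $A,B$ are symplectic complements. Because $J_A$ is invertible ($J_A^2=-\id_A$), $J_Aa$ sweeps out all of $A$, so non-degeneracy of $\omega$ on the symplectic subspace $A$ gives $J_{AB}b=0$ for every $b\in B$, hence $J_{AB}=0$. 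Equivalently, one could simply re-run the first two paragraphs with $A$ and $B$ interchanged, now that $J_B^2=-\id_B$ is known.

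The argument is short and I do not expect a serious obstacle; the only care needed is to cleanly extract and cite the two auxiliary identities (i) and (ii), which are embedded in the proof of Proposition~\ref{prop:Jomega_compatible} rather than stated separately, and to flag that the essential inputs are precisely the positive-definiteness of $g$ and the non-degeneracy of $\omega|_A$, both of which hold because $A$ is an even-dimensional symplectic subspace.
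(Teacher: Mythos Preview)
Your proof is correct and follows essentially the same approach as the paper: both derive $J_{AB}J_{BA}=0$ from the $(A,A)$-block of $J^2=-\id$, then use positive definiteness of $g$ to force $g(J_{BA}a,J_{BA}a)=0$ and hence $J_{BA}=0$, and finally obtain $J_B^2=-\id$ and $J_{AB}=0$. The only cosmetic differences are that you recycle the identity $g(J_Aa,J_Aa)=g(a,a)+g(J_{BA}a,J_{BA}a)$ from the proof of Proposition~\ref{prop:Jomega_compatible} directly (whereas the paper recomputes $g(J_{BA}a,J_{BA}a)$ via a fresh relation between $\omega_A$ and $\omega_B$), and you give a direct non-degeneracy argument for $J_{AB}=0$ where the paper simply invokes the $A\leftrightarrow B$ symmetry you mention as an alternative.
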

\begin{proof}
    Because $V=A\oplus B$ is a symplectic decomposition,  for all $a\in A,b\in B$
    \begin{align}
        0&=\omega(a,b)=\omega(Ja,Jb)\nonumber\\
        &=\omega_A(Ja,Jb)+\omega_B(Ja,Jb)\nonumber\\
        &= \omega_A(J_A a,J_{AB}b)+\omega_B(J_{BA}a,J_{B} b)\,,
    \end{align}
    hence $\omega_A(J_{A}a,J_{AB}b)=-\omega_B(J_{BA}a,J_{B}b)$.
    For $J_A^2=-\id_A$, we find $J_{AB}J_{BA}=0$ due to $-\id=J^2$ and $-\id_A=J_A^2+J_{AB}J_{BA}$.
    To show that $J_{BA}=0$, let $b:=J_{BA}a\in B$ leading to
    \begin{align}
        &g(b,b)=\omega(b,Jb)=\omega_B(J_{BA}a,JJ_{BA}a)\nonumber\\
        &=\omega_B(J_{BA}a,J_{B}J_{BA}a)=-\omega_A(J_A a, J_{AB}J_{BA} a)\,.
    \end{align}
     Hence $g(b,b)=0$ which implies $b=0$ and since $a\in A$ was arbitrary, this means $J_{BA}=0$.\\
     Since $J_A^2=-\id$, by Proposition \ref{prop:J_decomp} we have $J_B^2=-\id$. Hence, by repeating the argument above, we obtain $J_{AB}=0$.
\end{proof}
The derived form for $J_A$ above induces a standard representation of $J$ which yields the standard form of the bosonic covariance matrix presented in~\eqref{eq:sta}.

\begin{proposition}
As before, let the complex structure $J$ and the symplectic form $\omega$ on a vector space $V$ be compatible, and let $V=A\oplus B$ be a decomposition of $V$ into a direct sum of even dimensional symplectic complements. Without loss of generality assume that the dimension of $A$ is lower than, or equal to the  dimension of $B$. Then there exists a symplectic basis with respect to which $J$ takes the form
\begin{widetext}
\begin{align}
    J\equiv \left( 
        \begin{array}{ccc|cccccc}
            \cosh(2r_1)\,\mathbb{A}_2 & \cdots & 0 & \sinh(2r_1)\,\mathbb{S}_2 & \cdots & 0 & 0& \cdots & 0 \\
             \vdots & \ddots & \vdots & \vdots &\ddots & \vdots &\vdots &\ddots & \vdots\\
             0 & \cdots & \cosh(2r_{N_A})\,\mathbb{A}_2 & 0 & \cdots & \sinh(2_{N_A})\,\mathbb{S}_2 & 0 &\cdots & 0\\ \hline
             \sinh(2r_1)\,\mathbb{S}_2 & \cdots & 0 & \cosh(2r_1)\,\mathbb{A}_2 & \cdots & 0 & 0 &\cdots & 0 \\
             \vdots & \ddots & \vdots & \vdots & \ddots &\vdots & \vdots & \ddots & \vdots \\
             0 & \cdots & \sinh(2r_{N_A})\,\mathbb{S}_2 & 0 & \cdots & \cosh(2 r_{N_A})\,\mathbb{A}_2 & 0 & \cdots & 0 \\
             0 & \cdots & 0 & 0 & \cdots & 0 &\mathbb{A}_2 & \cdots & 0\\
             \vdots & \ddots & \vdots & \vdots & \ddots & \vdots & \vdots & \ddots & \vdots\\
             0 & \cdots & 0 & 0 & \cdots & 0 & 0 &\cdots &   \mathbb{A}_2
        \end{array}
    \right)\,,
\end{align}
\end{widetext}
for some $r_i\geq0$, where $\mathbb{A}_2$ and $\mathbb{S}_2$ are given by
\begin{align}
    \mathbb{A}_2 &=\left(\begin{array}{cc}
	0 & 1\\
	-1 & 0
	\end{array}\right)\,,\quad 	\mathbb{S}_2 =\left(\begin{array}{cc}
	0 & 1\\
	1 & 0
	\end{array}\right)\,.\label{eq:A2andS2}
\end{align}
\end{proposition}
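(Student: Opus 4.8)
The plan is to assemble the claimed standard form one $2\times2$-block of $J_A$ at a time, using the normal form of $J_A$ supplied by Proposition~\ref{prop:Jomega_compatible} and constructing, for each such block whose eigenvalue has magnitude strictly greater than one, a matching ``partner block'' inside $B$. First, apply Proposition~\ref{prop:Jomega_compatible} to decompose $A$ into mutually $\omega$-orthogonal two-dimensional symplectic subspaces $A_i$ on each of which $J_A$ and $\omega_A$ are simultaneously in the form \eqref{eq:JA_omegaA_block_form}, with eigenvalues $\pm\ii\cosh(2r_i)$, $r_i\geq0$. Let $A_0$ be the span of the blocks with $r_i=0$ (where $J_A^2=-\id$) and $A_+$ the span of those with $r_i>0$; since $J_A$ is a symplectic algebra element its eigenspaces pair up symplectically, so $A_0$ and $A_+$ are $\omega$-orthogonal symplectic subspaces. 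The symplectic complement of $A_0$ in $V$ is $A_+\oplus B$, so Proposition~\ref{prop:short_on_JA2_iscomplex}, applied to the decomposition $V=A_0\oplus(A_+\oplus B)$, shows that $J$ is block-diagonal with $J|_{A_0}$ a complex structure on $A_0$ and $J|_{A_+\oplus B}$ a complex structure on $A_+\oplus B$; in particular $J_{BA}$ vanishes on $A_0$.

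Next I would build the partner blocks. Writing $J^2=-\id$ in block form with respect to $A\oplus B$ as in \eqref{eq:J2_as_block_matrix} gives $J_{AB}J_{BA}=-\id_A-J_A^2$, which on $A_i\subset A_+$ equals $\sinh^2(2r_i)\,\id$; since this is invertible, $J_{BA}$ is injective on $A_i$ and $B_i:=J_{BA}(A_i)\subset B$ is two-dimensional. Using the compatibility identities $\omega(Jv,Jw)=\omega(v,w)$ and $\omega(Jv,w)=-\omega(v,Jw)$, the relation $\omega(A,B)=0$, and the off-diagonal block identity $J_BJ_{BA}=-J_{BA}J_A$, one checks that each $A_i\oplus B_i$ is $J$-invariant, that distinct $A_i\oplus B_i$ are mutually $\omega$-orthogonal (hence also $g$-orthogonal by $J$-invariance), and that each $A_i\oplus B_i$ is symplectic (non-degeneracy follows from $\omega(v,Jv)=g(v,v)>0$ and $J$-invariance). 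Taking the basis $(e_1,e_2)$ of $A_i$ from \eqref{eq:JA_omegaA_block_form} and the normalized, reordered image basis $f_1=\tfrac{1}{\sinh(2r_i)}J_{BA}e_2$, $f_2=\tfrac{1}{\sinh(2r_i)}J_{BA}e_1$ of $B_i$, a direct computation using $J_{AB}J_{BA}=\sinh^2(2r_i)\id$ and $J_BJ_{BA}=-J_{BA}J_A$ shows that $\omega$ is in standard (Darboux) form on $B_i$ and that $J$ acts on $A_i\oplus B_i$ by $\bigl(\begin{smallmatrix}\cosh(2r_i)\,\mathbb{A}_2 & \sinh(2r_i)\,\mathbb{S}_2\\ \sinh(2r_i)\,\mathbb{S}_2 & \cosh(2r_i)\,\mathbb{A}_2\end{smallmatrix}\bigr)$ with $\mathbb{A}_2,\mathbb{S}_2$ as in \eqref{eq:A2andS2} --- precisely the $4\times4$ diagonal block of the claimed form.

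Finally, let $B_{\mathrm{rest}}$ be the symplectic complement of $A_0\oplus\bigoplus_i(A_i\oplus B_i)$. Since this subspace contains $A$, we get $B_{\mathrm{rest}}\subset B$, and a dimension count gives $\dim B_{\mathrm{rest}}=\dim B-\dim A+\dim A_0\geq\dim A_0$, using $\dim A\leq\dim B$. As $J$ preserves $\omega$-orthogonality and $J^2=-\id$, $J$ restricts to a complex structure on $B_{\mathrm{rest}}$; splitting $B_{\mathrm{rest}}$ into two-dimensional $J$-invariant symplectic blocks via a Darboux basis for the compatible metric $g=\omega(\cdot,J\cdot)$, one uses $\tfrac12\dim A_0$ of them as formal partners of the $r_i=0$ blocks of $A_0$ (legitimate because $\sinh0=0$ makes those blocks decouple), and the remaining blocks are the trailing $\mathbb{A}_2$ entries. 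Re-labelling so that the blocks of $A$ appear in order $r_1,\dots,r_{N_A}$, followed by their partners in $B$, produces exactly the stated matrix; $J|_{A_0}$ and $J|_{B_{\mathrm{rest}}}$ being complex structures supply the $\cosh(0)\,\mathbb{A}_2=\mathbb{A}_2$ entries.

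The main obstacle I anticipate is organizational rather than conceptual: systematically verifying that the $B_i=J_{BA}(A_i)$ are pairwise $\omega$-orthogonal and symplectic, that $J$ leaves each $A_i\oplus B_i$ invariant, and that the leftover dimension count leaves enough $\mathbb{A}_2$ blocks in $B_{\mathrm{rest}}$ to pair with $A_0$ --- the last point being exactly where the hypothesis $\dim A\leq\dim B$ enters. All of these reduce to repeated application of the two compatibility identities together with $\omega(A,B)=0$ and the block relations coming from $J^2=-\id$, so no new idea beyond Propositions~\ref{prop:Jomega_compatible} and~\ref{prop:short_on_JA2_iscomplex} is required; the work lies in keeping the block bookkeeping clean.
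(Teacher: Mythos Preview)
Your proposal is correct and follows essentially the same route as the paper's proof: both start from the normal form of $J_A$ supplied by Proposition~\ref{prop:Jomega_compatible}, define the partner vectors in $B$ as $b^{(i)}_1=\tfrac{1}{\sinh(2r_i)}J_{BA}a^{(i)}_2$, $b^{(i)}_2=\tfrac{1}{\sinh(2r_i)}J_{BA}a^{(i)}_1$, and verify the claimed block form via the identities $J_BJ_{BA}=-J_{BA}J_A$ and $J_{AB}J_{BA}=\sinh^2(2r_i)\,\id$, invoking Proposition~\ref{prop:short_on_JA2_iscomplex} for the $r_i=0$ blocks. Your write-up is, if anything, slightly more explicit than the paper about the mutual $\omega$-orthogonality of the $B_i$ and about the dimension count in $B_{\mathrm{rest}}$, but no new idea is involved.
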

\begin{proof}
    The form of the upper diagonal block, representing $J_A$, follows from the  previous proposition, by choosing $\cosh(2r_i)=c_i$.
    Now for each $i\in\{1,\dots,N_A\}$, we denote the basis vectors bringing $J_A$ into this standard form by $a^{(i)}_1,a^{(i)}_2\in A$, \ie
    \begin{align}
        J_A a^{(i)}_2 = \cosh(2r_i) a_1\,,\quad  J_A a^{(i)}_1 = -\cosh(2r_i) a_2.
    \end{align}
    Now, if $r_i>0$, define
    \begin{align}
        b^{(i)}_1= \frac1{\sinh(2r_i)} J_{BA} a_2\,,\quad b^{(i)}_2=\frac1{\sinh(2r_i)}J_{BA}a_1\,.
    \end{align}
    These vectors lie in the eigenspace of $J_B^2$ with eigenvalue $-\cosh(2r_i)^2$. By definition, they bring the lower left off-diagonal block, representing $J_{BA}$  into the claimed form.
    Using $J_BJ_{BA}=-J_{BA}J_A$ which follows from \eqref{eq:J2_as_block_matrix}, we find that
    \begin{align}
         J_B b^{(i)}_1&= \frac{J_B J_{BA}}{\sinh(2r_i)} a_2 =-\frac{J_{BA}J_A a_2}{\sinh(2r_i)} \nonumber\\
         & =- \cosh(2r_i) \frac{J_{BA}a_1}{\sinh(2r_i)}=- \cosh(2r_i) b^{(i)}_2\,, \\
         J_B b^{(i)}_2& = 
         \cosh(2r_i) b^{(i)}_1\,.
    \end{align}
    This means that the vectors also bring $J_B$ into standard form on the eigenspaces of $J_B^2$ of eigenvalue $-\cosh(2r_i)^2$. 
    They also bring $\omega$ into standard form, which follows from
    \begin{align}
        &\omega(b^{(i)}_1,b^{(i)}_2) = \frac{\omega\left(J_{BA}a_2,J_{BA}a_1\right)}{\sinh^2(2 r_i)} \nonumber\\
        & = \frac{\omega\left(J a_2,J_{BA}a_1\right)}{\sinh^2(2 r_i)}= \frac{-\omega\left(a_2,JJ_{BA}a_1\right)}{\sinh^2(2 r_i)}\nonumber\\
        &= \frac{-\omega\left(a_2,J_{AB}J_{BA}a_1\right)}{\sinh^2(2 r_i)}=-\omega(a_2,a_1)=-1\,,
    \end{align}
    where we used that $J_{AB}J_{BA}a_i=\sinh^2{2r_i}a_i$  which follows from $-\id= J_A^2+J_{AB}J_{BA}$ in \eqref{eq:J2_as_block_matrix}.
    This relation we also use to confirm that the defined vectors bring the lower off-diagonal block, representing $J_{BA}$ into the claimed standard form:
    \begin{align}
        &J_{AB}b^{(i)}_1
        = \sinh(2r_i) a_2\,,
        \quad J_{AB}b_2=\sinh(2r_i) a_1\,.
    \end{align}
    In the case where $r_i=0$, Proposition \ref{prop:short_on_JA2_iscomplex} ensures that the matrix entries of off-diagonal blocks vanish.
    Hence in the subspaces of $B$ corresponding to $r_i=0$, and in the remaining dimensions of $B$, we can choose any basis which brings $J_B$ in its standard form. Thus, we obtain the claimed matrix representation for $J$.
    Finally, this result implies the standard form~\eqref{eq:sta} of the bosonic covariance matrix $G$ by applying $G=-J\Omega$ with $\Omega\equiv\mat\Omega$ from~\eqref{eq:staOmega2}.
\end{proof}

To conclude this subsection, let us comment on the connection to the standard form of the bosonic covariance matrix and partner formula of Section \ref{sec:boson_review}.
The symplectic form $\omega$ and the metric $g$ used in this section act on the phase space $V$. In fact, they are the inverses of $\Omega^{ab}$ and $G^{ab}$ used in the main body of the article which act on $V^*$, the co-vector space of the phase space.

With respect to a basis in which $J$ is represented by the matrix form which we just derived, $G$ is then represented, due to equation \eqref{eq:boson_J_defn}, by the matrix $\mat G=- \mat J\mat \Omega$ which is exactly as in  \eqref{eq:sta}.

Each pair of vectors $a_1^{(i)},a_1^{(i)}\in V$, appearing  in the proof above, is dual to a pair of covectors $x,k\in V^*$ which defines a mode. From the proof it follows that the mode's partner mode is defined by the covectors $(\bar x,\bar k)$ given by
\begin{align}
    \bar x&= \frac1{\sinh(2r_i)}J^\intercal_{BA} k = \frac1{\sinh(2r_i)}\left( -J_A^\intercal+ J^\intercal\right) k \nonumber\\
    &= \frac1{\sinh(2r_i)} \left(\cosh(2 r_i) x+J^\intercal k\right)\,, \\
    \bar k
    &= \frac1{\sinh(2r_i)} \left(-\cosh(2 r_i) k+J^\intercal x\right)\,,
\end{align}
which is precisely the partner formula \eqref{eq:partnermodeformula}.

\subsection{Compatible metric (fermions)}\label{app:standardform-fermions}
For fermionic systems the analysis of the complex structure is very similar. Here, we study the properties of a linear complex structure which is compatible with a positive definite metric.
\begin{definition}
	A positive definite symmetric bilinear form $g:V\times V\to \mathbb{R}$ is called compatible to a complex structure $J: V\to V$ if and only if it satisfies
	\begin{align}
	g(Jv,w)=-g(v,Jw)\quad\text{for all}\quad v,w\in V\,.
	\end{align}
	We will refer to $g$ as a $J$-compatible metric and to $J$ as a $g$-compatible complex structure.
\end{definition}
The compatibility conditions implies the following invariance property.
\begin{proposition}
	A $J$-compatible metric $g$ is invariant under $J$ meaning $g(Jv,Jw)=g(v,w)$ for all $v,w\in V$ which implies  $J\in O(V,g)$. Moreover, we can define the antisymmetric bilinear form $\omega(v,w)=g(Jv,w)$ which is a well-defined symplectic form on $V$.
\end{proposition}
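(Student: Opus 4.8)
The plan is to derive both assertions directly from the single compatibility identity $g(Jv,w)=-g(v,Jw)$ together with $J^2=-\mathbb{1}$ and the symmetry and positive-definiteness of $g$; no further structural input is required, so the proof will be short.

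For the invariance statement, I would substitute $w\mapsto Jw$ into the compatibility relation and then invoke $J^2=-\mathbb{1}$:
\begin{align}
    g(Jv,Jw)=-g(v,J^2w)=-g(v,-w)=g(v,w)\,.
\end{align}
Since this holds for all $v,w\in V$, the map $J$ preserves the inner product $g$, which is precisely the statement $J\in O(V,g)$ (here one also uses that $J$ is invertible, again because $J^2=-\mathbb{1}$).

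For the second claim, I would first verify antisymmetry of $\omega(v,w):=g(Jv,w)$ by combining the symmetry of $g$ with the compatibility identity, namely $\omega(w,v)=g(Jw,v)=g(v,Jw)=-g(Jv,w)=-\omega(v,w)$. Bilinearity is immediate, as $\omega$ is the composition of the bilinear form $g$ with the linear map $J$. It then remains to check non-degeneracy: if $\omega(v,w)=0$ for all $w\in V$, then $g(Jv,w)=0$ for all $w$, so positive-definiteness (hence non-degeneracy) of $g$ forces $Jv=0$, and the invertibility of $J$ then gives $v=0$. An antisymmetric, non-degenerate bilinear form on a finite-dimensional real vector space is by definition a symplectic form, which completes the argument and also re-confirms that $\dim V$ is even.

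There is no genuine obstacle here; every step is a one-line manipulation. The only point that warrants a moment's care is the non-degeneracy of $\omega$, where it is essential to use both that $g$ is positive definite (so that $g(u,\cdot)\equiv 0$ implies $u=0$) and that $J$ is invertible — dropping either hypothesis would break the conclusion.
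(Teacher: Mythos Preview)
Your proof is correct and follows essentially the same route as the paper's. The only minor variation is in the non-degeneracy step: the paper exhibits the explicit witness $w=Jv$, giving $\omega(v,Jv)=g(Jv,Jv)>0$ for $v\neq 0$, whereas you argue via the non-degeneracy of $g$ together with the invertibility of $J$; both are equivalent one-line arguments.
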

\begin{proof}
	Straightforward computation gives
	\begin{align}
	&g(Jv,Jw)=-g(v,J^2w)=-g(v,-\mathbb{1}w)\nonumber\\
	&=g(v,w)\,.
	\end{align}
	Every linear map $M: V\to V$ that preserves metric $g$ in the sense of $g(Mv,Mw)=g(v,w)$ is an element of the orthogonal group $O(V,g)$.	A symplectic form on a vector space $V$ is required to be (a) antisymmetric and (b) non-degenerate:
	\begin{itemize}
		\item[(a)] Antisymmetry: We compute $\omega(v,w)=g(Jv,w)=g(w,Jv)=-g(Jw,v)=-\omega(w,v)$.
		\item[(b)] Non-Degeneracy: We need to show that for every non-zero vector $v$, there is at least one vector $w$, such that $\omega(v,w)\neq0$. This can easily be done by choosing $w=Jv$, where we find $\omega(v,w)=g(Jv,Jv)>0$.
	\end{itemize}
	This proves that every compatible every pair of a complex structure $J$ and a metric $g$ defines a symplectic form $\omega$.
\end{proof}
Again, the compatibility condition characterizes the possible spectra of restrictions of the linear complex structure  to even dimensional subspaces.
\begin{proposition}\label{prop:JA_standard_fermion}
	Given a compatible pair of a complex structure $J$ and a metric $g$, we can decompose the vector space $V$ into a direct sum of orthogonal sum of orthogonal complements $V=A\oplus B$. In this case, the restricted complex structure $J_A$ is diagonalizable and its spectrum consists of complex conjugate pairs $\pm i c_i$ with $c_i\in [0,1]$. If $A$ is odd-dimensional, $J_A$ also has the eigenvalue $0$. Provided that we choose a decomposition into even-dimensional orthogonal complements with dimensions $2N_A$, we can bring $J_A$ and $g_A$ simultaneously in the following block diagonal form:
	\begin{align}
	J_A\equiv\bigoplus^{N_A}_{i=1}\left(\begin{array}{cc}
	0 & c_i\\
	-c_i & 0
	\end{array}\right)\,\text{and}\,\,\, g_A\equiv\bigoplus^{N_A}_{i=1}\left(\begin{array}{cc}
	1 & 0\\
	0 & 1
	\end{array}\right)\,.\label{eq:JA_gA_block_diag_form}
	\end{align}
\end{proposition}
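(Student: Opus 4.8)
The plan is to transcribe the bosonic argument of Proposition~\ref{prop:Jomega_compatible}, swapping the roles of the symplectic form $\omega$ and the metric $g$, and replacing the symplectic group by the orthogonal group. First I would show that $J_A$ is diagonalizable. Since the decomposition $V=A\oplus B$ is $g$-orthogonal, the restriction $g_A$ is a genuine positive definite metric on $A$, and for $a,a'\in A$ the compatibility relation gives $g_A(J_A a,a')=g(Ja,a')=-g(a,Ja')=-g_A(a,J_A a')$; thus $J_A$ is skew-symmetric with respect to $g_A$, hence normal in a $g_A$-orthonormal basis, hence diagonalizable over $\mathbb{C}$ with purely imaginary spectrum (this parallels the bosonic observation that $J_A^2=-\id_A-J_{AB}J_{BA}$ in~\eqref{eq:J2_as_block_matrix} is diagonalizable). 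When $A$ is odd dimensional, the real characteristic polynomial of $J_A$ has a real root which, being purely imaginary, must be $0$, so $0$ is an eigenvalue.

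Next I would bound the eigenvalue magnitudes. Take $0\neq a\in A$ with $J_A^2 a=\alpha a$. From $\alpha\,g_A(a,a)=g_A(a,J_A^2 a)=-g_A(J_A a,J_A a)$ we immediately get $\alpha\le0$. For the lower bound I would write $J_A a=Ja-J_{BA}a$ and expand $g(J_A a,J_A a)$: using the $J$-invariance of $g$ (so $g(Ja,Ja)=g(a,a)$) and the orthogonality $g(J_A a,J_{BA}a)=0$ (because $J_A a\in A$ and $J_{BA}a\in B$), the mixed term collapses to $g(Ja,J_{BA}a)=g(J_{BA}a,J_{BA}a)$, leaving $g(J_A a,J_A a)=g(a,a)-g(J_{BA}a,J_{BA}a)\le g(a,a)$. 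Combining the two inequalities yields $-\alpha\,g_A(a,a)\le g_A(a,a)$, hence $\alpha\in[-1,0]$, so the eigenvalues of $J_A$ are conjugate pairs $\pm\ii c_i$ with $c_i=\sqrt{-\alpha}\in[0,1]$. Note this is exactly the fermionic counterpart of the bosonic estimate $g(J_Aa,J_Aa)\ge g(a,a)$, but with the inequality reversed, which is why $c_i$ now falls in $[0,1]$ rather than $[1,\infty)$.

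For the simultaneous standard form I would invoke that $J_A$, being skew-symmetric with respect to the positive definite $g_A$ with purely imaginary spectrum $\pm\ii c_i$, is an element of $\mathfrak{so}(2N_A)$, and then appeal to the real canonical form for such operators (the orthogonal analogue of proposition~3.1.18 of~\cite{abraham1978foundations} used in the bosonic proof): there is a $g_A$-orthonormal basis in which $g_A\equiv\id_{2N_A}$ and $J_A$ is block diagonal with $2\times2$ blocks $\left(\begin{smallmatrix}0&c_i\\-c_i&0\end{smallmatrix}\right)$. Restricting to even-dimensional complements avoids the extra $1\times1$ zero block that would otherwise appear, giving exactly~\eqref{eq:JA_gA_block_diag_form}.

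I expect the only genuinely delicate step to be the eigenvalue estimate in the second paragraph: one must carefully track the orthogonal splitting $Ja=J_A a+J_{BA}a$ and use both the $J$-invariance of $g$ and the $g$-orthogonality of $A$ and $B$ to reduce the cross term. Everything else is a routine transcription of the already-established bosonic statements, so no new ideas beyond that bookkeeping are required.
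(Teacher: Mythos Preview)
Your proposal is correct and follows essentially the same route as the paper: you show $J_A$ is $g_A$-skew (hence normal and diagonalizable), bound the eigenvalues of $J_A^2$ in $[-1,0]$ via the orthogonal splitting $Ja=J_Aa+J_{BA}a$ together with $J$-invariance of $g$, and then invoke the $\mathfrak{so}(2N_A)$ canonical form. The only cosmetic difference is that the paper expands $g(Ja,Ja)=g(J_Aa,J_Aa)+g(J_{BA}a,J_{BA}a)$ directly rather than expanding $g(J_Aa,J_Aa)$ and collapsing the cross term as you do, but the two computations are equivalent.
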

\begin{proof}
	The restricted complex structure $J_A$ is itself anti-hermitian with respect to the metric $g_A$ because we have for $a_1,a_2\in A$
	\begin{align}
	g_A(J_Aa_1,a_2)&
	=-g(a_1,Ja_2)=-g_A(a_1,J_Aa_2)\,.
	\end{align}
	Therefore, $J_A$ is normal with respect to $g_A$. This ensures that $J_A$ is diagonalizable (with a complete set of eigenvectors) and so is $J_A^2$.
	Given a non-zero eigenvector $a\in A$ with $J_A^2\,a=\alpha\, a$, we have $\alpha\in [-1,0]$ due to the following argument. Let us compute
	\begin{align}
	\alpha \underbrace{g(a,a)}_{>0}=g(a,J_A^2a)=-\underbrace{g(J_Aa,J_Aa)}_{\geq 0}\,.
	\end{align}
	This already implies that $\alpha\leq 0$. Moreover, we can also compute
	\begin{align}
	\begin{split}
	g(a,a)
	&=g(J_Aa+J_{BA}a,J_Aa+J_{BA}a)\\
	&=g(J_Aa,J_Aa)+g(J_{BA}a,J_{BA}a)\,,
	\end{split}
	\end{align}
	where we used in the last step that $A$ and $B$ are orthogonal which elimates crossing terms. This equation implies the inequality $g(a,a)\geq g(J_Aa,J_Aa)=|\alpha|g(a,a)$ leading to $\alpha\in [-1,0]$. From here, we find that $J_A$ must have eigenvalues appearing in complex conjugate pairs $\pm \ii c$ with $c=\sqrt{-\alpha}$ meaning that also every eigenvalue $\alpha$ of $J_A^2$ appears with even multiplicity unless $\alpha=0$.\\ 
	While the full $J$ satisfies $g(Jv,Jw)=g(v,w)$ for $v,w\in V$, the restricted $J_A$ does in general not satisfy $g_A(J_A a,J_Aa')=\omega_A(a,a')$ for $a,a'\in A$. However, we can use the relation $g(Jv,w)=-g(v,Jw)$ for $v,w\in V$ to derive the relation
	\begin{align}
	\begin{split}
	    g_A(J_Aa,a')&=g(Ja,a')=-g(a,Ja')\\
	    &=g_A(a,J_Aa')\,,
	\end{split}
	\end{align}
	which is well-known to be the condition for $J_A$ to represent an orthogonal algebra element, \ie $J_A\in \mathfrak{so}(2N_A)$. An orthogonal algebra element is antisymmetric with respect to $g_A$ and has purely imaginary eigenvalues. It is well-known that we can always choose an orthonormal basis, such that the matrix representations of $J_A$ and $g_A$ are given by~\eqref{eq:JA_gA_block_diag_form}.
\end{proof}

\begin{proposition}
In the setup of Proposition \ref{prop:JA_standard_fermion}, if $J_A^2=-\id$ this implies $J_{BA}=0$, furthermore $J_B^2=-\id$ and $J_{AB}=0$.
\end{proposition}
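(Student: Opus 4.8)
The plan is to follow the strategy of the bosonic counterpart, Proposition~\ref{prop:short_on_JA2_iscomplex}, which in the fermionic setting becomes even more transparent because the bilinear form $g$ is positive definite, so a vector of zero length must vanish.

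First I would show $J_{BA}=0$. Fix $a\in A$ and split $Ja=\pro_A(Ja)+\pro_B(Ja)=J_Aa+J_{BA}a$; since $A$ and $B$ are $g$-orthogonal complements the two summands are orthogonal, so $g(Ja,Ja)=g(J_Aa,J_Aa)+g(J_{BA}a,J_{BA}a)$. Because $J\in O(V,g)$ we also have $g(Ja,Ja)=g(a,a)$. On the other hand, the relation $g_A(J_Aa_1,a_2)=-g_A(a_1,J_Aa_2)$ derived in the proof of Proposition~\ref{prop:JA_standard_fermion}, together with the hypothesis $J_A^2=-\id_A$, gives $g(J_Aa,J_Aa)=-g(a,J_A^2a)=g(a,a)$. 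Comparing the two expressions for $g(Ja,Ja)$ forces $g(J_{BA}a,J_{BA}a)=0$, and positive definiteness of $g$ then yields $J_{BA}a=0$. Since $a\in A$ was arbitrary, $J_{BA}=0$.

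Next I would extract the remaining claims. Once $J_{BA}=0$, the lower-right block of the block expansion~\eqref{eq:J2_as_block_matrix} of $J^2=-\id$ reads $J_B^2+J_{BA}J_{AB}=J_B^2=-\id_B$, so $J_B^2=-\id_B$. To get $J_{AB}=0$ I would either repeat the argument of the previous paragraph with the roles of $A$ and $B$ interchanged (for $b\in B$, write $Jb=J_{AB}b+J_Bb$, use orthogonality and $J\in O(V,g)$ together with $J_B^2=-\id_B$ to conclude $g(J_{AB}b,J_{AB}b)=0$, hence $J_{AB}b=0$), or observe directly that $J_{BA}=0$ means $A$ is $J$-invariant, so since $J$ is an invertible $g$-isometry its orthogonal complement $B=A^\perp$ is $J$-invariant as well, i.e.\ $J_{AB}=0$.

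There is no genuine obstacle here; the only points requiring care are invoking the correct compatibility identities---anti-self-adjointness of $J_A$ with respect to $g_A$ and $J\in O(V,g)$---and applying positive definiteness of $g$ at exactly the step $g(w,w)=0\Rightarrow w=0$. This is precisely where the fermionic argument is cleaner than the bosonic one, in which the taming metric rather than the symplectic form had to be supplied by hand; I would double-check the invariance of the orthogonal complement under the isometry $J$, since the bosonic proof used the analogous statement for symplectic complements instead.
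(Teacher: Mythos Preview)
Your proof is correct and follows essentially the same strategy as the paper: show $g(J_{BA}a,J_{BA}a)=0$ and invoke positive definiteness. The only difference is cosmetic---the paper derives the identity $g(J_{BA}a,J_{BA}a')=-g(a,J_{AB}J_{BA}a')$ and then uses $J_{AB}J_{BA}=0$ from the block equation, whereas you use the orthogonal (Pythagorean) splitting $g(Ja,Ja)=g(J_Aa,J_Aa)+g(J_{BA}a,J_{BA}a)$ together with $J\in O(V,g)$ and anti-self-adjointness of $J_A$; your deduction of $J_B^2=-\id_B$ from the lower-right block of~\eqref{eq:J2_as_block_matrix} once $J_{BA}=0$ is in fact more direct than the paper's appeal to Proposition~\ref{prop:J_decomp}.
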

\begin{proof}
For any two vectors $a,a'\in A$, we have 
\begin{align}
g(J_{BA} a, J_{BA} a')= -g(a, J_{AB}J_{BA} a')\,,
\end{align}
which is shown using $g(v,Jw)=-g(Jv,w)$ and the fact that $A$ and $B$ are orthogonal subspaces:
\begin{align}
    &g(J_{BA}a,J_{BA}a')=g(Ja,J_{BA}a')\nonumber\\
    &=-g(a,JJ_{BA}a')=-g(a,J_{AB}J_{BA}a')\,.
\end{align}
If $J_A^2=-\id$, then from \eqref{eq:J2_as_block_matrix} follows $J_{AB}J_{BA}=-\id-J_A^2=0$.
This however means that for all $a\in A$ we have
\begin{align}
    &g(J_{BA}a,J_{BA}a)= -g(a,J_{AB}J_{BA}a)=0.
\end{align}
Since $g$ is non-degenerate this implies $J_{BA}=0$.
By Proposition \ref{prop:J_decomp}, $J_B^2=-\id$ follows from $J_A^2=-\id$, then repeating the argument above shows $J_{AB}=0$.
\end{proof}

This result is central for computing the entanglement spectrum of fermionic systems. In particular, the fact that the magnitude of possible eigenvalues lies in the interval $[0,1]$ implies that the entanglement entropy of fermionic systems is bounded by $\log{2}$ per fermionic degree of freedom.

\begin{proposition}
As before, let the complex structure $J$ and the metric $g$ on a vector space $V$ be compatible, and let $V=A\oplus B$ be a decomposition of $V$ into even-dimensional orthogonal complements. Without loss of generality, assume that the dimension of $A$ is  lower than, or equal to the  dimension of $B$.
Then there exists an orthonormal basis with respect to which $J$ takes the form
\begin{widetext}
\begin{align}
	J\equiv\left(\begin{array}{ccc|cccccc}
	\cos(2r_1)\,\mathbb{A}_2 & \cdots & 0 & \sin(2r_{1})\,\mathbb{S}_2 & \cdots & 0 & 0 &\cdots & 0\\
	\vdots & \ddots & \vdots & \vdots & \ddots & \vdots & \vdots & \ddots & \vdots \\
	0 & \cdots & \cos(2r_{N_A})\,\mathbb{A}_2 & 0 & \cdots & \sin(2r_{N_A})\,\mathbb{S}_2 & 0 & \cdots & 0\\
	\hline
	-\sin(2r_1)\,\mathbb{S}_2 & \cdots & 0 & \cos(2r_1)\,\mathbb{A}_2 & \cdots & 0 & 0 &\cdots & 0\\
	\vdots & \ddots & \vdots & \vdots & \ddots & \vdots & \vdots & \ddots & \vdots\\
	0 & \cdots & -\sin(2r_{N_A})\,\mathbb{S}_2 & 0 & \cdots & \cos(2r_{N_A})\,\mathbb{A}_2 & 0 &\cdots & 0\\
	0 & \cdots & 0 & 0 & \cdots & 0 & \mathbb{A}_2 & \cdots & 0\\
	\vdots & \ddots & \vdots & \vdots & \ddots & \vdots & \vdots & \ddots & \vdots \\
	0 & \cdots & 0 & 0 & \cdots & 0 & 0 & \cdots & \mathbb{A}_2
	\end{array}\right)\,,
\end{align}
\end{widetext}
with $r_i\in[0,\frac\pi4]$ and the two-by-two matrices $\mathbb{A}_2$ and $\mathbb{S}_2$ defined in~\eqref{eq:A2andS2}.
\end{proposition}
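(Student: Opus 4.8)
The plan is to run the bosonic proof of the previous proposition essentially line by line, with the symplectic form $\omega$ replaced by the metric $g$ (so that the orthogonal group takes over the role of the symplectic group) and the hyperbolic functions by trigonometric ones; the spectral input is now Proposition~\ref{prop:JA_standard_fermion} instead of Proposition~\ref{prop:Jomega_compatible}. First I would invoke Proposition~\ref{prop:JA_standard_fermion}: since $V=A\oplus B$ is a decomposition into even-dimensional orthogonal complements, there is an orthonormal basis $a^{(i)}_1,a^{(i)}_2\in A$ ($i=1,\dots,N_A$) with respect to which $J_A$ and $g_A$ take the forms of~\eqref{eq:JA_gA_block_diag_form} with $c_i=\cos(2r_i)$ and $r_i\in[0,\tfrac\pi4]$; concretely $J_A a^{(i)}_2=\cos(2r_i)\,a^{(i)}_1$ and $J_A a^{(i)}_1=-\cos(2r_i)\,a^{(i)}_2$. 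The substance of the proof is then to build a matching orthonormal basis of $B$.

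For each $i$ with $r_i>0$, i.e.\ $\sin 2r_i\neq0$, I would set
\begin{align}
b^{(i)}_1=-\frac{1}{\sin 2r_i}\,J_{BA}\,a^{(i)}_2\,,\qquad b^{(i)}_2=-\frac{1}{\sin 2r_i}\,J_{BA}\,a^{(i)}_1\,,
\end{align}
and verify three things using the block identities from~\eqref{eq:J2_as_block_matrix}, namely $J_{AB}J_{BA}=-\id_A-J_A^2$ and $J_BJ_{BA}=-J_{BA}J_A$, together with the relation $g(J_{BA}v,J_{BA}w)=-g(v,J_{AB}J_{BA}w)$ (immediate from $g(Jv,w)=-g(v,Jw)$ and the orthogonality of $A$ and $B$). \textbf{Orthonormality:} on the $i$-th block $J_A^2 a^{(i)}_j=-\cos^2(2r_i)\,a^{(i)}_j$, hence $J_{AB}J_{BA}a^{(i)}_j=-\sin^2(2r_i)\,a^{(i)}_j$ — note the sign flip relative to the bosonic case, where the eigenvalue of $J_A^2$ is $-\cosh^2(2r_i)$ — and this gives $g(b^{(i)}_j,b^{(i)}_{j'})=\delta_{jj'}$, while orthogonality to the $a^{(i')}_1,a^{(i')}_2$ and to $b^{(i'')}_1,b^{(i'')}_2$ with $i''\neq i$ follows from the orthogonality of the eigenspaces of $J_B^2$. \textbf{Off-diagonal blocks:} the same identity makes the block of $J_{BA}$ on the pair $(a^{(i)}_1,a^{(i)}_2)\mapsto(b^{(i)}_1,b^{(i)}_2)$ equal to $-\sin(2r_i)\,\mathbb{S}_2$, and correspondingly the block of $J_{AB}$ equal to $\sin(2r_i)\,\mathbb{S}_2$. \textbf{Lower diagonal block:} from $J_BJ_{BA}=-J_{BA}J_A$ one computes $J_B b^{(i)}_2=\cos(2r_i)\,b^{(i)}_1$ and $J_B b^{(i)}_1=-\cos(2r_i)\,b^{(i)}_2$, i.e.\ the block of $J_B$ is $\cos(2r_i)\,\mathbb{A}_2$. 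Together these fill the $i$-th $4\times4$ block of the asserted matrix.

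For the indices with $r_i=0$, and for the $\dim B-\dim A$ remaining dimensions of $B$, I would argue exactly as in the bosonic proof: the preceding proposition (the fermionic analogue of Proposition~\ref{prop:short_on_JA2_iscomplex}) forces the relevant off-diagonal entries of $J_{AB}$ and $J_{BA}$ to vanish, so on these subspaces $J_B$ restricts to a metric-compatible complex structure and we may pick any orthonormal basis putting it in $\mathbb{A}_2$-block form. Assembling the $a$- and $b$-bases over all $i$ produces the claimed matrix form of $J$, with each $r_i\in[0,\tfrac\pi4]$ since $c_i=\cos 2r_i\in[0,1]$.

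The only real obstacle I anticipate is sign bookkeeping: choosing the overall minus sign in the definition of $b^{(i)}_j$ so that the lower-left block is $-\sin(2r_i)\,\mathbb{S}_2$ rather than $+\sin(2r_i)\,\mathbb{S}_2$, which ultimately traces back to the sign flip in $J_{AB}J_{BA}=-\id_A-J_A^2$ caused by the restricted complex structure having magnitude $\le1$ here versus $\ge1$ for bosons, and keeping straight that it is the metric, not the symplectic form, which is simultaneously standardized on the $b$-pairs. Beyond this, no new idea is needed that is not already present in the bosonic proof.
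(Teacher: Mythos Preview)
Your proof is correct and follows essentially the same route as the paper's own proof: invoke Proposition~\ref{prop:JA_standard_fermion} for the $A$-basis, define the partner $B$-basis via $J_{BA}$ (normalized by $1/\sin 2r_i$), and verify the four block forms using the identities from $J^2=-\id$ together with $g(J_{BA}a,J_{BA}a')=-g(a,J_{AB}J_{BA}a')$. Your extra minus sign in the definition of $b^{(i)}_j$ is in fact the choice that makes the off-diagonal blocks come out as $\pm\sin(2r_i)\,\mathbb{S}_2$ with the signs exactly as displayed; the paper defines $b^{(i)}_j$ without the minus, which yields the opposite off-diagonal signs---harmless, since one can always flip the $b$-basis, but your bookkeeping is the tidier one.
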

\begin{proof}
By Proposition \ref{prop:JA_standard_fermion} the upper left diagonal block, representing $J_A$, can be brought into the claimed form by choosing  $\cos2r_i=c_i$. Let $a^{(i)}_1,\,a^{(i)}_2$ denote the pair of basis vectors spanning the space corresponding to $r_i$. Then, if $r_i>0$, define the vectors
\begin{align}
    b^{(i)}_1=\frac1{\sin2r_i}J_{BA}a^{(i)}_2\,,\quad b^{(i)}_2=\frac1{\sin2r_i}J_{BA}a^{(i)}_1\,.
\end{align}
These vectors are orthonormal
\begin{align}
    g\left(b^{(i)}_1,b^{(i)}_2\right)&=\frac{-g\left(a_2^{(i)}, J_{AB}J_{BA} a_1^{(i)}\right)}{\sin^22r_i}\nonumber\\
    &=g\left(a_2^{(i)}, a_1^{(i)}\right)=0\,,\\
    g\left(b^{(i)}_1,b^{(i)}_1\right)&=\frac{-g\left(a_2^{(i)}, J_{AB}J_{BA} a_2^{(i)}\right)}{\sin^22r_i}\nonumber\\
    &=g\left(a_2^{(i)}, a_2^{(i)}\right)=1\,,\\
    g(b^{(i)}_2,b^{(i)}_2)&=1\,,
\end{align}
where we used $J_{AB}J_{BA}a=(-1+\cos^22r_i)a=-\sin^2(2r_i) a$ which follows from~
\eqref{eq:J2_as_block_matrix}. Similarly, one also shows that $g(b^{(i)}_n,b^{(j)}_m)=0$ if $i\neq j$.
By definition, these vectors bring the lower left diagonal block, representing $J_{BA}$ into the claimed standard form.
To see that the vectors also bring the diagonal blocks of $J_B$ into the claimed form, we calculate
\begin{align}
    J_B b_1^{(i)} &= \frac{J_BJ_{BA} a_2^{(i)}}{\sin2r_i}=\frac{-J_{BA}J_{A} a_2^{(i)}}{\sin2r_i}\nonumber\\
    &=\frac{-\cos(2r_i) J_{BA}a_1^{(i)}}{\sin2r_i}=-\cos2r_i b^{(i)}_2\,,\\
    J_B b_2^{(i)} 
    &=\cos2r_i b^{(i)}_1\,.
\end{align}
Hence, by extending the set of all these vectors with an orthonormal set of vectors which brings the eigenspace of $J_B^2$ with eigenvalue $-1$ into the standard form, we obtain an orthonormal basis of $B$ which brings the  matrix representing $J$ into the claimed form.\\
Finally, this result implies the standard form~\eqref{eq:standardform-fermions} of the fermionic covariance matrix $\Omega$ by applying $\Omega=JG$ with $G\equiv\mathbb{1}$.
\end{proof}
The propositions of these section imply both the normal form of the covariance matrix for fermions \eqref{eq:sta_ferm}, as well as the fermionic partner mode. In fact, the matrix $\mat \Omega$ representing the fermionic covariance matrix $\Omega^{ab}$ and the matrix $\mat J$ representing the complex structure $J^a{}_B$ are in fact identical. This is due to $\mat \Omega=\mat J\mat G$ where the metric $G$, which is the inverse of the metric $g$ used in this subsection, is represented by the identity matrix $\mat G=\id$.
The partner formula for fermions \eqref{eq:partnermode_formula_ferm} follows by the same argument as discussed above for bosons.

\section{Spectra of restricted Hamiltonian}\label{app:spectrumproof}

Here, we show how the spectrum of the restriction of a Hamiltonian to a subspace of partner modes, is related to the spectrum of the unrestricted Hamiltonian operator, as a consequence of the Courant–Fischer–Weyl min-max principle.

\begin{proposition}
Let $\hat H$ be a quadratic Hamiltonian on a (bosonic or fermionic) system of $N$ modes, and denote the excitation energies of $\hat H$ by $0<\omega_1\leq\omega_2\leq\dots\leq\omega_N$. 
Assuming that $\hat H_A$ is the restriction of $\hat H$ onto a subsystem $A$ spanned by $N_A<N$ modes such that the ground state $\ket0$ of $\hat H$ is a product state
\begin{align}
    \ket0=\ket\psi_A\otimes\ket\phi_B
\end{align}
between $A$ and its complement $B$, the excitation energies $\epsilon_1\leq\dots\leq\epsilon_{N_A}$ of $\hat H_A$ fulfill 
\begin{align}
    \omega_{i}\leq \epsilon_i\leq \omega_{N-N_A+i}\,.
\end{align}
\end{proposition}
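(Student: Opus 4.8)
The statement is an eigenvalue interlacing, and the tool is already named: the Cauchy interlacing theorem, a consequence of the Courant--Fischer--Weyl min-max principle. The plan is to realise both sets of excitation energies as the spectrum of one self-adjoint operator and its compression to the subspace $A$. For bosons (after absorbing any linear term of $\hat H$ by a shift of $\hat\xi$, so that $\hat H=\frac12 h_{ab}\hat\xi^a\hat\xi^b$) I would introduce the operator $M:=Gh$ with components $M^a{}_c=G^{ab}h_{bc}$, where $G$ is the ground-state covariance matrix. A one-line computation shows $M$ is self-adjoint and positive with respect to the inner product $\langle u,v\rangle:=(G^{-1})_{ab}u^av^b$ on $V$, since $\langle Mu,v\rangle=h_{ab}u^av^b=\langle u,Mv\rangle>0$. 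Passing to the Williamson normal-mode basis, where $h\equiv\bigoplus_j\omega_j\id_2$ and $G\equiv\id$, gives $M\equiv\bigoplus_j\omega_j\id_2$, so the spectrum of $M$ is $\{\omega_1,\omega_1,\dots,\omega_N,\omega_N\}$ --- each excitation energy with multiplicity two.

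Next I would feed in the product-state hypothesis, which is needed in two places. First, $\ket{0}=\ket{\psi}_A\otimes\ket{\phi}_B$ forces $G$, hence $G^{-1}$, to be block-diagonal with respect to $V=A\oplus B$; thus $A$ and $B$ are orthogonal for $\langle\cdot,\cdot\rangle$, and the orthogonal projection onto $A$ is the ordinary projection $\Pi_A$. Second, $\ket{\psi}_A$ is precisely the ground state of $\hat H_A$: it is the global minimiser of $\Tr(\rho\hat H)$ restricted to mean-zero product states $\rho_A\otimes\rho_B$, for which $\Tr(\rho\hat H)$ decouples as $\Tr(\rho_A\hat H_A)+\Tr(\rho_B\hat H_B)$; hence the $A$-block $G_A$ of $G$ is the ground-state covariance of $\hat H_A$, and the argument of the previous paragraph applied to $\hat H_A$ shows that $M_A:=G_Ah_A$ is self-adjoint and positive for $\langle\cdot,\cdot\rangle|_A$ with spectrum $\{\epsilon_1,\epsilon_1,\dots,\epsilon_{N_A},\epsilon_{N_A}\}$. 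The two facts combine into the key identity $\Pi_A M\Pi_A=M_A$: for $v\in A$ one has $hv=(h_Av,\,h_{BA}v)$ with respect to $V=A\oplus B$ (with $h_{BA}$ the off-diagonal block of $h$), and block-diagonality of $G$ gives $Mv=Ghv=(G_Ah_Av,\,G_Bh_{BA}v)$, so $\Pi_A Mv=G_Ah_Av=M_Av$. Thus $M_A$ is genuinely the compression of the self-adjoint $M$ to the subspace $A$ of codimension $2(N-N_A)$.

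Cauchy interlacing then gives $\lambda^\uparrow_k(M)\le\lambda^\uparrow_k(M_A)\le\lambda^\uparrow_{k+2(N-N_A)}(M)$ for all $k$; evaluating at $k=2i$ with the doubled spectra yields $\omega_i\le\epsilon_i\le\omega_{N-N_A+i}$, which is the claim. The fermionic case is the mirror image under the exchange $G\leftrightarrow\Omega$: there $G\equiv\id$ is the fixed metric and it is the ground-state covariance $\Omega$ that becomes block-diagonal under the product hypothesis, so the relevant operator is $M:=-\Omega h$. In the standard basis the antisymmetric $\Omega$ and $h$ are simultaneously in canonical block form, hence commute, and $-\Omega h\equiv\bigoplus_j\epsilon_j\id_2$, so $M$ is symmetric, positive, with spectrum the excitation energies doubled; block-diagonality of $\Omega$ again yields $\Pi_A M\Pi_A=-\Omega_Ah_A=M_A$, and Cauchy interlacing finishes the proof exactly as above.

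The one place that requires care --- and the only real idea --- is the choice of operator. The obvious candidates encoding the spectrum, namely $\Omega h$ for bosons and $h$ itself for fermions, are not self-adjoint for the Euclidean metric, and, more importantly, their naive restrictions to $A$ are not compressions of them (a block of a product of matrices is not the product of the blocks), so Cauchy interlacing cannot be applied to them directly. One has to conjugate by the ground-state covariance, obtaining $Gh$ (respectively $-\Omega h$), so that the ambient inner product becomes $G^{-1}$ (respectively $G$); it is precisely the product-state hypothesis --- block-diagonality of the covariance matrix --- that then turns ``restriction to $A$'' into ``compression to $A$''. The remaining steps are routine bookkeeping.
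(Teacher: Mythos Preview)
Your proof is correct and follows essentially the same route as the paper: define the linear map $M=Gh$ (bosons) or $M=-\Omega h$ (fermions), observe that its spectrum is the doubled set of excitation energies, use the product-state hypothesis to make the covariance block-diagonal so that the $A$-block of $M$ is exactly $G_Ah_A$ (respectively $-\Omega_Ah_A$), identify this with the corresponding map for $\hat H_A$, and invoke Cauchy interlacing. If anything, your write-up is more careful than the paper's on one point: you make explicit that $M$ is self-adjoint for the inner product induced by $G^{-1}$ (bosons) or $G$ (fermions), and that the block-diagonality of the covariance is precisely what turns the restriction to $A$ into an orthogonal compression, which is what licenses the min-max principle. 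The paper states the min-max step without isolating this justification.
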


\begin{proof}
We note that the excitation energies of a quadratic Hamiltonian $\hat H=\frac12h_{ab}\hat\xi^a\hat\xi^b+f_a\hat\xi^a$ are given by the eigenvalues of the linear map $L^a{}_c=G^{ab}h_{bc}$ obtained by composing the covariance matrix $G^{ab}$ of the Hamiltonian's ground state with the Hamiltonian bilinear form $h_{ab}$.
This is easily seen by observing that with respect to a basis of energy eigenmodes, where $G^{ab}\equiv\id$ is represented by the identity matrix, both $L$ and $h$ are represented by the same diagonal matrix.

In general, in a basis which is adapted to the bipartition of the $N$ modes into subsystems $A$ and $B$, the map $L$ may not be represented by a diagonal matrix. However, since we assume the ground state of $\hat H$ to be a product state between $A$ and $B$, the covariance matrix takes block diagonal form, and we find that $L$ is represented by the matrix
\begin{align}
    L\equiv \mat L= \mat G\mat h&= \left( \begin{array}{c|c}
         \mat G_A& 0  \\ \hline
         0 & \mat G_B
    \end{array}\right)
    \left( \begin{array}{c|c}
         \mat h_A& \mat h_{AB}  \\ \hline
         \mat h_{AB}^\intercal & \mat h_B
    \end{array}\right)\nonumber\\
    &=
    \left( \begin{array}{c|c}
         \mat G_A \mat h_A& \mat G_A\mat h_{AB}  \\ \hline
         \mat G_B\mat h_{AB}^\intercal & \mat G_B\mat h_B
    \end{array}\right)\,.
\end{align}
The min-max principle now warrants that the eigenvalues $\omega_1\leq\dots\leq\omega_N$ of $\mat L$ (with double multiplicity) bound the eigenvalues of $\mat G_A\mat h_A$, which we denote by $\alpha_1\leq\dots\leq\alpha_{2N_A}$, as
\begin{align}
    \omega_1\leq \alpha_1\leq \omega_{N-N_A+1}\,,\quad \omega_{N_A} \leq {\alpha_{2N_A}} \leq\omega_{N}\,.
\end{align}
The $\alpha_i$ with double multiplicity represent the excitation energies of $\hat H_A$. We can see this from the ground state of $\hat H$, that is a product state between $A$ and $B$ such that $\ket \psi_A$ must be the ground state of $\hat H_A$ and its covariance matrix is represented by the matrix $\mat G_A$. Therefore, the spectrum of $L_A=\mat G_A\mat h_A$ yields the excitation energies of $\hat H_A$.

The equivalent statement for a fermionic Hamiltonian $\hat H=\frac\ii2 h_{ab}\hat\xi^a\hat\xi^b$ follows from the same line of argument, but the linear map $L$ is now defined by $L^a{}_c=-\Omega^{ab} h_{bc}$, where $\Omega^{ab}$ represents the covariance matrix of the ground state of the fermionic Hamiltonian $\hat{H}=\frac{\ii}{2}h_{ab}\hat{\xi}^a\hat{\xi}^b$.
\end{proof}

\bibliographystyle{unsrtnat}
\bibliography{minimalenergycost_refs}

\end{document}